\documentclass{easychair}

\usepackage{paralist}
\usepackage{amssymb}
\usepackage{amsmath}
\usepackage{mathtools}
\usepackage{todonotes}
\usepackage{algorithm}
\usepackage{algorithmicx}
\usepackage[noend]{algpseudocode}
\usepackage{multicol}
\usepackage{bussproofs}

\newif\ifLongVersion\LongVersiontrue

\ifLongVersion 
\usepackage[createShortEnv,conf={normal}]{proof-at-the-end}
\newenvironment{myTextEnd}{}{}
\else 
\usepackage[createShortEnv,conf={end,no link to proof,restate}]{proof-at-the-end}
\newenvironment{myTextEnd}{\begin{textAtEnd}}{\end{textAtEnd}}
\fi

\newcounter{example}[section]
 \newenvironment{example}[1][]{\refstepcounter{example}\par\medskip
    \noindent \textbf{Example~\theexample. #1} \rmfamily}{\hfill$\blacksquare$\medskip}

\newtheorem{definition}{Definition}
\newtheorem{lemma}{Lemma}
\newtheorem{theorem}{Theorem}
\newtheorem{corollary}{Corollary}

\newtheorem{proposition}{Proposition}

\renewcommand{\paragraph}[1]{\noindent{\bf #1}.}

\newcommand{\nat}{\mathbb{N}}

\newcommand{\arity}{\#}
\newcommand{\arityof}[1]{{\arity{#1}}}

\newcommand{\lenof}[1]{\ell({#1})}

\newcommand{\cardof}[1]{|\!|{#1}|\!|}
\newcommand{\sizeof}[1]{\mathrm{size}({#1})}

\newcommand{\universeOf}[1]{\mathsf{#1}}

\newcommand{\isdef}{\stackrel{\scalebox{0.5}{$\mathsf{def}$}}{=}}

\newcommand{\interv}[2]{[{#1},{#2}]}
\newcommand{\tuple}[1]{\langle {#1} \rangle}
\newcommand{\Tuple}[1]{\left\langle {#1} \right\rangle}
\newcommand{\set}[1]{\{ {#1} \}}
\newcommand{\mset}[1]{\{\!\!\{ {#1} \}\!\!\}}

\newcommand{\lcm}{\mathrm{lcm}}
\newcommand{\Set}[1]{\left\{ {#1} \right\}}

\newcommand{\bigO}{\mathcal{O}}
\newcommand{\twoexptime}{$2\mathsf{EXPTIME}$}

\newcommand{\exptime}{$\mathsf{EXPTIME}$}

\newcommand{\signature}{\mathcal{F}}

\newcommand{\alphabet}{\Sigma}

\newcommand{\arrow}[2]{\xrightarrow{{\scriptscriptstyle #1}}_{{\scriptstyle #2}}}

\newcommand{\auto}[2]{\mathcal{A}_{
    {#1}
    \ifthenelse{\equal{#2}{}}{}{,{#2}}
}}

\newcommand{\autsat}[2]{\mathcal{A}^{\scriptscriptstyle\mathsf{sat}}_{
    {#1}
    \ifthenelse{\equal{#2}{}}{}{,{#2}}
}}

\newcommand{\autcut}[2]{\mathcal{A}^{\scriptscriptstyle\mathsf{cut}}_{
    {#1}
    \ifthenelse{\equal{#2}{}}{}{,{#2}}
}}

\newcommand{\autcst}[2]{\mathcal{A}^{\scriptscriptstyle\mathsf{cst}}_{
    {#1}
    \ifthenelse{\equal{#2}{}}{}{,{#2}}
  }
}

\makeatletter
\newcommand*{\da@rightarrow}{\mathchar"0\hexnumber@\symAMSa 4B }
\newcommand*{\da@leftarrow}{\mathchar"0\hexnumber@\symAMSa 4C }
\newcommand*{\xdashrightarrow}[2][]{%
  \mathrel{%
    \mathpalette{\da@xarrow{#1}{#2}{}\da@rightarrow{\,}{}}{}%
  }%
}
\newcommand{\xdashleftarrow}[2][]{%
  \mathrel{%
    \mathpalette{\da@xarrow{#1}{#2}\da@leftarrow{}{}{\,}}{}%
  }%
}
\newcommand*{\da@xarrow}[7]{%
  \sbox0{$\ifx#7\scriptstyle\scriptscriptstyle\else\scriptstyle\fi#5#1#6\m@th$}%
  \sbox2{$\ifx#7\scriptstyle\scriptscriptstyle\else\scriptstyle\fi#5#2#6\m@th$}%
  \sbox4{$#7\dabar@\m@th$}%
  \dimen@=\wd0 %
  \ifdim\wd2 >\dimen@
    \dimen@=\wd2 %
  \fi
  \count@=2 %
  \def\da@bars{\dabar@\dabar@}%
  \@whiledim\count@\wd4<\dimen@\do{%
    \advance\count@\@ne
    \expandafter\def\expandafter\da@bars\expandafter{%
      \da@bars
      \dabar@
    }%
  }%
  \mathrel{#3}%
  \mathrel{%
    \mathop{\da@bars}\limits
    \ifx\\#1\\%
    \else
      _{\copy0}%
    \fi
    \ifx\\#2\\%
    \else
      ^{\copy2}%
    \fi
  }%
  \mathrel{#4}%
}
\makeatother

\DeclareMathOperator*{\Pop}{\mathop{\scalebox{1.5}{\raisebox{-0.2ex}{$\parallel$}}}\hspace*{1pt}}%

\newcommand{\cmso}{$\mathsf{CMSO}$}

\newcommand{\mso}{$\mathsf{MSO}$}

\renewcommand{\mod}{~\mathrm{mod}~}

\newcommand{\sintfusion}[2]{\widetilde{\mathtt{IF}}({#1}\ifthenelse{\equal{#2}{}}{}{,{#2}})}

\newcommand{\step}[1]{\Rightarrow_{\scriptscriptstyle{#1}}}
\newcommand{\assoc}{\stackrel{\scriptstyle{\mathsf{a}}}{\sim}}

\newcommand{\astep}[1]{\stackrel{\raisebox{-4pt}{$\scriptstyle{\mathsf{a}}~~~~*$}}{\Rightarrow}_{\scriptscriptstyle{#1}}}

\newcommand{\langof}[2]{\mathcal{L}_{#1}({#2})}

\newcounter{index}

\newcommand{\graphof}[3]{\mathrm{subgraph}_{#1}[{#2}
\ifthenelse{\equal{#3}{}}{]}{,{#3}]}}

\newcommand{\graph}{G}

\newcommand{\graphsof}[1]{{\mathcal{G}^{\scriptscriptstyle{#1}}}}
\newcommand{\graphs}{\graphsof{}}

\newcommand{\vertices}{V}
\newcommand{\vertof}[1]{\vertices_{\scriptscriptstyle{#1}}}

\newcommand{\edgeof}[1]{\edges_{\scriptscriptstyle{#1}}}

\newcommand{\sources}{\xi}
\newcommand{\sourceof}[1]{\sources_{\scriptscriptstyle{#1}}}

\newcommand{\algof}[1]{\mathcal{#1}}

\newcommand{\grammar}{\Gamma}

\newcommand{\pump}[1]{{\mathcal{S}}_{#1}^{\pi}}
\newcommand{\pumpx}[1]{{\mathcal{S}}_{#1}^{\pi+}}
\newcommand{\period}[2]{\pi_{#2}\ifthenelse{\equal{#1}{}}{}{({#1})}}
\newcommand{\nopump}[1]{{\mathcal{S}}_{#1}^{\beta}}
\newcommand{\base}[2]{\beta_{#2}\ifthenelse{\equal{#1}{}}{}{({#1})}}

\newcommand{\node}{\mathsf{node}}

\newcommand{\rules}{\mathcal{R}}
\newcommand{\arules}{\rules^{\scriptscriptstyle\text{\ref{it1:def:sp-regular-grammar}}}}
\newcommand{\brules}{\rules^{\scriptscriptstyle\text{\ref{it2:def:sp-regular-grammar}}}}
\newcommand{\crules}{\rules^{\scriptscriptstyle\text{\ref{it3:def:sp-regular-grammar}}}}
\newcommand{\drules}{\rules^{\scriptscriptstyle\text{\ref{it4:def:sp-regular-grammar}}}}
\newcommand{\erules}{\rules^{\scriptscriptstyle\text{\ref{it5:def:sp-regular-grammar}}}}
\newcommand{\frules}{\rules^{\scriptscriptstyle\text{\ref{it6:def:sp-regular-grammar}}}}
\newcommand{\grules}{\rules^\star}
\newcommand{\nonterm}{\mathcal{N}}
\newcommand{\axioms}{\mathcal{X}}

\newcommand{\edges}{{E}}

\newcommand{\unit}[2]{\overline{1}_{
        {#1}
        \ifthenelse{\equal{#2}{}}{}{,{#2}}
}}

\newcommand{\zero}[2]{\overline{0}_{
        {#1}
        \ifthenelse{\equal{#2}{}}{}{,{#2}}
}}

\newcommand{\pop}{\parallel}
\newcommand{\pexp}[2]{{#1}^{\pop{#2}}}

\newcommand{\sop}{\circ}

\newcommand*{\langu}{L}

\let\terms\undefined
\newcommand*{\terms}[4]{\mathfrak{T}_{{#1}\ifthenelse{\equal{#2}{}}{}{,{#2}}\ifthenelse{\equal{#3}{}}{}{,{#3}}}({#4})}
\newcommand{\termsof}[2]{\mathsf{Terms}({#1},{#2})}

\newcommand{\ThresholdVars}{\SVars^{\theta}}
\newcommand{\threshold}[1]{\theta\ifthenelse{\equal{#1}{}}{}{({#1})}}
\newcommand{\SVars}{\mathcal{S}}
\newcommand{\svar}{s}
\newcommand{\svarset}{V}
\newcommand{\varsof}[1]{\mathit{vars}(#1)}
\newcommand{\Sums}[1]{\mathsf{Sums}({#1})}
\newcommand{\XSums}[2]{{#2}\text{-}\mathsf{Sums}({#1})}
\newcommand{\Prods}[1]{\mathsf{Prods}({#1})}
\newcommand{\TermsOf}[1]{\mathsf{Terms}({#1})}
\newcommand{\norm}{\mathrm{nf}}
\newcommand{\maxdeg}{\max\deg}
\newcommand{\leprod}{\preceq}
\newcommand{\leinterv}[2]{\interv{#1}{#2}_\leprod}
\newcommand{\ltprod}{\prec}
\newcommand{\lemon}{\sqsubseteq}
\newcommand{\ltmon}{\sqsubset}
\newcommand{\Func}{K}
\newcommand{\weightedcardof}[1]{|\!|{#1}|\!|_1}
\newcommand{\lengthof}[1]{\mathit{len}({#1})}
\newcommand{\nfcardof}[1]{\cardof{#1}_{\norm}}

\newcommand{\boundAx}{($\mathtt{BA}$)}
\newcommand{\periodAx}{($\mathtt{PA}$)}
\newcommand{\threshAx}{($\mathtt{TA}$)}
\newcommand{\termalg}[1]{{\algof{T}_{#1}}}
\newcommand{\termuniv}[1]{{\universeOf{T}_{#1}}}
\newcommand{\parmap}{\mathtt{par}}
\newcommand{\seqmap}{\mathtt{seq}}
\newcommand{\elem}{x}

\newcommand{\maxarityof}[1]{\alpha({#1})}
\newcommand{\maxtermsizeof}[1]{\theta({#1})}

\newcommand{\pseudo}[3]{\mathcal{E}_{{#1},{#2}}\ifthenelse{\equal{#3}{}}{}{[{#3}]}}

\title{Regular Grammars as Effective Representations of Recognizable Sets of Series-Parallel Graphs}
\author{
Marius Bozga \inst{1}
\and
Radu Iosif \inst{1}
\and
Florian Zuleger \inst{2}\inst{3}
}

\institute{
  Université Grenoble Alpes, Grenoble, France\\
  \email{\{marius.bozga,radu.iosif\}@univ-grenoble-alpes.fr}
  \and
  Technische Universtität Wien, Vienna, Austria\\
  \email{florian.zuleger@tuwien.ac.at}
  \and
  Technische Universtität München, Germany 
}

\authorrunning{Bozga, Iosif, Zuleger}
\titlerunning{Regular Grammars for Series-Parallel Graphs}

\begin{document}
\maketitle

\begin{abstract}
  Series-parallel (SP) graphs are binary edge-labeled graphs with a
  designated source and target vertex, built using serial and parallel
  composition. A set of graphs is recognizable if membership depends
  only on its image under a homomorphism into a finite algebra. For
  SP-graphs, and more generally, for graphs of bounded tree-width,
  recognizability coincides with definability in Counting Monadic
  Second-Order (CMSO) logic. Despite this strong logical
  characterization, the conciseness and algorithmic effectiveness of
  syntactic representations of recognizable sets of SP (and
  bounded-tree-width) graphs remain poorly understood.

  Building on previously introduced regular grammars for SP-graphs, we
  show that recognizable sets admit concise and effective syntactic
  representations. The main contribution is an improved construction
  of finite recognizer algebras whose size is singly-exponential in
  the size of a regular grammar, improving upon the previously known
  double-exponential bound. As a consequence, the problems of
  intersection and language inclusion for sets represented by regular
  grammars are shown to be \exptime-complete, thus improving on a
  previously known \twoexptime{} upper bound.
\end{abstract}

\section{Introduction}

A subset of an algebra is recognizable if it is the inverse image of a
homomorphism into a finite algebra over the same signature of function
symbols. For words and, more generally, for ground terms (i.e.,
node-labeled ranked trees), this definition of recognizability
coincides with the classical one, i.e., existence of finite-state word
and tree automata, respectively. A central aspect of automata-based
recognizability is the predefined traversal order of the input: words
are read either left-to-right or right-to-left and trees are traversed
bottom-up or top-down. Graphs, however, are strictly more complex
objects that, in general, do not come with a natural traversal
order. As a consequence, attempts to generalize automata-based
recognizability to graphs resulted in a variety of non-equivalent and
largely incomparable formalisms, without a broadly accepted
consensus~\cite{DBLP:conf/ifip/Thomas94,Remila1999}. This motivates
the use of the algebraic definition of recognizability instead of the
automata-based one~\cite{CourcelleI,Engelfriet1997} and the quest for
succint syntactic representations of recognizable sets, such as
regular grammars~\cite{Lics25} or regular
expressions~\cite{DBLP:conf/icalp/Doumane22}.

The series-parallel (SP) class consists of graphs having a designated
source and target vertex, that can be composed either in series (i.e.,
the target of the first is joined with source of the second) or in
parallel (i.e., both sources and targets are joined,
respectively). SP-graphs arise naturally in circuit design and
distributed data stream processing (see, e.g.~\cite{10.1145/3571230}
for a survey) and play a central role in graph theory: a classical
result states that a graph has tree-width $2$ at most if and only if
each of its 2-connected\footnote{A graph is $k$-connected if it cannot
  be disconnected by removing $k-1$ vertices, $k\geq2$.} components
(blocks) is a disoriented SP-graph (i.e., a graph obtained from an
SP-graph by reversing some edges). Hence, each graph of tree-width at
most 2 can be represented as a tree of such
blocks~\cite{TutteBook}. This structural characterization has been
used in~\cite{Lics25} to develop regular grammars for tree-width $2$
graphs, by combining regular grammars for trees and SP-graphs. While
the present paper focuses, for simplicity, on SP-graphs, our results
extend naturally to the more general class of graphs of tree-width at
most $2$. We consider this generalization for a future extended
version.

The recent development of regular grammars~\cite{Lics25} and regular
expressions~\cite{DBLP:conf/icalp/Doumane22} that describe the
recognizable sets of the tree-width $\leq2$ class of graphs has left
the following question unanswered: ``\emph{what is the complexity of
  the translation between a syntactic representation (regular
  grammar/expression) and an algebraic representation of a
  recognizable set?}''. For words, it is well-known that the minimal
algebra that recognizes the language of a nondeterministic automaton
with $n$ states, that can be obtained from a regular
grammar/expression in polynomial time, has cardinality
$2^{\bigO(n^2)}$, i.e., each element of the algebra is a set of pairs
of states. Similarly, for ground terms (ranked trees), the translation
of regular grammars/expressions into deterministic tree automata
incurs an unavoidable singly-exponential
blowup~\cite{comon:hal-03367725}.

\paragraph{Contributions}
First, we answer the above question for regular grammars~\cite{Lics25}
that represent recognizable sets of SP-graphs by showing that, for a
regular grammar of size $n$ (comprising the number of nonterminals,
rules and the size of each rule), there exists an algebra of
cardinality $2^{\bigO(n^9)}$ that recognizes its language. This result
is an improvement of the double exponential upper bound found
in~\cite{Lics25}. This upper bound is matched by an exponential lower
bound: there exist a regular grammar having $\bigO(n)$ nonterminals
and $\bigO(n \log_2 n)$ rules, whose recognizer algebra has no less
than $2^{n^2}$ elements.

Second, we apply these complexity results to find the complexity of
several natural decision problems concerning sets of SP-graphs
represented as regular grammars. In particular, the problems of joint
intersection of regular grammars and inclusion of an arbitrary grammar
into a regular one are shown to be \exptime-complete. These results
show that regular SP-grammars can be effectively manipulated within
the same complexity class as standard tree
automata~\cite{comon:hal-03367725}, which justifies future efforts for
implementing our techniques within one of the existing automata
libraries~\cite{Mona,Vata}.

\paragraph{Related work}
Automata recognizing a subclass of SP-graphs, called
\emph{synchronized SP-graphs} (SSPG), have been introduced
in~\cite{10.1145/3571230}. SSPGs have matching split and join vertex
labels, that give automata hints for parsing the input graph. SSPG
automata (SSPGA) are closed under boolean operations and their
inclusion problem is \exptime-complete. The determinization of a
nondeterministic SSPGA incurs a $2^{\Theta(n^2)}$ blowup, where $n$ is
the number of states of the nondeterministic SSPGA. In contrast, we
consider general SP-graphs and the algebraic notion of
recognizability, which naturally extends to (disoriented) SP-graphs
and graphs of tree-width at most $2$.

Nondeterministic automata for unranked trees with unboundedly many
ordered siblings have been introduced by
Thatcher~\cite{THATCHER1967317}. These automata are closed under
determinization. Several definitions of automata on unranked and
unordered trees, equivalent to various fragments of Monadic Second
Order (\mso) logic are given
in~\cite{journals/iandc/BoiretHNT17}. Algebraic recognizers for
unranked (ordered and unordered) trees are also defined
in~\cite{DBLP:conf/birthday/BojanczykW08}, in terms of forest algebras
consisting of a horizontal monoid (disjoint union) and a vertical
monoid (composition of contexts). To the best of our knowledge, the
complexity of converting a nondeterministic automaton for unranked and
unordered trees into a (deterministic) forest algebra has not been
studied. We conjecture that our encoding of parallel composition as
terms over a commutative dioid (\autoref{sec:termalg}) could as well
apply to bound the cardinality of minimal recognizer algebras of
deterministic automata for unranked and unordered trees.

Regular grammars for unranked and unordered trees originate in the
seminal work of Courcelle~\cite{CourcelleV}. A more recent definition
of regular grammars~\cite{Lics25} extends the partitioning of the
nonterminals and the syntax of rules, according to this partition,
from unranked and unordered trees to SP-graphs and graphs of
tree-width $2$ at most. Preliminary double-exponential upper bounds on
the size of the recognizer for the language of a regular grammar are
given, yielding a \twoexptime{} upper bound for the inclusion problem
between an arbitrary and a regular grammar. In contrast, in the
present paper, we give a (fairly) tight singly-exponential upper bound
for the cardinality of the recognizer and show \exptime-completeness
for inclusion and related problems.

Regular expressions that capture the recognizable sets of graphs of
tree-width at most $2$ have been introduced
in~\cite{DBLP:conf/icalp/Doumane22}. This definition relies on a
semantic \emph{guardedness} condition, which is shown to be decidable,
with unknown complexity. The recognizability of the language of a
regular expression is established by translation to Counting Monadic
Second Order logic (\cmso) in~\cite{DBLP:conf/icalp/Doumane22},
leaving the question about the cardinality of the minimal algebra
recognizing the language of a regular expression unanswered.

A natural question is if regular grammars could be extended beyond
graphs of tree-width $2$.  On the one hand, recent results on
canonical decompositions of $3$- and $4$-connected
graphs~\cite{DBLP:conf/soda/KurkofkaP26} suggest that such an
extension might be possible.  On the other hand, regular grammars
capturing the recognizable sets of graphs of bounded \emph{embeddable}
tree-width, an over-approximation of the tree-width that considers
only spanning-tree decompositions, have been proposed
in~\cite{Lpar24}. In this work, the bound on the embeddable tree-width
of graphs is not fixed. However, the recognizability of the language
of a regular grammar is established via translation to \cmso, leaving
the cardinality of the recognizer an open problem.

\paragraph{Definitions}
The set of natural numbers is denoted by $\nat$, and we use
$\nat^{\geq k} \isdef \set{n \in \nat \mid n \ge k}$.  Given numbers
$i,j\in\nat$, we write $\interv{i}{j} \isdef \set{i, i+1, \ldots, j}$,
assumed to be empty if $i>j$. The cardinality of a finite set $A$ is
denoted by $\cardof{A}$. The disjoint union $A \uplus B$ is defined as
the union of $A$ and $B$, if $A \cap B=\emptyset$, and undefined,
otherwise. The size of a syntactic object $x$ (e.g., a term,
expression or grammar) is the length of its representation as a
string, denoted $\sizeof{x}$. A function $f : \nat\rightarrow\nat$ is
exponential if there exist $c,d,k\in\nat^{\geq1}$ such that $f(n) \leq
c \cdot 2^{n^d}$, for all $n \in \nat^{\geq k}$.

\section{Series-Parallel Graphs}
\label{sec:sp}

We consider simple binary graphs over a finite alphabet $\alphabet$ of
edge labels. Formally, a graph is a tuple $\graph =
(\vertof{},\edgeof{},\sourceof{})$, where $\vertof{}$ is a finite set
of \emph{vertices}, $\edgeof{}$ is a multiset of edges from $\vertof{}
\times \alphabet \times \vertof{}$, i.e., multiple edges with the same
label are allowed, and $\sourceof{} : \set{1,2} \rightarrow \vertof{}$
is an injective function that designates the \emph{source}
$\sourceof{}(1)$ and \emph{target} $\sourceof{}(2)$ of $\graph$, i.e.,
the source and target of a graph are always distinct. The components
of a graph $\graph$ are denoted as $\vertof{\graph}$,
$\edgeof{\graph}$ and $\sourceof{\graph}$, respectively. We identify
graphs that are isomorphic, i.e., differ only by a renaming of
vertices. The $a$-\emph{bridge} (or simply bridge, when the label $a$
is not important) is the graph consisting of two vertices and a single
$a$-labeled edge, from source to target.

We consider the following operations on graphs (see Figure
\ref{fig:sp}): \begin{itemize}
\item \emph{serial composition} $\graph_1 \sop \graph_2$ joins the
  target of $\graph_1$ with the source of $\graph_2$. The source and
  target of $\graph_1 \sop \graph_2$ are the source of $\graph_1$ and
  the target of $\graph_2$, respectively.
\item \emph{parallel composition} $\graph_1 \pop \graph_2$ joins the
  source and target of $\graph_1$ and $\graph_2$ into the source and
  target of the result, respectively. $\pexp{\graph}{k}$ denotes the
  $k$-times $\pop$-composition of $\graph$.
\end{itemize}

\begin{figure}[t!]
  \centerline{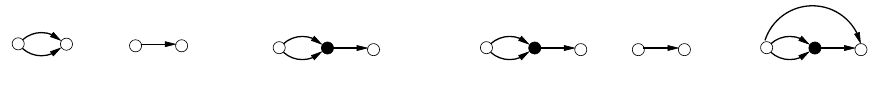}
  \caption{Series (a) and parallel (b) composition.
  }
  \label{fig:sp}
\end{figure}
\noindent
The set of \emph{series-parallel} (SP-) graphs is the closure of
bridges under the above compositions. An SP-graph $\graph$ is
\emph{atomic} if there are no graphs $\graph_1$ and $\graph_2$ such
that $\graph=\graph_1\sop\graph_2$. A \emph{P-graph} is an atomic
graph that is not a bridge and an \emph{S-graph} is a bridge or a
non-atomic graph. A P-graph is the parallel composition of at least
two S-graphs, and a non-bridge S-graph is the serial composition of at
least two SP-graphs, either bridges or P-graphs~\cite[Lemma
  6.3]{CourcelleV}.

We define recognizable sets of SP-graphs by introducing first their
underlying algebra. As usual, an algebra $\algof{A} =
(\universeOf{A},\set{f^\algof{A}}_{f\in\signature})$ consists of a
domain $\universeOf{A}$ and interpretations $f^\algof{A}$ of a
signature $\signature$ of functions symbols $f$, of arity
$\arityof{f}\in\nat$. An algebra is finite if its domain is finite. A
term $t[x_1,\ldots,x_n]$ is built as usual from function symbols of
matching arities and variables $x_i$. $\termsof{\signature}{X}$
denotes the set of terms with variables from $X$. A \emph{constant} is
a term consisting of a single function symbol of zero arity. A
\emph{ground} term $t$ has no variables and evaluates to an element
$t^\algof{A} \in \universeOf{A}$, obtained by interpreting the
function symbols from $t$ according to $\algof{A}$. A
\emph{homomorphism} between algebras $\algof{A}$ and $\algof{B}$ is a
mapping $h : \universeOf{A} \rightarrow \universeOf{B}$ such that
$h(f^\algof{A}(a_1,\ldots,a_n)) = f^\algof{B}(h(a_1),\ldots,h(a_n))$,
for all $f \in \signature$, $\arityof{f}=n$ and $a_1,\ldots,a_n \in
\universeOf{A}$.

\begin{definition}\label{def:rec}
  Let $\algof{A}$ be an algebra. A set $\langu \subseteq
  \universeOf{A}$ is \emph{recognizable} in $\algof{A}$ if there
  exists a finite algebra $\algof{B}$ and a homomorphism $h$ between
  $\algof{A}$ and $\algof{B}$ such that $\langu=h^{-1}(F)$, where $F
  \subseteq \universeOf{B}$ is called the \emph{accepting set} of
  $\langu$.
\end{definition}
We say that $\langu$ is \emph{recognized} by $\algof{B}$, or that
$\algof{B}$ is a \emph{recognizer} for $\langu$, when $\algof{A}$ is
understood. It is known that, in general, recognizable sets enjoy the
following boolean closure properties:

\begin{proposition}[Propositions 2.9 and 2.10 in~\cite{JEPBook}]\label{prop:rec-bool-closure}
  Let $\algof{A} =
  (\universeOf{A},\set{f^\algof{A}}_{f\in\signature})$ be an algebra
  and $\mathcal{L}, \mathcal{K} \subseteq \universeOf{A}$ be sets
  recognized by algebras $\algof{B}$ and $\algof{C}$, with accepting
  sets $F\subseteq\universeOf{B}$ and $G\subseteq\universeOf{C}$,
  respectively. Then, $\universeOf{A} \setminus \mathcal{L}$ is
  recognized by $\algof{B}$ with accepting set $\universeOf{B}
  \setminus F$ and $\mathcal{L} \cap \mathcal{K}$, $\mathcal{L} \cup
  \mathcal{K}$ are recognized by $\algof{B} \times \algof{C}$ with
  accepting sets $F \times G$ and $(F \times \universeOf{C}) \cup
  (\universeOf{B} \times G)$, respectively.
\end{proposition}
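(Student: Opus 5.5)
The plan is to verify each of the three claims directly from Definition~\ref{def:rec}. We fix the witnessing data: a homomorphism $h : \universeOf{A} \rightarrow \universeOf{B}$ with $\mathcal{L} = h^{-1}(F)$, and a homomorphism $g : \universeOf{A} \rightarrow \universeOf{C}$ with $\mathcal{K} = g^{-1}(G)$. Both $\algof{B}$ and $\algof{C}$ are finite algebras over the signature $\signature$.

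\emph{Complement.} We keep the same algebra $\algof{B}$ and the same homomorphism $h$. Since $h$ is a total function, preimages commute with set difference:
\[
h^{-1}(\universeOf{B} \setminus F) = \universeOf{A} \setminus h^{-1}(F) = \universeOf{A} \setminus \mathcal{L}.
\]
So $\universeOf{A} \setminus \mathcal{L}$ is recognized by $\algof{B}$ with accepting set $\universeOf{B} \setminus F$.

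\emph{Product algebra.} We define $\algof{B} \times \algof{C}$ with domain $\universeOf{B} \times \universeOf{C}$, which is finite. Each symbol $f \in \signature$ of arity $n$ is interpreted componentwise:
\[
f^{\algof{B}\times\algof{C}}((b_1,c_1),\ldots,(b_n,c_n)) \isdef (f^\algof{B}(b_1,\ldots,b_n), f^\algof{C}(c_1,\ldots,c_n)).
\]
We then set $k(a) \isdef (h(a),g(a))$. This $k$ is a homomorphism from $\algof{A}$ to $\algof{B} \times \algof{C}$. Indeed, $k(f^\algof{A}(a_1,\ldots,a_n))$ equals $(h(f^\algof{A}(\ldots)), g(f^\algof{A}(\ldots)))$. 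Because $h$ and $g$ are homomorphisms, this is $(f^\algof{B}(h(a_1),\ldots), f^\algof{C}(g(a_1),\ldots))$, which is by definition $f^{\algof{B}\times\algof{C}}(k(a_1),\ldots,k(a_n))$. Constants are the case $n=0$ and are handled the same way.

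\emph{Intersection and union.} These follow from elementary preimage computations:
\[
k^{-1}(F \times G) = h^{-1}(F) \cap g^{-1}(G) = \mathcal{L} \cap \mathcal{K},
\]
\[
k^{-1}((F \times \universeOf{C}) \cup (\universeOf{B} \times G)) = h^{-1}(F) \cup g^{-1}(G) = \mathcal{L} \cup \mathcal{K}.
\]
The only step needing any care is the product construction, namely checking that the componentwise interpretation is well defined and that $k$ is a homomorphism. Everything else is set-theoretic bookkeeping.
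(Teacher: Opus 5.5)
Your proof is correct. Complement reuses the same homomorphism $h$, since preimages commute with set difference. Intersection and union use the componentwise product algebra with the paired homomorphism $a \mapsto (h(a),g(a))$, which is the standard argument. The paper gives no proof of its own here; it simply cites Propositions 2.9 and 2.10 of Pin's book, and your argument is the usual justification for that result.
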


In particular, the signature of the algebra $\algof{SP}$ of SP-graphs
is the set $\signature_{\algof{SP}} \isdef \set{\sop,\pop} \uplus
\alphabet$, where $\sop$ and $\pop$ are binary function symbols,
interpreted as the serial and parallel composition, and each $a \in
\alphabet$ is a constant, interpreted as the $a$-bridge, the domain of
the algebra $\algof{SP}$ being the set of SP-graphs (up to
isomorphism). For instance, the terms $(a \pop b) \sop c$ and $((a
\pop b) \sop c) \pop d$ are interpreted in $\algof{SP}$ as the
rightmost graphs from Figure \ref{fig:sp} (a) and (b), respectively. A
set of SP-graphs is recognizable if it is recognizable in the
$\algof{SP}$ algebra (Definition \ref{def:rec}).

\section{Recognizability of Regular Languages}
\label{sec:rec}

This section recalls regular grammars for SP-graphs~\cite{Lics25}.  In
particular, we establish that languages defined by regular SP-grammars
are recognizable, providing the explicit construction of a recognizer
algebra.  We note that the existence of such an algebra in itself is
not a new result (see Theorem~13 in the extended
version~\cite{Lics25Arxiv} of~\cite{Lics25}).  However, we present a
new construction, establishing an exponential bound on its cardinality
(\autoref{sec:complexity}), improving the earlier doubly-exponential
construction.  We further note that the converse direction (i.e., that
each recognizable set of SP-graphs is the language of some regular
grammar) has been proved in \cite[Theorem 3]{Lics25}.

\subsection{Regular Grammars}

We introduce regular grammars, as representations of the recognizable
sets of SP-graphs. A \emph{grammar}
$\grammar=(\nonterm,\rules,\axioms)$ consists of a finite set
$\nonterm=\set{x_1,\ldots,x_n}$ of \emph{nonterminals}, a set $\rules
\subseteq \nonterm \times \termsof{\signature_{\algof{SP}}}{\nonterm}$
of rules, written $x \rightarrow t[x_1,\ldots,x_k]$, where
$x,x_1,\ldots,x_k \in \nonterm$ and a set of \emph{axioms} $\axioms
\subseteq \nonterm$. A derivation step is a pair of terms, written $u
\step{\grammar} v$, such that $v$ is obtained by replacing a single
occurrence of a nonterminal $x$ in $u$ by a term $t$, for some rule $x
\rightarrow t$ from $\grammar$. A derivation $x \step{\grammar}^* t$
is a possibly empty sequence of derivation steps starting with a
nonterminal. A derivation is \emph{complete} if $t$ is a ground
term. The \emph{language} of $\grammar$ is the set of evaluations of
ground terms resulting from the complete derivations of $\grammar$
that start with an axiom, i.e., $\langof{}{\grammar} \isdef
\set{t^\algof{A} \mid x \step{\grammar}^* t,~ t \in
  \termsof{\signature_\algof{SP}}{\emptyset},~x \in \axioms}$.  A
\emph{context-free} set is the language of a grammar. The following
definition was introduced in~\cite{Lics25}:

\begin{definition}\label{def:sp-regular-grammar}
  A grammar $\grammar = (\mathcal{P} \uplus
  \mathcal{S},\rules,\axioms)$ is said to be \emph{regular} if its
  nonterminals are partitioned into $\mathcal{S}$ and $\mathcal{P}$
  and its rules are one of the following forms, for some nonterminals
  $p,p_1,p_2 \in \mathcal{P}$, $s,s_1,\ldots,s_k \in \mathcal{S}$ and
  labels $a \in \alphabet$: \begin{enumerate}[(A)]
  \item\label{it1:def:sp-regular-grammar} $p \rightarrow p \pop
    \pexp{s}{\ell}$, for $\ell\in\nat^{\geq1}$,
  \item\label{it2:def:sp-regular-grammar} $p \rightarrow
    \pexp{s_1}{\ell_1} \pop \ldots \pop \pexp{s_k}{\ell_k}$, for
    $k,\ell_1,\ldots,\ell_k \in \nat^{\geq1}$, $\sum_{i=1}^k
    \ell_i \geq 2$,
  \item\label{it3:def:sp-regular-grammar} $s \rightarrow p \sop s_1$,
  \item\label{it4:def:sp-regular-grammar} $s \rightarrow p_1 \sop p_2$,
  \item\label{it5:def:sp-regular-grammar} $p \rightarrow a$,
  \item\label{it6:def:sp-regular-grammar} $s \rightarrow a$.
  \end{enumerate}
  We denote by $\arules$, $\brules$, $\crules$, $\drules$, $\erules$
  and $\frules$ the subsets of $\rules$ consisting rules of the form
  (\ref{it1:def:sp-regular-grammar}),
  (\ref{it2:def:sp-regular-grammar}),
  (\ref{it3:def:sp-regular-grammar}),
  (\ref{it4:def:sp-regular-grammar}),
  (\ref{it5:def:sp-regular-grammar}) and
  (\ref{it6:def:sp-regular-grammar}), respectively. A set $\langu$ of
  SP-graphs is \emph{regular} if there exists a regular grammar
  $\grammar$ such that $\langu=\langof{}{\grammar}$.
\end{definition}

\begin{example}\label{ex:universal-sp-grammar}
  Consider the regular grammar:
  \[\grammar_{\mathit{univ}} \isdef
  (\underbrace{\set{p}}_{\mathcal{P}}\uplus\underbrace{\set{s}}_{\mathcal{S}},\rules,\set{p,s})\]
  having the following rules:
  \[\begin{array}{rlcrl}
    p \rightarrow & p \pop s & \hspace*{5mm} & s \rightarrow & p \sop s \\
    p \rightarrow & s \pop s && s \rightarrow & p \sop p \\
    p \rightarrow & a && s \rightarrow & a \text{, for each } a \in \alphabet
  \end{array}\]
  The proof of $\langof{}{\grammar_{\mathit{univ}}}=\universeOf{SP}$
  can be found in~\cite[Lemma 24]{Lics25Arxiv}.  In contrast, a
  grammar having rules of the form $p \rightarrow p' \pop s$ or $s
  \rightarrow s_1 \sop s_2$, for some $p,p' \in \mathcal{P}$ and
  $s,s_1,s_2 \in \mathcal{S}$ is not regular.
\end{example}

It is already known that, from a regular grammar $\grammar$, one can
build a finite algebra $\algof{A}$, of doubly-exponential cardinality
in the size of $\grammar$, that recognizes
$\langof{}{\grammar}$~\cite[Theorem 13]{Lics25Arxiv}. The remainder of
this section gives a different construction, whose purpose is to show
a singly exponential upper bound on the cardinality of the minimal
recognizer of $\langof{}{\grammar}$. The proof that
$\cardof{\universeOf{A}}$ is exponential in $\sizeof{\grammar}$ will
be given in \autoref{sec:complexity}.

First, we introduce a few technical notions. Given a regular grammar
$\grammar=(\mathcal{P}\uplus\mathcal{S},\rules,\axioms)$ and a
nonterminal $p \in \mathcal{P}$, a nonterminal $s \in \mathcal{S}$ is
said to be: \begin{itemize}
\item \emph{periodic} for $p$ if $\grammar$ contains a rule $p
  \rightarrow p \pop \pexp{s}{k} \in \arules$. If there is only one
  such rule in $\grammar$, we say that $k$ is the \emph{period} of $s$
  for $p$, denoted $\period{s}{p}$.
\item \emph{bounded} for $p$ if $s=s_i$ for some rule $p \rightarrow
  \pexp{s_1}{\ell_1} \pop \ldots \pop \pexp{s_k}{\ell_k} \in \brules$
  from $\grammar$ and some $1 \leq i \leq k$, where $s_1, \ldots, s_k$
  are pairwise distinct variables. The \emph{bound} of $s$ for $p$,
  denoted $\base{s}{p}$, is the maximum integer $\ell$ among the
  occurrences of $\pexp{s}{\ell}$ in a rule from $\brules$ with
  left-hand side $p$, plus one.
\end{itemize}
We denote by $\pump{p}$ (resp. $\nopump{p}$) the subset of
$\mathcal{S}$ consisting of periodic (resp. bounded) nonterminals for
$p$, whenever the grammar in question is understood.  Note that
$\pump{p}$ and $\nopump{p}$ are not disjoint and do not cover
$\mathcal{S}$, in general.

\begin{example}\label{ex:base-period}
  In the grammar $\grammar_{\mathit{univ}}$ (Example
  \ref{ex:universal-sp-grammar}), $\pump{p}=\nopump{p}=\set{s}$,
  $\period{s}{p}=1$ and $\base{s}{p}=2$.
\end{example}

We can however assume w.l.o.g. that $\pump{p} \cap \nopump{p} =
\emptyset$, by putting each regular grammar in the following normal
form:

\begin{definition}\label{def:sp-normal-form}
  A regular grammar $\grammar =
  (\mathcal{P}\uplus\mathcal{S},\rules,\axioms)$ is in \emph{normal
    form (normalized)} if and only if the following
  hold: \begin{enumerate}
  \item\label{it1:def:sp-normal-form} for each pair
    $(p,s)\in\mathcal{P}\times\mathcal{S}$ there is at most one rule
    $p \rightarrow p \pop \pexp{s}{\ell} \in \arules$,
  \item\label{it2:def:sp-normal-form} $s_1, \ldots, s_k \in
    \nopump{p}\setminus\pump{p}$, for each rule $p \rightarrow
    \pexp{s_1}{\ell_1} \pop \ldots \pop \pexp{s_k}{\ell_k} \in
    \brules$, $k\geq1$.
  \end{enumerate}
\end{definition}

We show that each grammar can be normalized at the cost of a
polynomial increase in the size of the grammar:

\begin{propositionE}\label{prop:sp-normal-form}
  For each regular grammar $\grammar =
  (\mathcal{P}\uplus\mathcal{S},\rules,\axioms)$, one can build, in
  polynomial time, a normalized regular grammar $\grammar' =
  (\mathcal{P}\uplus\mathcal{S}',\rules',\axioms)$, such that
  $\langof{}{\grammar} = \langof{}{\grammar'}$. 
\end{propositionE}
\begin{proofE}
  We transform the grammar $\grammar$ as follows to take care of the
  two conditions of Definition \ref{def:sp-normal-form}. Let $p
  \rightarrow p \pop \pexp{s}{\ell_1}$, $\ldots$, $p \rightarrow p
  \pop \pexp{s}{\ell_n} \in \arules$, for some $n\geq1$, be the rules
  that involve the same pair $(p,s) \in
  \mathcal{P}\times\mathcal{S}$. We replace these rules by $p
  \rightarrow p \pop \pexp{s_1}{\ell_1}$, $\ldots$, $p \rightarrow p
  \pop \pexp{s_n}{\ell_n}$ and add new rules $\set{s_i \rightarrow t
    \mid s \rightarrow t \in \rules}$, for each $1 \leq i \leq n$,
  where $s_1,\ldots,s_n \not\in \mathcal{S}$ are fresh
  nonterminals. This transformation takes polynomial time, because
  there are at most $\cardof{\arules}$ rules of type
  (\ref{it1:def:sp-regular-grammar}) and at most $\cardof{\rules}$ new
  rules to be introduced, for each pair. Moreover, the transformation
  takes care of condition (\ref{it1:def:sp-normal-form}), because in
  the output grammar there is at most one rule of type
  (\ref{it1:def:sp-regular-grammar}), for each pair $(p,s) \in
  \mathcal{P} \times \mathcal{S}$. For the condition
  (\ref{it2:def:sp-normal-form}), let $p \rightarrow
  \pexp{s_1}{\ell_1} \pop \ldots \pop \pexp{s_k}{\ell_k} \in \brules$
  be a rule and note that $s_1, \ldots, s_k \not\in \pump{p}$, because
  the above rule must be present in the original grammar, i.e., the
  transformation does not introduce rules of the form
  (\ref{it2:def:sp-regular-grammar}).
\end{proofE}

\begin{example}\label{ex:normalized-grammar}
  Consider the grammar $\grammar_{\mathit{univ}}$ (Example
  \ref{ex:universal-sp-grammar}). The normalized grammar:
  \[\grammar'_{\mathit{univ}} = (\underbrace{\set{p}}_{\mathcal{P}} \uplus \underbrace{\set{s,s'}}_{\mathcal{S}},\rules',\set{p,s})\]
  is obtained from $\grammar_{\mathit{univ}}$ by replacing the rule $p
  \rightarrow s \pop s$ by the rules $p \rightarrow s' \pop s'$, $s'
  \rightarrow p \sop p$, $s' \rightarrow p \sop s$ and $s' \rightarrow
  a$, for all $a \in \alphabet$. This construction, used in the proof
  of Proposition \ref{prop:sp-normal-form}, implies that
  $\langof{}{\grammar'_{\mathit{univ}}} =
  \langof{}{\grammar_{\mathit{univ}}}$.
\end{example}

\subsection{A Finite Algebra of Terms}
\label{sec:termalg}

In this section, we assume w.l.o.g. that
$\grammar=(\mathcal{P}\uplus\mathcal{S},\rules,\axioms)$ is a fixed
regular grammar in normal form. The definition of a recognizer for the
language of a regular grammar given in~\cite{Lics25Arxiv,Lics25}
represents graphs obtained by parallel composition as sets of
multisets of nonterminals from $\mathcal{S}$, where the inner level
(multisets) counts the number of parallel components up to a certain
period and the outer level (sets) acounts for the nondeterminism in
the grammar. In the following, we use an equivalent encoding of sets
of multisets using terms built from addition, multiplication and
nonterminals from $\mathcal{S}$. For instance, the set of multisets
$\set{\mset{s_1,s_2,s_2},\mset{s_1,s_1,s_3}}$ is unambiguously encoded
by the term $s_1 \cdot s_2^2 + s_1^2 \cdot s_3$.

Formally, the definition of a recognizer for $\langof{}{\grammar}$
uses finite term algebras $\termalg{p}$, one for each nonterminal $p
\in \mathcal{P}$. These algebras have the same signature
$\signature_\termalg{} = \set{+,\cdot,0,1}$, where $+$, $\cdot$ are
binary operations and $0$, $1$ are constants. The domain of the
algebra $\termalg{p} = (\termuniv{p}, +, \cdot, 0, 1)$ is the set of
terms built using the function symbols $+$ and $\cdot$ using variables
$\nopump{p} \uplus \pump{p} \uplus \ThresholdVars$, where $\nopump{p},
\pump{p} \subseteq \mathcal{S}$ are the sets of bounded and periodic
nonterminals for the given $p \in \mathcal{P}$, respectively, and
$\ThresholdVars$ is a set of \emph{threshold variables}, distinct from
$\mathcal{S}$. Note that $\nopump{p}$ and $\pump{p}$ are disjoint, by
the assumption that $\grammar$ is in normal form (Definition
\ref{def:sp-normal-form}). The bound $\base{}{p} : \nopump{p}
\rightarrow \nat^{\geq2}$ and period $\period{}{p} : \pump{p}
\rightarrow \nat^{\geq1}$ mappings have been defined previously. The
threshold variables are artifacts introduced for the purposes of the
upcoming complexity proof (\autoref{sec:complexity}) and do not belong
to $\grammar$. We consider a threshold mapping $\threshold{} :
\ThresholdVars \rightarrow \nat^{\geq2}$.

Let $p \in \mathcal{P}$ be a fixed nonterminal in the following.  The
interpretation of $\signature_\termalg{}$ in $\termalg{p}$ forms an
idempotent commutative semiring\footnote{Often also called a
  commutative dioid.}, where $(\termuniv{p},+,0)$ and
$(\termuniv{p},\cdot,1)$ are commutative monoids, $+$ is idempotent,
$0$ is cancelling for $\cdot$, and $\cdot$ distributes over $+$. The
finiteness of $\termalg{p}$ is ensured by the following additional
axioms:
\begin{align*}
  \text{\boundAx}~ \svar^{\base{\svar}{p}} = 0,~\svar \in \nopump{p} \hspace*{5mm}
  \text{\periodAx}~ \svar^{\period{\svar}{p}} = 1,~\svar \in \pump{p} \hspace*{5mm}
  \text{\threshAx}~ \svar^{\threshold{\svar}} = \svar^{\threshold{\svar}-1},~\svar \in \ThresholdVars
\end{align*}
where by $s^n$ we denote the $n$-times product of $s$ with itself, as
usual.

A \emph{monomial} is a term of the form $s_1 \cdot \ldots \cdot s_n$,
where $s_1,\ldots,s_n \in \nopump{p} \uplus \pump{p} \uplus
\ThresholdVars$. When no confusion arises, we write $s_1s_2$ instead
of $s_1 \cdot s_2$. A monomial is \emph{reduced} if none of the
\boundAx, \periodAx{} and \threshAx{} axioms can be applied to reduce
the number of occurrences of variables in it. For a monomial $m$, we
denote by $\varsof{m}$ the \emph{support} of $m$, i.e., the set of
variables occurring in $m$. For a variable $\svar \in \varsof{m}$, we
denote by $\deg(m,\svar)$ the degree of $\svar$ in $m$ and by
$\deg(m)$ the degree of $m$, defined as $\deg(m) \isdef \sum_{\svar
  \in \varsof{m}} \deg(m,\svar)$. A term is \emph{linear} if it is a
sum $s_1 + \ldots + s_n$ of variables. A \emph{linear product} is a
term of the form $\ell_1 \cdot \ldots \cdot \ell_n$, where $\ell_i$
are linear terms.

Due to the axioms of distributivity, each term $t \in \termuniv{p}$ is
equivalent to a sum of distinct reduced monomials, called \emph{normal
  form}, denoted $\norm_p(t)$. 
When no confusion arises, by writing $m \in \norm_p(t)$ we mean that
the monomial $m$ occurs as a summand in the normal form of $t$. Note
that the normal form of a term is unique modulo associativity and
commutativity of the binary operations and the set $\set{\norm_p(t)
  \mid t \in \termuniv{p}}$ is finite. In the following, we shall need an
upper bound on the number of monomials from a term in normal form:

\begin{lemmaE}\label{lemma:match}
  Let $t \in \termsof{\signature_{\termalg{}}}{\mathcal{S}}$ be a term
  in normal form. Then each monomial $m \in t$ is of size at most
  $\maxtermsizeof{\grammar}\cdot\cardof{\mathcal{S}}$ and there are at
  most $\maxtermsizeof{\grammar}^{\cardof{\mathcal{S}}}$ such
  monomials.
\end{lemmaE}
\begin{proofE}
  Each monomial $m \in t$ is a product of at most $\sum_{s \in
    \nopump{p}} \base{s}{p} + \sum_{s \in \pump{s}} \period{s}{p}$
  variables and $t$ has at most $\prod_{s\in\nopump{p}} \base{s}{p}
  \cdot \prod_{s\in\pump{p}} \period{s}{p}$ monomials.
\end{proofE}

\begin{myTextEnd}
\begin{lemma}\label{lemma:nf-nesting}
  For each $p \in \mathcal{P}$ and monomials $m_1,m_2
  \in\termuniv{p}$, we have $\norm_p(m_1\cdot m_2) =
  \norm_p(\norm_p(m_1)\cdot \norm_p(m_2))$.
\end{lemma}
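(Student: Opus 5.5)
The plan is to make the normal form of a monomial explicit, as a function of its exponent vector, and then observe that this function commutes with multiplication. For a monomial $m$ and a variable $\svar$, write $d_\svar = \deg(m,\svar)$. Define the reduced degree $r_\svar(d)$ as follows:
\begin{itemize}
\item for $\svar \in \nopump{p}$: $r_\svar(d) = d$ if $d < \base{\svar}{p}$, and ``$0$'' otherwise;
\item for $\svar \in \pump{p}$: $r_\svar(d) = d \mod \period{\svar}{p}$;
\item for $\svar \in \ThresholdVars$: $r_\svar(d) = \min(d, \threshold{\svar}-1)$.
\end{itemize}
The first step is to check that $\norm_p(m)$ is either the term $0$, when some bounded variable has $d_\svar \geq \base{\svar}{p}$, or else the monomial $\prod_\svar \svar^{r_\svar(d_\svar)}$. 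This holds because each axiom \boundAx, \periodAx, \threshAx{} only ever rewrites the exponent of a single variable, and repeated application reaches exactly these values. In particular, the resulting monomial is reduced, and by uniqueness of normal forms it is $\norm_p(m)$.

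The second step is to note that each $r_\svar$ is compatible with addition of degrees:
\begin{itemize}
\item $r_\svar(d_1+d_2) = r_\svar(r_\svar(d_1)+r_\svar(d_2))$.
\end{itemize}
For the periodic case this is arithmetic modulo $\period{\svar}{p}$. For the threshold case, $\min(d_1+d_2,\theta-1) = \min(\min(d_1,\theta-1)+\min(d_2,\theta-1),\theta-1)$. For the bounded case, $d_1+d_2 \ge \beta$ iff either some $d_i\ge\beta$ (then that factor's normal form is already $0$) or both are below $\beta$ and the degrees are unchanged.

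The final step assembles these. Since $\deg(m_1 m_2,\svar) = \deg(m_1,\svar)+\deg(m_2,\svar)$, we split into two cases.
\begin{itemize}
\item If $\norm_p(m_1)$ or $\norm_p(m_2)$ is $0$, then the right-hand side is $\norm_p(0)=0$, using that $0$ is cancelling for $\cdot$. The left-hand side is also $0$, because some bounded variable already has degree at least its bound in $m_1m_2$.
\item Otherwise both normal forms are monomials. Applying the degree-wise identity from the second step to every variable shows that both sides yield the same exponent vector, or both yield $0$.
\end{itemize}

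The main obstacle is the first step: justifying that the normal form of a monomial is obtained variable-wise. This relies on the fact that the only interaction between variables is through $0$ absorbing the product. Here I would argue via confluence of the rewriting system, which is available from the paper's stated uniqueness of normal forms. The rest is routine arithmetic.
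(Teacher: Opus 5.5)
Your proof is correct and follows essentially the same route as the paper: a variable-by-variable comparison of degrees, where the bounded case is settled by comparing against the bound and using that $0$ absorbs products, and the periodic case by arithmetic modulo the period. Your version is slightly more complete, since it makes explicit that the normal form of a monomial is computed variable-wise (which the paper uses implicitly) and it also covers the threshold variables in $\ThresholdVars$, which the paper's proof omits.
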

\begin{proof}
  Let $\svar \in \varsof{m_1 \cdot m_2} = \varsof{m_1} \cup
  \varsof{m_2}$ be a nonterminal. We consider two cases. First, assume
  that $\svar\in\nopump{p}$. If $\deg(m_1\cdot
  m_2,\svar)<\base{\svar}{p}$ then $\deg(m_1\cdot m_2,\svar) =
  \deg(\norm_p(m_1 \cdot m_2),\svar) = \deg(\norm_p(\norm_p(m_1) \cdot
  \norm_p(m_2)))$. Else, if $\deg(m_1\cdot
  m_2,\svar)\geq\base{\svar}{p}$ then $\norm_p(m_1\cdot m_2)=0$,
  because $\norm_p(m_1 \cdot m_2)$ is obtained from $m_1 \cdot m_2$ by
  applying the \boundAx{} axiom. Again, we consider two cases. If
  $\deg(m_1,\svar) \geq \base{\svar}{p}$ or $\deg(m_2,\svar) \geq
  \base{\svar}{p}$, then $\norm_p(m_1) = 0$, hence $\norm_p(m_1) \cdot
  \norm_p(m_2) = \norm_p(\norm_p(m_1) \cdot \norm_p(m_2)) = 0$. Else,
  $\deg(m_1,\svar) < \base{\svar}{p}$, $\deg(m_2,\svar) <
  \base{\svar}{p}$ and $\deg(m_1,\svar)+\deg(m_2,\svar) \geq
  \base{\svar}{p}$, then $\norm_p(\norm_p(m_1) \cdot \norm_p(m_2)) =
  0$.

  Second, assume that $\svar\in\pump{p}$. By \periodAx, we obtain, for each monomial $m$:
  \begin{align}\label{eq:nf-mod-base}
    \deg(m, \svar) \equiv & ~\deg(\norm_p(m), \svar) \mod \base{\svar}{p}
  \end{align}
  Using the properties of modular addition, we compute:
  \begin{align*}
    \deg(m_1 \cdot m_2, s) \\
    \equiv \deg(\norm_p(m_1 \cdot m_2), s) \mod \base{\svar}{p}
    \\\\
    \deg(m_1,s) + \deg(m_2,s) \\
    \equiv \deg(\norm_p(m_1),s) + \deg(\norm_p(m_2),s) \mod \base{\svar}{p} \\
    \equiv \deg(\norm_p(m_1) \cdot \norm_p(m_2), s) \mod \base{\svar}{p} \\
    \equiv \deg(\norm_p(\norm_p(m_1) \cdot \norm_p(m_2)), s) \mod \base{\svar}{p}
  \end{align*}
  Hence, $\deg(\norm_p(m_1 \cdot m_2), s) = \deg(\norm_p(\norm_p(m_1)
  \cdot \norm_p(m_2)), s)$ because $\deg(m_1 \cdot m_2, s) =
  \deg(m_1,s) + \deg(m_2,s)$, $\deg(\norm_p(m_1 \cdot m_2), s) <
  \base{\svar}{p}$ and $\deg(\norm_p(\norm_p(m_1) \cdot \norm_p(m_2)),
  s) < \base{\svar}{p}$. Since the choice of $\svar\in\varsof{m_1
    \cdot m_2}$ was arbitrary, we obtain $\norm_p(m_1\cdot m_2) =
  \norm_p(\norm_p(m_1)\cdot \norm_p(m_2))$.
\end{proof}
\end{myTextEnd}

\subsection{The Recognizer of a Regular Set}
\label{subsec:regular-languages-are-recognizable}

We define a finite algebra $\algof{A} = (\universeOf{A},
\sop^\algof{A},\pop^\algof{A},\set{a^{\algof{A}}}_{a \in \alphabet})$
and a homomorphism $h$ between $\algof{SP}$ and $\algof{A}$, such that
$\langof{}{\grammar} = h^{-1}(F)$, where $F\isdef
h(\langof{}{\grammar})$, thus establishing that $\langof{}{\grammar}$
is recognizable. Even though the definitions of $\algof{A}$ and $h$
depend on $\grammar$, we omit mentioning it, to avoid clutter.

Intuitively, a P-graph (i.e., a parallel composition of at least two
graphs) is represented by a tuple of terms $\tuple{t_p}_{p \in
  \mathcal{P}}$, where $t_p \in \termuniv{p}$, for each $p \in
\mathcal{P}$. Each term $\norm_p(t_p)$ is a sum of monomials, that
represent the possible ways in which the P-graph is parsed by
$\grammar$ starting from $p$, using only rules from
$\arules\cup\brules$ (modulo repetitions of rules from
$\arules$). This motivates the following notions of \emph{view}, i.e., a
monomial that represents a set of derivations having a common prefix
of the form $p \step{\grammar}^* s_1 \pop \ldots \pop s_n$, and
\emph{profile}, i.e., a sum of monomials that accounts for the
nondeterminism in $\grammar$:

\begin{definition}\label{def:p-view}
  Let $\graph$ be a P-graph. A \emph{view} of $\graph$ is a monomial
  $m = s_1 \cdot \ldots \cdot s_n$, for which there exist complete
  derivations $s_i \step{\grammar}^* t_i$, for all $i \in
  \interv{1}{n}$, such that $\graph = (t_1 \pop \ldots \pop
  t_n)^\algof{SP}$. For a nonterminal $p \in \mathcal{P}$, the
  $p$-profile of $\graph$ is the term $h_p(\graph) \isdef
  \norm_p\left(\sum_{m \text{ view of } \graph} m\right)$. The
  \emph{profile} of $\graph$ is the tuple $h(\graph) \isdef
  \tuple{h_p(\graph)}_{p \in \mathcal{P}}$.
\end{definition}

A view of an S-graph (i.e., either a bridge or a sequential
composition of least two graphs) is a pair $(s,q)$, where $s \in
\mathcal{S}$ represents the beginning and $q \in
\mathcal{P}\uplus\mathcal{S}\uplus\set{\bot}$ the end of a derivation
$s \step{\grammar}^* p_1 \sop \ldots p_{n} \sop q$ (resp. $s
\step{\grammar}^* p_1 \sop \ldots p_{n}$ if $q = \bot$) using only
rules from $\crules\cup\drules$, for $p_1,\ldots,p_{n} \in
\mathcal{P}$. This representation of S-graphs is similar to the
classical representation of words by pairs of states in a finite
word automaton.

For two terms $t$ and $u$, over the $\signature_\algof{SP}$ signature,
we denote by $t \assoc u$ the fact that $t$ can be rewritten into $u$
by applying zero or more times the associativity law for $\sop$. For a
grammar $\grammar$, we write $t \astep{\grammar} u$ if $t
\step{\grammar}^* w$ and $w \assoc u$. The profile of an S-graph is a
set of views representing the nondeterminism in $\grammar$:

\begin{definition}\label{def:s-view}
  Let $\graph$ be an S-graph. A \emph{view} of $\graph$ is a pair
  $(s,q) \in \mathcal{S} \times (\mathcal{S} \uplus \mathcal{P} \uplus
  \set{\bot})$, for which there exists a derivation, either $s
  \astep{\grammar} t \sop q$, if $q \neq \bot$, or $s \astep{\grammar}
  t$, otherwise, and $t$ is a ground term such that $\graph =
  t^\algof{SP}$. The \emph{profile} of $\graph$ is the relation
  $h(\graph) \isdef \set{(s,q) \in \mathcal{S} \times (\mathcal{S}
    \uplus \mathcal{P} \uplus \set{\bot}) \mid (s,q) \text{ is a view
      of } \graph}$.
\end{definition}

The definition of $\algof{A}$ benefits from the notion of
\emph{alternative} grammar, that differs from regular grammar only in
that we allow rules of the form $p \rightarrow s$ (i.e., in violation
of condition (\ref{it2:def:sp-regular-grammar}) of Definition
\ref{def:sp-regular-grammar}), where $s \in \mathcal{S}$ occurs in a
rule $s \rightarrow a$, $a \in \alphabet$ and nowhere else.  This
allows us to represent the bridges that may occur in the parallel
composition of a P-graph using nonterminals from $\mathcal{S}$, which
provides a uniform definition of views for P-graphs.  Each regular
grammar can be transformed into a language-equivalent alternative
grammar by adding a rule $s_a \rightarrow a$ and replacing each rule
$p \rightarrow a$ with $p \rightarrow s_a$, for each $a \in
\alphabet$. In the following, we denote by $\grules$ the set of newly
introduced rules of this form, in an alternative grammar. If,
moreover, $p \in \axioms \cap \mathcal{P}$ is an axiom of $\grammar$,
we promote $s_a$ to be axiom of the alternative grammar.  In the
following, we consider w.l.o.g. $\grammar$ to be an alternative
grammar in normal form.

\begin{myTextEnd}
\begin{lemma}[Lemma 25 in ~\cite{Lics25Arxiv}]\label{lemma:assoc-sp-start}
  Let $s \in \mathcal{S}$ and $q \in \mathcal{P} \uplus \mathcal{S}$
  be nonterminals.  Then, for every derivation $s \astep{\grammar} u
  \sop q$, there are non-terminals $p_1, \ldots, p_n$, derivations
  $p_i \step{\grammar}^* v_i$ and rules $s_i \rightarrow p_i \sop
  s_{i+1}$, for $1 \le i < n-1$, with $s=s_1$, and a rule $s_{n-1}
  \rightarrow p_n \sop q$ such that $v_1 \sop (v_2 \sop \ldots (v_n)
  \ldots) \assoc u$.
\end{lemma}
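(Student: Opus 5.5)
We argue by induction on the length of the derivation $s \step{\grammar}^* w$ with $w \assoc u \sop q$. The key observation is that in a regular grammar the only rules with left-hand side in $\mathcal{S}$ are of the forms (\ref{it3:def:sp-regular-grammar}), (\ref{it4:def:sp-regular-grammar}) and (\ref{it6:def:sp-regular-grammar}). Since $s \in \mathcal{S}$ and the final term still contains a nonterminal occurrence $q$ at its rightmost $\sop$-position, the first step cannot use a rule $s \rightarrow a$. Hence the first rule applied is either $s \rightarrow p \sop s'$ or $s \rightarrow p_1 \sop p_2$.

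The plan is to track the \emph{rightmost spine} of the derivation tree. Reorder $s \step{\grammar}^* w$ so that every step rewriting a nonterminal on the spine (the chain of right children starting at $s$) is performed first. This is possible because derivation steps on disjoint subterms commute. The spine consists of nonterminals $s=s_1, s_2, \ldots$ of $\mathcal{S}$ rewritten by rules $s_i \rightarrow p_i \sop s_{i+1}$. It ends either at a node that is never rewritten, or at a node rewritten by a rule of form (\ref{it4:def:sp-regular-grammar}) or (\ref{it6:def:sp-regular-grammar}).

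Next, identify $q$. Because $w \assoc u \sop q$, the occurrence $q$ is the rightmost leaf of $w$ with respect to $\sop$-flattening. Associativity only rebrackets $\sop$ and does not change the left-to-right sequence of maximal $\sop$-factors. The rightmost factor of $w$ is the last node of the spine, or a descendant of it, provided that descendant is not below a $\pop$. We then split into cases.
\begin{itemize}
\item If the last spine node is an unrewritten $s_n \in \mathcal{S}$ with $n\ge 2$, then $q=s_n$, and we take the rule $s_{n-1} \rightarrow p_{n-1} \sop s_n$. This matches the required shape, up to renaming the indices so that $p_n$ becomes the final left factor.
\item If the last spine node is rewritten by $s_{n-1} \rightarrow p \sop p'$, then $q$ is either $p'$ itself (unrewritten), or lies inside the subterm derived from $p'$. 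The latter is impossible: a $p'$-derivation begins with a rule (\ref{it1:def:sp-regular-grammar}), (\ref{it2:def:sp-regular-grammar}) or (\ref{it5:def:sp-regular-grammar}), so its result is a $\pop$-composition or a constant, never a $\sop$-term whose right factor is a nonterminal.
\item A spine ending with a rule $s \rightarrow a$ contradicts $q$ being a nonterminal.
\end{itemize}
In the surviving cases, the left children $p_1,\ldots,p_{n-1}$ (and $p_n$) derive terms $v_i$ with $p_i \step{\grammar}^* v_i$. Then $w$ is literally the right-nested term $v_1 \sop (v_2 \sop \cdots (v_n \sop q))$, so $v_1 \sop (\cdots (v_n)\cdots) \assoc u$ follows by cancelling the final factor $q$ under $\assoc$.

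The main obstacle is the claim that $\assoc$-equivalence preserves the decomposition into maximal $\sop$-factors, and therefore forces $q$ to be the spine end. I would prove this by defining a flattening function that maps a term to the list of its maximal non-$\sop$ subterms, and checking that one application of the associativity law leaves this list unchanged. Then $w \assoc u \sop q$ gives $\mathrm{flat}(w)=\mathrm{flat}(u)\cdot q$. With this, the case analysis above shows that $q$ must be the last element of the flattened spine. Cancelling $q$ then yields $\mathrm{flat}(v_1\sop\cdots\sop v_n)=\mathrm{flat}(u)$, which implies $\assoc$-equivalence because two terms with equal flattenings are $\assoc$-equivalent.
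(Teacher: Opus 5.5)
The paper does not prove this lemma; it quotes it without proof as Lemma~25 of \cite{Lics25Arxiv}, so there is no in-paper argument to compare yours with. Judged on its own, your argument is a correct proof for regular grammars: you follow the right spine $s=s_1\to p_1\sop s_2\to\cdots$ of the derivation tree, read off the derived term $w$ as a literal right-nested $\sop$-composition, and rule out every spine ending except an unrewritten nonterminal. One step needs repair, though.

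Associativity may be applied at any position, in particular inside a $\pop$-factor. In that case the list of maximal non-$\sop$ subterms changes syntactically. For example, with $w=(((a\sop b)\sop c)\pop d)\sop q$ and $u=(a\sop(b\sop c))\pop d$ you have $w\assoc u\sop q$, but $\mathrm{flat}(w)\neq\mathrm{flat}(u)\cdot q$. So your claim that one associativity step leaves the list unchanged is false. What is true is that such a step preserves the length of the list and changes its entries only up to $\assoc$; alternatively, you can flatten recursively at every level.

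This weaker invariant is all you need. The last entry of $\mathrm{flat}(w)$ is $\assoc$-equivalent to the nonterminal $q$, hence equal to it. Entrywise $\assoc$-equivalence of $\mathrm{flat}(u)$ with $[v_1,\ldots,v_k]$ then gives $u\assoc v_1\sop(\cdots(v_k)\cdots)$, because $\assoc$ is a congruence.

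Two minor points. The induction on derivation length that you announce is never used, since you analyse the derivation tree directly. The empty derivation should be excluded explicitly: $s$ has a one-entry flattening, whereas $u\sop q$ has at least two entries.

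Your handling of the ending $s_{k-1}\to p\sop p'$ uses regularity in an essential way, and this is not a formality. The paper states the lemma right after fixing $\grammar$ to be an \emph{alternative} grammar, which contains rules $p'\to s_a$, and for such grammars the statement fails. In the alternative version of $\grammar_{\mathit{univ}}$ you have $s\step{\grammar}p\sop p\step{\grammar}p\sop s_a$. Hence $s\astep{\grammar}p\sop s_a$ with $q=s_a\in\mathcal{S}$, yet no rule has $s_a$ as its right $\sop$-factor. You are therefore right to keep the hypothesis that $\grammar$ is regular, as the companion Lemma~\ref{lemma:assoc-sp} does; it is necessary, not merely convenient.
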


\begin{lemma}[Lemma 26 in ~\cite{Lics25Arxiv}]\label{lemma:assoc-sp}
  Let $\grammar = (\mathcal{P}\uplus\mathcal{S},\rules,\axioms)$ be a
  regular grammar, $\graph = \graph_1 \sop \graph_2$ be an S-graph, $s
  \in \mathcal{S}$ and $q \in \mathcal{P}\uplus\mathcal{S}$ be
  nonterminals. The following are equivalent: \begin{enumerate}
  \item $s \astep{\grammar} t \sop q$,
    where $t$ is a ground term such that $t^\algof{SP} = \graph$,
  \item there exists $s'\in\mathcal{S}$ and derivations $s
    \astep{\grammar} t_1 \sop s'$ and $s' \astep{\grammar} t_2
    \sop q$, where $t_i$ is a ground term such that $t_i^\algof{SP} =
    \graph_i$, for $i=1,2$.
  \end{enumerate}
\end{lemma}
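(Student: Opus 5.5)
We prove the two directions separately. The tool throughout is Lemma~\ref{lemma:assoc-sp-start}, which linearizes any derivation $s \astep{\grammar} u \sop q$ into a chain of rules of type (\ref{it3:def:sp-regular-grammar}) followed by one rule $s_{n-1} \rightarrow p_n \sop q$.

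\emph{Direction (2) $\Rightarrow$ (1)} is a concatenation argument. Take the derivations $s \astep{\grammar} t_1 \sop s'$ and $s' \astep{\grammar} t_2 \sop q$. By definition of $\astep{\grammar}$ there are terms $w_1 \assoc t_1 \sop s'$ with $s \step{\grammar}^* w_1$, and $w_2 \assoc t_2 \sop q$ with $s' \step{\grammar}^* w_2$. Since $t_1$ is ground, $w_1$ contains exactly one occurrence of the nonterminal $s'$. We replay the derivation of $w_2$ inside that occurrence and obtain $s \step{\grammar}^* w_1[w_2/s']$. Associativity of $\sop$ is a congruence, so $w_1[w_2/s'] \assoc t_1 \sop (t_2 \sop q) \assoc (t_1 \sop t_2) \sop q$. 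Put $t \isdef t_1 \sop t_2$; then $t^\algof{SP} = \graph_1 \sop \graph_2 = \graph$, which gives (1).

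\emph{Direction (1) $\Rightarrow$ (2)} is the substantive one. Apply Lemma~\ref{lemma:assoc-sp-start} to $s \astep{\grammar} t \sop q$. This yields $p_1,\ldots,p_n$, derivations $p_i \step{\grammar}^* v_i$ with each $v_i$ ground, and a chain $s=s_1 \rightarrow p_1 \sop s_2$, \ldots, $s_{n-1} \rightarrow p_n \sop q$, with $v_1 \sop (\ldots \sop v_n) \assoc t$. Hence $\graph = v_1^\algof{SP} \sop \ldots \sop v_n^\algof{SP}$.

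Each $v_i$ is derived from a $\mathcal{P}$-nonterminal, so its value is atomic (a bridge or a P-graph). Indeed, rules with left-hand side in $\mathcal{P}$ are of types (\ref{it1:def:sp-regular-grammar}), (\ref{it2:def:sp-regular-grammar}), (\ref{it5:def:sp-regular-grammar}) or the alternative-grammar form $p\rightarrow s_a$. These produce either a bridge or a parallel composition of at least two components, and by the structural characterization from \cite[Lemma 6.3]{CourcelleV} such graphs admit no serial decomposition.

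Next we use uniqueness of the decomposition of an S-graph into its maximal sequence of atomic serial factors; this follows from the same structural lemma, since the serial factors are the blocks separated by cut vertices along the source--target path. The decompositions $v_1\sop\ldots\sop v_n$ and $\graph_1\sop\graph_2$ are compared by writing each of $\graph_1,\graph_2$ as a product of atomic factors. Uniqueness then gives some $1 \le j < n$ with $\graph_1 = v_1^\algof{SP}\sop\ldots\sop v_j^\algof{SP}$ and $\graph_2 = v_{j+1}^\algof{SP} \sop\ldots\sop v_n^\algof{SP}$.

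We set $s' \isdef s_{j+1}$. From the chain of rules, $s \step{\grammar}^* v_1 \sop (\ldots(v_j \sop s_{j+1}))$, which is $\assoc (v_1\sop\ldots\sop v_j)\sop s'$; this gives $t_1$. Similarly $s_{j+1} \step{\grammar}^* v_{j+1}\sop(\ldots(v_n \sop q))$, which gives $t_2$. This proves (2).

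The main obstacle is the splitting step: justifying rigorously that the atomic factors of $\graph$ are exactly the values of the $v_i$, and that the cut between $\graph_1$ and $\graph_2$ falls between two consecutive factors. I would first state an auxiliary claim that every SP-graph has a unique factorization into atomic graphs under $\sop$, proved by identifying the factors via the cut vertices separating source and target. I would then check that $j$ satisfies $1\le j<n$, which holds because both $\graph_1$ and $\graph_2$ have at least one edge.
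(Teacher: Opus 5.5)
Your proof is correct. The paper gives no proof of this lemma to compare against: it is imported from \cite{Lics25Arxiv} (Lemma 26 there).

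Your route combines Lemma~\ref{lemma:assoc-sp-start} with the $\sop$-atomicity of graphs derived from $\mathcal{P}$-nonterminals. This is the same pattern the paper uses whenever it applies the lemma, for instance in the proof of Lemma~\ref{lemma:homomorphism}. The one extra ingredient you need is the uniqueness of the factorization of an SP-graph into atomic serial factors. That fact is standard, and your sketch of it via vertices separating source from target is sound. Atomicity of a parallel composition is easier to check directly than to cite from \cite{CourcelleV}: $H_1 \pop H_2$ has two internally disjoint source--target paths, whereas every source--target path of $K_1 \sop K_2$ passes through the junction vertex.

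One small repair is needed. You copied the index slip from the statement of Lemma~\ref{lemma:assoc-sp-start}: your chain $s_1,\ldots,s_{n-1}$ has only $n-1$ left-hand sides but produces $n$ nonterminals $p_1,\ldots,p_n$. The chain should read $s_i \rightarrow p_i \sop s_{i+1}$ for $1 \le i < n$, followed by $s_n \rightarrow p_n \sop q$. Otherwise your choice $s' = s_{j+1}$ is undefined in the case $j = n-1$.
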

\end{myTextEnd}

The domain of the recognizer algebra $\algof{A}$ is $\universeOf{A}
\isdef \set{h(\graph) \mid \graph \in \universeOf{SP}}$. We denote by
$\universeOf{A}^P$ and $\universeOf{A}^S$ the sets of P-profiles
(i.e., profiles of P-graphs) and S-profiles (i.e., profiles of
S-graphs), respectively. Note that $\universeOf{A}^P \uplus
\universeOf{A}^S = \universeOf{A}$. For technical convenience, in the
definition of $\algof{A}$, we consider the mappings $\parmap$, that
converts an arbitrary profile into a tuple of terms in normal form,
and $\seqmap$, that converts an arbitrary profile into a relation
between $\mathcal{P} \uplus \mathcal{S}$ and $\mathcal{P} \uplus \mathcal{S} \uplus \set{\bot}$:
\begin{align*}
  \parmap(\elem) \isdef & ~\begin{cases}
    \elem & \text{if } \elem \in \universeOf{A}^P \\
    \Tuple{\norm_p\left(\sum\set{s \mid
        (s,\bot) \in \elem,~
        s \in \nopump{p} \uplus \pump{p}
        }\right)}_{p \in \mathcal{P}} & \text{if } \elem \in \universeOf{A}^S
  \end{cases}
  \\
  \seqmap(\elem) \isdef & ~\begin{cases}
    \elem & \text{if } \elem \in \universeOf{A}^S \\
    \set{(p,\bot) \mid p \leadsto m \in t_p,~ p \in \mathcal{P}} \cup \set{(s,q) \mid s \rightarrow p \sop q \in \rules}
    & \text{if } \elem = \tuple{t_p}_{p \in \mathcal{P}} \in \universeOf{A}^P
  \end{cases}
\end{align*}
where $\tuple{\elem}_p$ denotes the $p$-th component of a tuple $\elem \in \universeOf{A}^P$ and
$\norm(\tuple{t_p}_{p \in \mathcal{P}}) \isdef \tuple{\norm_p(t_p)}_{p \in \mathcal{P}}$ denotes the component-wise normal form.
Further, by writing $p \leadsto m$, we mean the existence of a rule $p
\rightarrow \pexp{s_1}{\ell_1} \pop \ldots \pop \pexp{s_n}{\ell_n} \in
\brules$ such that $m = s_1^{\ell_1} \cdot \ldots \cdot
s_n^{\ell_n}$, and by $p \leadsto^* m$ we mean the existence of a
derivation $p \step{\grammar}^* s_1 \pop \ldots \pop s_n$, using only
rules from $\arules\cup\brules$, such that $m = s_1 \cdot \ldots \cdot
s_n$. The $\leadsto$ and $\leadsto^*$ relations enjoy the following
property:

\begin{lemmaE}\label{lemma:leadsto}
  For each nonterminal $p \in \mathcal{P}$ and monomial
  $m\in\termuniv{p}$, we have $p \leadsto^* m$ if and only if $p
  \leadsto \norm_p(m)$.
\end{lemmaE}
\begin{proofE}
  ``$\Rightarrow$'' Let $\svar \in \varsof{m}$ be a nonterminal. We
  consider two cases. First, assume that $\svar\in\nopump{p}$. If
  $\deg(m,\svar)\geq\base{\svar}{p}$ then $\norm_p(m)=0$, because
  $\norm_p(m)$ is obtained from $m$ by applying the \boundAx{}
  axiom. Then, we have $p \not\leadsto^* m$ and $p \not\leadsto
  \norm_p(m)$. Else, $\deg(m,\svar) = \deg(\norm_p(m),\svar) <
  \base{\svar}{p}$ and there is nothing to prove. Second, assume that
  $\svar\in\pump{p}$. If $\deg(m,\svar)=\deg(\norm_p(m),\svar)$ there
  is nothing to prove, so we assume that $\deg(m,\svar) >
  \deg(\norm_p(m),\svar)$. Because $\norm_p(m)$ is obtained from $m$
  by applying the \periodAx{} axiom, it must be the case that
  $\deg(m,\svar) = \deg(\norm_p(m),\svar) + k \cdot
  \period{\svar}{p}$, for some $k \in \nat^{\geq 1}$. Assume
  w.l.o.g. that $m = \svar^{\deg(m,\svar)} \cdot m'$. Since $p
  \leadsto m$, there exists a derivation:
  \begin{align*}
    p \step{\grammar}^* \pexp{\svar}{\deg(m,\svar)} \pop p
  \end{align*}
  such that, moreover, $p \leadsto^* m'$. This derivation must apply $k$
  times the rule $p \rightarrow p \pop
  \pexp{\svar}{\period{\svar}{p}}$ (note that this rule is unique,
  because $\grammar$ is assumed to be in normal form). By
  removing the $k$ steps corresponding to the application of this
  rule, we obtain a derivation:
  \begin{align*}
    p \step{\grammar}^* \pexp{\svar}{\deg(\norm_p(m),\svar)} \pop p
  \end{align*}
  Since $p \leadsto^* m'$, we can apply inductively this
  transformation to all variables from $\pump{p}\setminus\set{\svar}$,
  in order to obtain a single-step derivation (i.e., a rule from
  $\brules$) that witnesses $p \leadsto \norm_p(m)$. ``$\Leftarrow$''
  This direction uses a symmetric argument.
\end{proofE}

The interpretation of the signature $\signature_\algof{SP}$ in the
algebra $\algof{A}$ is defined below, for all $a \in \alphabet$ and
$\elem_1,\elem_2 \in \universeOf{A}$, where $y_1;y_2$ denotes the relational
composition, i.e., $y_1;y_2 = \{(a,c) \mid (a,b) \in y_1, (b,c) \in y_2\}$, and $\tuple{t_p}_{p \in \mathcal{P}} \cdot \tuple{u_p}_{p \in \mathcal{P}} \isdef \tuple{t_p \cdot u_p}_{p \in \mathcal{P}}$ denotes component-wise multiplication:
\begin{align*}
  a ^\algof{A} \isdef & ~\set{(s,\bot) \mid s \rightarrow a \in \frules} \cup \set{(s,q) \mid s \rightarrow p \sop q \in \crules,~ p \rightarrow s_a \in \grules} \\
  \elem_1 \pop^\algof{A} \elem_2 \isdef & ~\norm(\parmap(\elem_1) \cdot \parmap(\elem_2)) \quad \quad \quad
  \elem_1 \sop^\algof{A} \elem_2 \isdef ~\seqmap(\elem_1);\seqmap(\elem_2) 
\end{align*}
Note that, since bridges are considered to be S-graphs, $a^\algof{A}$
is an S-profile. Since the definition of $a^\algof{A}$ assumes the
grammar to be alternative, hence there two ways in which a bridge can
be parsed, either $s \step{\grammar} a$, thus yielding the view
$(s,\bot)$, or $s \step{\grammar} p \circ q \step{\grammar} s_a \circ
q \step{\grammar} a \circ q$, thus yielding the view $(s,q)$.

The parallel composition is interpreted in the recognizer algebra as
the multiplication in the term algebra, and the serial composition is
interpreted as relational composition. Intuitively, the interpretation
of $\signature_\algof{SP}$ in the algebra $\algof{A}$ mimicks the
parallel and serial composition operations at the level of profiles,
as shown by the following example:

\begin{example}
  Consider the regular grammar $\grammar_{\mathit{univ}}$ from Example
  \ref{ex:universal-sp-grammar}.  The corresponding alternative
  grammar is obtained by first normalizing the grammar
  $\grammar_{\mathit{univ}}$ (see Example \ref{ex:normalized-grammar})
  and then replacing the rule $p \rightarrow a$ with the rules $p
  \rightarrow s_a$ and $s_a \rightarrow a$, where $s_a$ is a fresh
  nonterminal, for each $a \in \alphabet$. The resulting
  grammar:
  \[\grammar_{\mathit{univ}} \isdef
  (\underbrace{\set{p}}_{\mathcal{P}}\uplus\underbrace{\set{s,s'} \cup \set{s_a \mid a \in \alphabet}}_{\mathcal{S}},\rules,\set{p,s})\]
  has the following rules:
  \[\begin{array}{rlcrlcrlcrl}
  p \rightarrow & p \pop s & \hspace*{5mm} & s \rightarrow & p \sop s & \hspace*{5mm} & p \rightarrow & s_a & \hspace*{5mm} & s_a \rightarrow & a  \\
  p \rightarrow & s' \pop s' && s\rightarrow & p \sop p & \hspace*{5mm} & s \rightarrow & a & \hspace*{5mm} & s' \rightarrow & a \text{, for each } a \in \alphabet \\
  s' \rightarrow & p \sop s && s' \rightarrow & p \sop p 
  \end{array}\]
  For some $a \in \alphabet$, we compute, successively:
  \begin{align*}
    a^\algof{A} = & ~\set{(s_a,\bot),(s,\bot),(s,p),(s,s),(s',\bot),(s',p),(s',s)} \hspace*{5mm}\\
    \parmap(a^\algof{A}) = & ~\tuple{s + s' + s_a} \hspace*{5mm}\\
    a^\algof{A} \pop^\algof{A} a^\algof{A} = & ~s + s'^2 + s \cdot s' + s \cdot s_a + s_a \cdot s' \hspace*{5mm}\\
    \seqmap(a^\algof{A} \pop^\algof{A} a^\algof{A}) = & ~\set{(s,p),(s,s),(s',p),(s',s),(p,\bot)} \hspace*{5mm}\\
    (a^\algof{A} \pop^\algof{A} a^\algof{A}) \sop^\algof{A} a^\algof{A} = & ~\set{(s,\bot),(s,p),(s,s),(s',\bot),(s',p),(s',s)}
  \end{align*}
  It is easy to check that $h((a \pop a) \sop a) =
  \set{(s,\bot),(s,p),(s,s),(s',\bot),(s',p),(s',s)}$, because the right-hand side contains
  all the views of the S-graph $(a \pop a) \sop a$ for
  $\grammar_{\mathit{univ}}$. Consequently, we have $h((a \pop a) \sop
  a) = (a^\algof{A} \pop^\algof{A} a^\algof{A}) \sop^\algof{A}
  a^\algof{A}$.
\end{example}


The following lemma proves the correctness of the above definition:

\begin{lemmaE}\label{lemma:homomorphism}
  $h$ is a homomorphism between $\algof{SP}$ and $\algof{A}$.
\end{lemmaE}
\begin{proofE}
  We prove several points, for all $a \in \alphabet$ and
  $\graph_1, \graph_2 \in \universeOf{SP}$:

  \vspace*{.5\baselineskip}
  \noindent\underline{$h(a^\algof{SP}) = a^{\algof{A}}$}:
  ``$\subseteq$''
  Let $(s,q) \in h(a^\algof{SP})$ be a pair.
  since $a^\algof{SP}$ is an S-graph, by Definition \ref{def:s-view}, we
  have $(s,q) \in h(a^\algof{SP})$ for the derivation $s \astep{\grammar} a$ (in case of $q = \bot$) or the derivation with steps $s \astep{\grammar} p \sop q$, $p \astep{\grammar} s_a$ and $s \astep{\grammar} a$ (in case of $q \neq \bot$).
  In both cases, we have $(s,q)\in a^\algof{A}$, by the definition of $a^\algof{A}$.
  ``$\supseteq$'' Let $(s,q) \in a^\algof{SP}$ be a
  pair.
  Then, there is either a rule $s \rightarrow a \in \frules$ (in case of $q=\bot$), or there are rules  $s \rightarrow p \sop q, p \rightarrow s_a, s \rightarrow a \in \frules$ (in case of $q \neq \bot$).
  In the first case we obtain a derivation $s \astep{\grammar} a$ and hence $(s,\bot) \in
  h(a^\algof{SP})$; in the second case we obtain a derivation $s \astep{\grammar} a$ with steps $s \astep{\grammar} p \sop q$, $p \astep{\grammar} s_a$ and $s \astep{\grammar} a$, and hence $(s,q) \in h(a^\algof{SP})$, by Definition \ref{def:s-view}.

  \vspace*{.5\baselineskip}
  \noindent\underline{$h(\graph_1
    \pop^{\algof{SP}} \graph_2) = h(\graph_1)
    \pop^{\algof{A}} h(\graph_2)$}:
  We distinguish the following cases, according to the types of $G_1$ and $G_2$:
  \begin{itemize}
  \item $\graph_1$ and $\graph_2$ are P-graphs: ``$\subseteq$'' Let $m
    \in h_p(\graph_1 \pop^{\algof{SP}} \graph_2)$, for some $p \in
    \mathcal{P}$, and $m' = \prod_{s \in \mathcal{S}} s^{k_s}$, for
    some $k_s \ge 0$, be the view of $\graph_1 \pop^{\algof{SP}}
    \graph_2$ such that $m = \norm_p(m')$.  Then, for each $s \in
    \mathcal{S}$ and $i \in \interv{1}{k_s}$, there exists a complete
    derivation $s \step{{\grammar}}^* v(s,i)$, where $v(s,i)$ is a
    ground term such that $\graph_1 \pop^{\algof{SP}} \graph_2 =
    (\pop_{s \in \mathcal{S}, i \in \interv{1}{k_s}}
    v(s,i))^\algof{SP}$.  We split the set of ground terms $v(s,i)$
    according to whether $v(s,i)^\algof{SP}$ is a subgraph of
    $\graph_1$ or of $\graph_2$, thus obtaining views $m'_i$ of
    $\graph_i$ such that $m' = m'_1 \cdot m'_2$. Note that, since
    $v(s,i)^\algof{SP}$ is a S-graph, it must be either a subgraph of
    $\graph_1$ or of $\graph_2$. Since $\norm_p(m'_i) \in
    h_p(\graph_i)$ and $m = \norm_p(m'_1 \cdot m'_2) =
    \norm_p(\norm_p(m'_1) \cdot \norm_p(m'_2))$, by
    Lemma~\ref{lemma:nf-nesting}, we obtain $m \in (h(\graph_1)
    \pop^{\algof{A}} h(\graph_2))_p$.

    \noindent''$\supseteq$'' Let $m \in(h(\graph_1)
    \pop^{\algof{A}}h(\graph_2))_p$ for some $p \in \mathcal{P}$.
    Then, there exist $m_i \in h_p(\graph_i)$, for $i=1,2$, such that
    $m = \norm_p(m_1 \cdot m_2)$.  Then, there exist views $m'_i$ of
    $\graph_i$ such that $m_i = \norm_p(m'_i)$.  By Lemma
    \ref{lemma:nf-nesting}, we obtain $m = \norm_p(m'_1\cdot m'_2)$.
    Since $m'_1 \cdot m'_2$ is a view of $\graph_1 \pop^\algof{SP}
    \graph_2$, we obtain $m \in h_p(\graph_1 \pop^\algof{SP}
    \graph_2)$.
  \item $\graph_i$ is a P-graph and $\graph_{3-i}$ is an S-graph, for
    $i = 1,2$: We assume that $i=1$, the proof being the same for the
    case $i=2$, because of the commutativity of $\pop^\algof{SP}$ and
    $\pop^{\algof{A}}$. ``$\subseteq$'' Let $m \in
    h_p(\graph_1 \pop^{\algof{SP}} \graph_2)$, for some $p \in \mathcal{P}$, and
    $m'  = \prod_{s \in \mathcal{S}} s^{k_s}$, for some $k_s \ge 0$, be the view of $\graph_1 \pop^{\algof{SP}} \graph_2$ such
    that $m = \norm_p(m')$. Then, for each $s \in
    \mathcal{S}$ and $i \in \interv{1}{k_s}$, there exists
    a complete derivation $s \step{{\grammar}}^* v(s,i)$,
    where $v(s,i)$ is a ground term such that $\graph_1
    \pop^{\algof{SP}} \graph_2 = (\pop_{s \in \mathcal{S}, i
      \in \interv{1}{k_s}} v(s,i))^\algof{SP}$.
    Since $\graph_2$ is
    an S-graph, the only possibility is that $\graph_2 =
    v(s,i)^\algof{SP}$, for a single $s \in \mathcal{S}$ and $i \in
    \interv{1}{k_s}$, such that $s \step{{\grammar}}^*
    v(s,i)$ is a complete derivation.  Then, $(s,\bot) \in
    h(\graph_2)$.
    Moreover, $m'_1 \isdef m' / s$ (where $/$ here denotes the polynomial division) is a view of $\graph_1$ for ${\grammar}$ and $m_1 \isdef
    \norm_p(m'_1) \in
    h_p(\graph_1)$.
    By~Lemma \ref{lemma:nf-nesting}, we obtain $m = \norm_p(m_1 \cdot s)$, thus
    $m \in (h(\graph_1) \pop^{\algof{A}}
    h(\graph_2))_p$.

    \noindent``$\supseteq$''
    Let $m \in (h(\graph_1) \pop^{\algof{A}} h(\graph_2))_p$ for some $p \in \mathcal{P}$.
    Then , there exists $m_1 \in h_p(\graph_1)$ and $(s,\bot)
    \in h(\graph_2)$ such that $m = \norm_p(m_1 \cdot s)$.
    Then, there exists a view $m'_1$ of $\graph_1$ such that $m_1 =     \norm_p(m'_1)$.
    Moreover, there exists a complete derivation $s \step{{\grammar}}^* v$ such that $v^\algof{SP} = \graph_2$.
    Hence, $m'_1 \cdot s$ is a view of
    $\graph_1 \pop^\algof{SP} \graph_2$ and
    $m = \norm_p(m_1 \cdot s) =
    \norm_p(\norm_p(m'_1) \cdot s)$, by Lemma \ref{lemma:nf-nesting}.
    Thus, $m \in h_p(\graph_1 \pop^\algof{SP} \graph_2)$.
  \item $\graph_1$ and $\graph_2$ are S-graphs:
  ``$\subseteq$'' Let $m \in h_p(\graph_1 \pop^{\algof{SP}} \graph_2)$, for some $p \in \mathcal{P}$, and $m'$ be the view of $\graph_1 \pop^{\algof{SP}} \graph_2$ such that $m = \norm_p(m')$. Then, $m' =
    s_1 \cdot s_2$ and there are complete derivations $s_i
    \step{{\grammar}}^* v_i$, for $i=1,2$, where $v_i$ is a
    ground term such that $\graph_i = v_i^\algof{SP}$.
    Then, we obtain $(s_i,\bot) \in h(\graph_i)$, for
    $i=1,2$.
    Moreover, $m'=s_1 \cdot s_2$.
    Thus, $m \in h(\graph_1) \pop^{\algof{A}}
    h(\graph_2)$.

    \noindent``$\supseteq$''
    Let $m \in (h(\graph_1) \pop^{\algof{A}}
    h(\graph_2)_p$ for some $p \in \mathcal{P}$.
    Then, there exist views $(s_i,\bot) \in
    h(\graph_i)$ such that $m = \norm_p(s_1 \cdot s_2)$.
    Hence, there are complete derivations $S_i
    \step{{\grammar}} v_i$ such that $\graph_i =
    v_i^\algof{SP}$, for $i=1,2$.
    Then, $s_1 \cdot s_2$ is a view of $\graph_1 \pop^\algof{SP} \graph_2$.
    Thus, $m \in h_p(\graph_1 \pop^\algof{SP}
    \graph_2)$.
  \end{itemize}

  \vspace*{.5\baselineskip}
  \noindent\underline{$h(\graph_1
    \sop^{\algof{SP}} \graph_2) = h(\graph_1)
    \sop^{\algof{A}} h(\graph_2)$}:
  Since $\graph_1 \sop^{\algof{SP}} \graph_2$ is an S-graph, any view of it is a pair $(s,q)$, such that $s \in
  \mathcal{S}$ and $q \in \mathcal{S} \uplus \mathcal{P} \uplus \set{\bot}$.
  We distinguish the following cases, according to
  the types of $\graph_1$ and $\graph_2$: \begin{itemize}
  \item $\graph_1$ and $\graph_2$ are P-graphs: ``$\subseteq$'' Let
    $(s,q) \in h(\graph_1 \sop^{\algof{SP}} \graph_2)$ and assume
    $q\neq\bot$ (the case $q = \bot$ is similar).  Then, there exists
    a derivation $s \astep{{\grammar}} v \sop q$, for a ground term
    $v$ such that $\graph_1 \sop^{\algof{SP}} \graph_2 =
    v^\algof{SP}$.  By Lemma~\ref{lemma:assoc-sp}, there exist
    derivations $s \astep{{\grammar}} v_1 \sop s'$ and $s'
    \astep{{\grammar}} v_2 \sop q$, for some nonterminal $s' \in
    \mathcal{S}$ and ground terms $v_i$ such that $\graph_i =
    v_i^\algof{SP}$, for $i=1,2$.  Then, by
    Lemma~\ref{lemma:assoc-sp-start}, and because $\graph_1$ and
    $\graph_2$ are $\sop$-atomic, there are rules $s \rightarrow p_1
    \sop s'$ and $s' \rightarrow p_2 \sop q$, and complete derivations
    $p_i \step{{\grammar}}^* v_i$ for some non-terminals $p_1,p_2 \in
    \mathcal{P}$.  By a reordering, we can assume w.l.o.g. that these
    complete derivations are of the form:
    \begin{align*}
      p_i \step{{\grammar}}^* \pop_{s \in \mathcal{S}} \pexp{s}{k^i_s}   \step{{\grammar}}^* v_i
    \end{align*}
    Then, $m_i = \prod_{s \in \mathcal{S}} \pexp{s}{k^i_s}$ is a view of $\graph_i$.
    Thus, $p_i \leadsto^* m_i$, and we
    obtain $p_i \leadsto \norm_{p_i}(m_i)$, by Lemma \ref{lemma:leadsto}, for
    $i=1,2$.
    Moreover, $\norm_{p_i}(m_i) \in
    h_{p_i}(\graph_i)$, for $i=1,2$.
    Thus,$(s,q) \in h(\graph_1) \sop^{\algof{A}} h(\graph_2)$.

    \noindent ``$\supseteq$'' Let $(s,q) \in h(\graph_1)
    \sop^{\algof{A}} h(\graph_2)$ and assume $q \neq \bot$ (the case
    $q = \bot$ is similar).  Then, there exists rules $s \rightarrow
    p_1 \sop s_1$ and $s_1 \rightarrow p_2 \sop q$ in ${\grammar}$,
    with $p_i \leadsto m_i \in h_{p_i}(\graph_i)$, for $i=1,2$.  Then,
    there are views $m_i' = \prod_{s \in \mathcal{S}} \pexp{s}{k^i_s}$
    of $\graph_i$, with $\norm_{p_i}(m_i') = m_i$, for $i=1,2$.  Thus,
    there are derivations:
    \begin{align*}
      p_i \step{{\grammar}}^* \pop_{s \in \mathcal{S}} \pexp{s}{k^i_s} \step{{\grammar}}^* v_i
    \end{align*}
    for some ground terms $v_i$, such that $\graph_i = v_i^\algof{SP}$.
    Thus, we can build a
    derivation:
    \begin{align*}
      s \step{{\grammar}} p_1 \sop s_1 \step{{\grammar}} p_1 \sop p_2
      \sop q \astep{{\grammar}}^* v_1 \sop v_2 \sop q
    \end{align*}
    such that $\graph_1 \sop^\algof{SP} \graph_2 = (v_1 \sop v_2)^\algof{SP}$ and $(s,q) \in h(\graph_1\sop^\algof{SP} \graph_2)$.
  \item $\graph_1$ is a P-graph and $\graph_2$ is an S-graph:
    ``$\subseteq$'' Let $(s,q) \in h(\graph_1 \sop^{\algof{SP}}
    \graph_2)$ be a view of $\graph_1\sop^{\algof{SP}} \graph_2$ and
    assume $q \neq \bot$ (the case $q = \bot$ is similar).  Then,
    there exists a derivation $s \astep{{\grammar}} v \sop q$, for a
    ground term $v$ such that $\graph_1 \sop^{\algof{SP}} \graph_2 =
    v^\algof{SP}$.  By Lemma \ref{lemma:assoc-sp}, there exist
    derivations $S \astep{{\grammar}} v_1 \sop s_1$ and $s_1
    \astep{{\grammar}} v_2 \sop q$, for some $s_1 \in \mathcal{S}$,
    such that $v_i^{\algof{SP}} = \graph_i$, for $i=1,2$.  Then, by
    Lemma~\ref{lemma:assoc-sp-start}, and because $\graph_1$ is
    $\sop$-atomic, there is a rule $s \rightarrow p \sop s_1$ and a
    complete derivation $p \step{{\grammar}}^* v_1$ for some
    non-terminal $p \in \mathcal{P}$.  By a reordering, if necessary,
    we assume the latter derivation to have the form:
    \begin{align*} P
      \step{{\grammar}}^* ~\pop_{s \in
        \mathcal{S}} \pexp{s}{k_s}
      \step{{\grammar}}^* ~\pop_{s \in
        \mathcal{S},i \in \interv{1}{k_s}} v_1(s,i)
      \end{align*}
    for some $k_s \ge 0$, $s \in \mathcal{S}$, where $v_1(s,i)$ are
    ground terms such that $\graph_1 = (\pop_{s \in \mathcal{S},i \in
      \interv{1}{k_s}} v_1(s,i))^\algof{SP}$.  Then, $m = \prod_{s \in
      \mathcal{S}} \pexp{s}{k_s}$ is a view of $\graph_1$ such that $p
    \leadsto^* m$. By Lemma~\ref{lemma:leadsto}, we obtain $p
    \leadsto \norm_p(m) \in h_p(\graph_1)$.  Moreover, since
    $\graph_2$ is an S-graph, we obtain $(s,q) \in h(\graph_2)$. Thus
    we obtain $(s,q) \in h(\graph_1) \sop^{\algof{A}} h(\graph_2)$.

    \noindent ``$\supseteq$'' Let $(s,q) \in h(\graph_1)
    \sop^{\algof{A}} h(\graph_2)$ and assume that $q \neq \bot$ (the
    case $q = \bot$ is similar).  Then, there exists a rule $s
    \rightarrow p\sop s_1$ such that $p \leadsto m$, for some $m \in
    h_p(\graph_1)$, and $(s_1,q) \in h(\graph_2)$.  Then, there exists
    a view $m' = \prod_{s \in \mathcal{S}} \pexp{s}{k_s}$ of
    $\graph_1$, such that $\norm_p(m') = m$.  Hence, there exists a
    derivation:
      \begin{align*}
      p \step{{\grammar}}^* \pop_{s \in
          \mathcal{S}} \pexp{s}{k_s} \step{{\grammar}}^* v_1,
      \end{align*}
      for some ground term $v_1$, such that $\graph_1 =v_1^\algof{SP}$. Moreover, $(s_1,q) \in h(\graph_2)$, hence there exists a derivation $s_1
      \astep{{\grammar}} v_2 \sop w$, where $v_2$ is a ground term such
      that $\graph_2 = v_2^\algof{SP}$.
      We can build a derivation:
      \begin{align*}
        s \step{{\grammar}} p \sop s_1 \step{{\grammar}}^* (\pop_{s \in
          \mathcal{S}} \pexp{s}{k_s}) \sop s_1 \astep{{\grammar}} v_1 \sop v_2 \sop q.
      \end{align*}
      Then, $(v_1 \sop v_2)^\algof{SP} = \graph_1 \sop^\algof{SP}
      \graph_2$ and $(s,q) \in h(\graph_1 \sop^\algof{SP} \graph_2)$.
    \item $\graph_1$ is an S-graph and $\graph_2$ is a P-graph:
      ``$\subseteq$'' Let $(s,q) \in h(\graph_1 \sop^{\algof{SP}}
      \graph_2)$ and assume that $q \neq \bot$ (the case $q = \bot$ is
      similar).  Then, there exists a derivation $s \astep{\grammar}
      v \sop q$, where $v$ is a ground term such that $\graph_1
      \sop^{\algof{SP}} \graph_2 = v^\algof{SP}$. By
      Lemma~\ref{lemma:assoc-sp}, there exist derivations $s
      \astep{{\grammar}}^* v_1 \sop s_1$ and $s_1 \astep{{\grammar}}
      v_2 \sop q$, for some $s_1 \in \mathcal{S}$, such that
      $v_i^{\algof{SP}} = \graph_i$, for $i=1,2$.  Since $\graph_1$ is
      an S-graph, we obtain $(s,s_1) \in h(\graph_1)$.  Then, by
      Lemma~\ref{lemma:assoc-sp-start}, and because $\graph_2$ is
      $\sop$-atomic, there is a rule $s_1 \rightarrow p \sop q$ and a
      complete derivation $p \step{{\grammar}}^* v_2$ for some
      non-terminal $p \in \mathcal{P}$.  By a reordering, if
      necessary, we assume the latter derivation to have the
      form: \begin{align*} P \step{{\grammar}}^* ~\pop_{s \in
          \mathcal{S}} \pexp{s}{k_s} \step{{\grammar}}^* ~\pop_{s \in
          \mathcal{S},i \in \interv{1}{k_s}} v_2(s,i)
      \end{align*}
      for some $k_s \ge 0$, $s \in \mathcal{S}$, where $v_2(s,i)$ are
      ground terms and $\graph_2 = (\pop_{s \in \mathcal{S},i \in
        \interv{1}{k_s}} v_2(s,i))^\algof{SP}$. Then, $m = \prod_{s
        \in \mathcal{S}} \pexp{s}{k_s}$ is a view of $\graph_2$ such
      that $p \leadsto^* m$. By Lemma~\ref{lemma:leadsto}, we obtain
      $p \leadsto \norm_p(m) \in h_p(\graph_2)$. Thus, we obtain
      $(s,q) \in h(\graph_1) \sop^{\algof{A}} h(\graph_2)$.
      \noindent ``$\supseteq$'' Let $(s,q) \in h(\graph_1)
      \sop^{\algof{A}} h(\graph_2)$ and assume that $q \neq \bot$ (the
      case $q = \bot$ is similar).  Then, there exist $(s,s_1)\in
      h(\graph_1)$, a rule $s_1 \rightarrow p \sop q$ in ${\grammar}$
      and some $m \in h_p(\graph_2)$ such that $p \leadsto m$.  Then,
      there exists a view $m' = \prod_{p \in \mathcal{S}}
      \pexp{s}{k_s}$ of $\graph_2$, such that $\norm_p(m') = m$, and a
      derivation
      \begin{align*}
      p \step{{\grammar}}^* \pop_{p \in          \mathcal{S}} \pexp{s}{k_s} \step{{\grammar}}^* v_2,
      \end{align*}
      for some ground term $v_2$, such that $\graph_2 = v_2^\algof{SP}$.
      Moreover, there exists a derivation $s
      \astep{\grammar} v_1 \sop s_1$, where $v_1$ is a ground term such that $\graph_1 = v_1^\algof{SP}$.
      By Lemma
      \ref{lemma:assoc-sp}, there exists a derivation $s \astep{{\grammar}} v \sop q$ such that $v^{\algof{SP}} = \graph_1 \sop^{\algof{SP}} \graph_2$.
      Thus, we obtain $(s,q) \in h(\graph_1 \sop^\algof{SP} \graph_2)$.
    \item $\graph_1$ and $\graph_2$ are S-graphs: ``$\subseteq$'' Let
      $(s,q) \in h(\graph_1 \sop^{\algof{SP}}
      \graph_2)$ and assume that $q \neq \bot$ (the case $q = \bot$ is
      similar).
      Then, there exists a derivation $s \astep{\grammar} v \sop q$, where      $v$ is a ground term such that $\graph_1 \sop^{\algof{SP}} \graph_2 = v^\algof{SP}$.
      By Lemma~\ref{lemma:assoc-sp}, there
      exist derivations $s \astep{{\grammar}} v_1 \sop s_1$
      and $s_1 \astep{{\grammar}} v_2 \sop q$, where $s_1 \in \mathcal{S}$ and $v_i$ are ground terms such that $\graph_i = v_i^{\algof{SP}}$, for $i=1,2$.
      Hence, we obtain $(s,s_1) \in
      h(\graph_1)$ and $(s_1,q) \in
      h(\graph_2)$.
      Thus, $(s,q) \in h(\graph_1) \sop^{\algof{A}}
      h(\graph_2)$.

      \noindent''$\supseteq$''
      Let $(s,q) \in h(\graph_1) \sop^{\algof{A}} h(\graph_2)$ and assume that $q \neq \bot$
      (the case $q = \bot$ is similar).
      Then, there exist $(s,s_1) \in h(\graph_1)$ and $(s_1,q) \in h(\graph_2)$, and there are derivations
      $s \astep{\grammar} v_1 \sop s_1$ and $s_1 \astep{\grammar} v_2 \sop q$, where $v_i$ are ground terms such that $\graph_i = v_i^\algof{SP}$, for $i=1,2$.
      By Lemma~\ref{lemma:assoc-sp}, we  obtain a derivation $s \astep{\grammar} v \sop q$ such that
      $v^\algof{SP} = \graph_1 \sop^\algof{SP} \graph_2$, hence $(s,q)
      \in h(\graph_1 \sop^\algof{SP} \graph_2)$.
  \end{itemize}
\end{proofE}
Moreover, $\langof{}{\grammar}$ does not distinguish graphs with the
same image in $\algof{A}$:

\begin{lemmaE}\label{lemma:membership}
  SP-graphs having the same image via $h$ are indistinguishable by
  $\langof{}{\grammar}$.
\end{lemmaE}
\begin{proofE}
  We assume $h(\graph_1)=h(\graph_2)$ for some $\graph_1, \graph_2 \in
  \universeOf{SP}$ and prove that $\graph_1\in\langof{}{\grammar} \iff
  \graph_2\in\langof{}{\grammar}$. Because $h(\graph_1)=h(\graph_2)$,
  we have that $\graph_1$ and $\graph_2$ are either both P-graphs or
  both S-graphs. Assume that $\graph_1\in\langof{}{\grammar}$ (the
  other direction is symmetric). We prove that
  $\graph_2\in\langof{}{\grammar}$ by considering the following cases:

  \vspace*{.5\baselineskip}\noindent \underline{$\graph_1,\graph_2$ are P-graphs}:
  Since $\graph_1 \in \langof{}{\grammar}$, there exists a complete
  derivation $p \step{{\grammar}}^* v$, for some axiom $p \in \axioms \cap \mathcal{P}$
  such that $v^\algof{SP} = \graph_1$. The derivation can be reorganized w.l.o.g. as:
    \begin{align*}
      p \step{{\grammar}}^* ~\pop_{s \in \mathcal{S}} ~\pexp{s}{k_s}
      \step{{\grammar}}^* ~\pop_{s \in \mathcal{S}, i \in
        \interv{1}{k_s}} v_1(s,i)
    \end{align*}
    for some coefficients $k_s \ge 0$ and ground
    terms $v_1(s,i)$ such that $\graph_1 = (\pop_{s \in
      \mathcal{S},~ i \in \interv{1}{k_s}}
    v_1(s,i))^\algof{SP}$.
    Thus, $m_1 = \prod_{s \in \mathcal{S}} s^{k_s}$ is a view of  $\graph_1$, and $\norm_p(m_1) \in h_p(\graph_1)$.
    Because $h(\graph_1)=h(\graph_2)$, we have, in particular, that $h_p(\graph_1)=h_p(\graph_2)$.
    Thus, there exists a view $m_2 = \prod_{s \in \mathcal{S}} s^{l_s}$ of $\graph_2$ such that $\norm_p(m_1) =
    \norm_p(m_2)$.
    Moreover, $p \leadsto^*
    m_1$, hence $p \leadsto
    \norm_p(m_1) =
    \norm_p(m_2)$ and $p
    \leadsto^* m_2$, by Lemma~\ref{lemma:leadsto}. We obtain a derivation:
    \begin{align*}
      p \step{{\grammar}}^* ~\pop_{s \in \mathcal{S}} ~\pexp{s}{l_s}
      \step{{\grammar}}^* ~\pop_{s \in \mathcal{S}, i \in
      \interv{1}{l_s}} v_2(s,i)
    \end{align*}
    where $v_2(s,i)$ are ground terms, such that $\graph_2 = (\pop_{s \in \mathcal{S},~ i \in \interv{1}{l_s}}
    v_2(s,i))^\algof{SP}$.
    Since $\rightarrow p$ is an axiom of
    $\grammar$, we obtain that $\graph_2 \in
    \langof{}{\grammar}$.

    \vspace*{.5\baselineskip}\noindent \underline{$\graph_1,\graph_2$
      are S-graphs}:
    We distinguish two cases: \begin{itemize}
    \item $\graph_1$ is not atomic. Since $\graph_1
      \in\langof{}{\grammar}$, there exists a complete derivation $s
      \step{{\grammar}}^* v_1$, such that $\graph_1 = v_1^\algof{SP}$,
      for some axiom $s \in \axioms \cap \mathcal{S}$.  Then,
      $(s,\bot) \in h(\graph_1) = h(\graph_2)$, hence there exists a
      complete derivation $s \step{{\grammar}}^* v_2$ such that
      $v_2^\algof{SP}=\graph_2$, leading to $\graph_2 \in
      \langof{}{\grammar}$.
    \item If $\graph_1 = a^\algof{SP}$ is a bridge, for some $a \in
      \alphabet$ then, because $\graph_1 \in\langof{}{\grammar}$ and
      $\grammar$ is an alternative grammar, there exists a derivation
      $s \step{\grammar} a$, for some axiom $s \in \axioms \cap
      \mathcal{S}$. The proof proceeds as in the previous case.
    \end{itemize}
\end{proofE}

The main result of this section is a consequence of Lemmas
\ref{lemma:membership} and \ref{lemma:homomorphism} above:

\begin{theorem}\label{thm:reg-rec}
  The language of each regular grammar $\grammar$ is recognized by the
  finite algebra $\algof{A}$ with accepting set $F \isdef
  h(\langof{}{\grammar})$.
\end{theorem}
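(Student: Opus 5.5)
The plan is to instantiate Definition~\ref{def:rec} with the algebra $\algof{A}$ and the map $h$ constructed above, which leaves three things to check: $h$ is a homomorphism, $\algof{A}$ is finite, and $\langof{}{\grammar} = h^{-1}(F)$ for $F \isdef h(\langof{}{\grammar})$. First, by Proposition~\ref{prop:sp-normal-form} and the transformation into an alternative grammar described above, we may assume that $\grammar$ is an alternative grammar in normal form; both transformations preserve the language. The homomorphism property is exactly Lemma~\ref{lemma:homomorphism}.

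Finiteness of $\universeOf{A}$ is argued as follows.
\begin{itemize}
\item Each S-profile is a subset of $\mathcal{S} \times (\mathcal{S} \uplus \mathcal{P} \uplus \set{\bot})$, which is a finite set.
\item Each P-profile is a tuple $\tuple{t_p}_{p \in \mathcal{P}}$, and each $t_p$ is in normal form in $\termalg{p}$.
\item The set $\set{\norm_p(t) \mid t \in \termuniv{p}}$ is finite, as observed after the axioms \boundAx, \periodAx{} and \threshAx; Lemma~\ref{lemma:match} gives an explicit bound.
\end{itemize}
Hence $\universeOf{A} = \universeOf{A}^P \uplus \universeOf{A}^S$ is finite.

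For the equality of languages, the inclusion $\langof{}{\grammar} \subseteq h^{-1}(h(\langof{}{\grammar}))$ holds trivially. For the converse, let $h(\graph) \in F$. Then $h(\graph) = h(\graph')$ for some $\graph' \in \langof{}{\grammar}$. By Lemma~\ref{lemma:membership}, graphs with the same image under $h$ are indistinguishable by $\langof{}{\grammar}$, so $\graph \in \langof{}{\grammar}$.

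Since all the substantive work sits in Lemmas~\ref{lemma:homomorphism} and~\ref{lemma:membership}, there is no real obstacle at this level. The only point needing care is bookkeeping: $F$ must be a subset of $\universeOf{A}$, which holds because $\universeOf{A}$ was defined as the image of $h$. This also makes the recognizing condition an equality rather than only an inclusion. An effective description of $F$ is not needed here.
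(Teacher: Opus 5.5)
Your proposal is correct and follows the paper's proof. The homomorphism property is Lemma~\ref{lemma:homomorphism}, and the nontrivial inclusion $h^{-1}(F) \subseteq \langof{}{\grammar}$ follows from Lemma~\ref{lemma:membership} exactly as you argue; the only addition is your explicit check that $\universeOf{A}$ is finite (via finiteness of normal forms), which the paper takes for granted in the statement and quantifies in Section~\ref{sec:complexity}.
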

\begin{proof}
  Since $h$ is a homomorphism between $\algof{SP}$ and $\algof{A}$, by
  Lemma \ref{lemma:homomorphism}, it remains to prove that
  $\langof{}{\grammar} = h^{-1}(F)$. ``$\subseteq$'' This direction is
  trivial, by the definition of $F$. ``$\supseteq$'' For each SP-graph
  $\graph \in h^{-1}(F)$ there exists $\graph' \in
  \langof{}{\grammar}$ such that $h(\graph) = h(\graph')$. By Lemma
  \ref{lemma:membership}, we obtain $\graph \in \langof{}{\grammar}$.
\end{proof}

We also need the following constructive definition of the accepting
set of the $\algof{A}$ recognizer built in the previous:

\begin{lemmaE}\label{lemma:accepting}
  The accepting set $F$ consists of the elements $a \in
  \universeOf{A}$, for which either one of the following
  holds: \begin{enumerate}
  \item\label{it1:lemma:accepting} if $a =
    \tuple{t_p}_{p\in\mathcal{P}} \in F \cap \universeOf{A}^P$ then
    there exists an axiom $p \in \mathcal{P} \cap \axioms$ and a
    nonempty monomial $m \in t_p$ such that $p \leadsto m$.
  \item\label{it2:lemma:accepting} if $a \in F \cap \universeOf{A}^S$
    then $(s,\bot) \in a$, for an axiom $s \in \mathcal{S} \cap
    \axioms$.
  \end{enumerate}
\end{lemmaE}
\begin{proofE}
  (\ref{it1:lemma:accepting}) Let $a =
  \tuple{t_p}_{p\in\mathcal{P}}\in \universeOf{A}^P \cap F$ and
  $\graph$ be a P-graph such that $h(\graph) = a$. Since $F =
  h(\langof{}{\grammar})$, by definition, we obtain $\graph \in
  \langof{}{\grammar}$.  Since, moreover, $\graph$ is a P-graph, there
  exists an axiom $p \in \mathcal{P} \cap \axioms$ such that $p
  \step{\grammar}^* v$ is a complete derivation and $v^\algof{SP} =
  \graph$. This derivation can be w.l.o.g. considered to be of the
  form:
  \begin{align*}
    p \step{\grammar}^* \Pop_{s \in \mathcal{S}} \pexp{s}{k_s}
    \step{\grammar}^* \Pop_{s \in \mathcal{S}} v(s,i) = t
  \end{align*}
  where each $v(s,i)$ is a ground term, for $s \in \mathcal{S}$ and $1
  \leq i \leq k_s$. Then, the nonempty monomial $m \isdef
  \prod_{s\in\mathcal{S}} s^{k_s}$ is view of $\graph$ and $\norm_p(m)
  \in t_p$ (Definition \ref{def:p-view}). Moreover, $p \leadsto^* m$
  hence, by Lemma \ref{lemma:leadsto}, we obtain $p \leadsto
  \norm_p(m)$.

  \vspace*{.5\baselineskip}
  \noindent(\ref{it2:lemma:accepting}) Let $a \in \universeOf{A}^S
  \cap F$ and $\graph$ be a S-graph such that $h(\graph) = a$. Since
  $F = h(\langof{}{\grammar})$, we obtain $\graph \in
  \langof{}{\grammar}$. Since, moreover, $S$ is an S-graph, there
  exists an axiom $s \in \mathcal{S} \cap \axioms$ such that $s
  \step{\grammar}^* v$ is a complete derivation and $v^\algof{SP} =
  \graph$. Then $(s,\bot) \in a$ is a view of $\graph$ (Definition
  \ref{def:s-view}).
\end{proofE}

The next proposition gives a lower bound on the cardinality of the
minimal recognizer for a regular grammar, which improves on the
standard $2^n$ lower bound of the recognizers for the language of a
nondeterministic word or tree automaton~\cite[Theorem
  1.3.1]{comon:hal-03367725}. The worst-case example has been adapted
to our setting from~\cite[Theorem 3.4]{10.1145/1516512.1516518}, see
also~\cite[Theorem 5.2]{10.1145/3571230}:

\begin{propositionE}\label{prop:blowup}
  There exists a regular grammar $\grammar$ having $\bigO(n)$
  nonterminals and $\bigO(n \log_2 n)$ rules, such that each
  recognizer for $\langof{}{\grammar}$ requires at least $2^{n^2}$
  elements.
\end{propositionE}
\begin{proofE}
  Consider the alphabet $\alphabet=\set{a,b,c,\$,\#}$. Each word $w
  \in \alphabet^*$ is encoded by the edge labels of a path
  $\graph_w$. Given an integer $k\geq2$ and two words
  $u,v\in\set{a,b}^k$, we consider the set $\langu^k_{u,v}$ of words
  $w = w_1 \ldots w_n$ such that $w_i \in \set{a,b}^* \$ \set{a,b}^*
  \#$, for each $i \in \interv{1}{n}$, and $u \$ v \# = w_j$, for some
  $j \in \interv{1}{n}$. Let $\graphs^k_{u,v} \isdef \set{\graph_w
    \mid w \in \langu^k_{u,v}}$ be the set of corresponding line
  graphs and define the set of SP-graphs $\graphs^k \isdef
  \bigcup_{u,v\in\set{a,b}^k} \left(c \pop (\graphs^k_{u,v} \sop
  v)\right) \sop u$. The structure of the graphs from $\graphs^k$ is
  illustrated in Figure \ref{fig:cex} (a).

  First, one checks that $\graphs^k$ is the language of the regular
  grammar $\grammar^k = (\mathcal{P} \uplus \mathcal{S}, \rules,
  \axioms)$, where:
  \begin{align*}
    \mathcal{P} \isdef & ~\set{p_w \mid w \in \set{a,b}^k} \cup \set{\overline{p}_\alpha \mid \alpha \in \alphabet} \\
    \mathcal{S} \isdef & ~\set{s^i_w \mid w \in \set{a,b}^\ell,~ 0 \leq \ell \leq k,~ 0 \leq i \leq 5} \\
    \axioms \isdef & ~\set{s_0}
  \end{align*}
  and $\rules$ consists of the rules described in Figure \ref{fig:cex}
  (b) and (c). Assuming that $w \isdef w_1 \ldots w_m \in \set{a,b,\$,\#}^m$, for some $m\geq1$, the notation $s^0_{w} \arrow{w}{}^*
  s^2_\epsilon$ stands for the rules: \begin{align*}
    s^0_{w_\ell \ldots w_m} \rightarrow & ~\overline{p}_{w_\ell} \sop s^0_{w_{\ell+1} \ldots w_m} \text{, for all } 1 \leq \ell < m \\
    s^0_{w_m} \rightarrow & ~\overline{p}_{w_m} \sop s^2_\epsilon \\
    \overline{p}_\alpha \rightarrow & ~\alpha \text{, for all } \alpha \in \set{a,b,\$,\#}
  \end{align*}
  Similarly, $s^2_\epsilon \arrow{w}{}^* s^3_{w}$ stands for the rules: \begin{align*}
    s^2_\epsilon \rightarrow & ~\overline{p}_{w_1} \sop s^3_{w_1} \\
    s^3_{w_1 \ldots w_\ell} \rightarrow & ~\overline{p}_{w_{\ell+1}} \sop s^3_{w_1 \ldots w_{\ell+1}} \text{, for all } 1 \leq \ell < m
  \end{align*}
  Finally, $s^3_{w} \arrow{w}{}^*$ means: \begin{align*}
    s^3_{w_1 \ldots w_m} \rightarrow & ~\overline{p}_{w_1} \sop s^5_{w_2 \ldots w_m} \\
    s^5_{w_\ell \ldots w_m} \rightarrow & ~\overline{p}_{w_\ell} \sop s^5_{w_{\ell+1} \ldots w_m} \text{, for all } 2 \leq \ell < m-1 \\
    s^5_{w_{m-1}w_m} \rightarrow & ~\overline{p}_{w_{m-1}} \sop \overline{p}_{w_m}
  \end{align*}
  The edges $s \arrow{\alpha}{} s'$ stand for the rules $s \rightarrow
  \overline{p}_\alpha \sop s'$, respectively. We note that
  $\grammar^k$ has $\bigO(2^k)$ nonterminals, because there are $n
  \isdef 2^{k+1}-1$ words $u,v \in \bigcup_{0 \leq \ell \leq k}
  \set{a,b}^\ell$. Since there are $\bigO(n)$ meta-rules of the form
  $s \rightarrow^* u$ or $s \rightarrow^* u \sop s'$, that generate at
  most $k$ rules each, the grammar $\grammar^k$ has $\bigO(k \cdot
  2^k) = \bigO(n \cdot \log_2 n)$ rules.

\begin{figure}[t!]
  \centerline{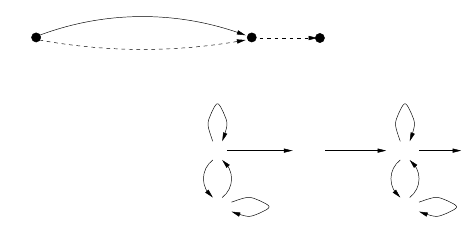}
  \caption{The structure of rules from $\grammar^k$}
  \label{fig:cex}
\end{figure}

  Second, suppose, for a contradiction, that the minimal recognizer
  for $\langof{}{\grammar^k}$ has less than $2^{{(2^k)}^2}$ elements. By
  the pigeonhole principle, there exist two relations $R_i \isdef
  \set{(u^i_j,v^i_j)}_{1 \leq j \leq \ell_i} \subseteq \set{a,b}^k
  \times \set{a,b}^k$, for $i=1,2$, such that the line graphs $\graph_{w_1}$ and
  $\graph_{w_2}$ are mapped, by the homomorphism from Definition
  \ref{def:rec}, into the same element of the recognizer, where $w_i
  \isdef u^i_1 \$ v^i_1 \# \ldots u^i_{k_i} \$ v^i_{\ell_i} \#$, for
  $i=1,2$. This is because there are $2^{{(2^k)}^2} = 2^{n^2}$
  relations on words from $\set{a,b}^k$. Let $1
  \leq i \leq \min(\ell_1,\ell_2)$ be the first index such that
  $(u^1_i,v^1_i) \neq (u^2_i,v^2_i)$. But then, the graphs $\left[c
    \pop (\graph_{w_1} \sop \graph_{v^1_{i}})\right] \sop \graph_{u^1_{i}}
  \in \langof{}{\grammar^k}$ and $\left[c \pop (\graph_{w_2} \sop
    \graph_{v^1_i})\right] \sop \graph_{u^1_i} \not\in
  \langof{}{\grammar^k}$ are mapped into the same element of the
  recognizer, contradiction.
\end{proofE}

\section{An Exponential Bound on the Cardinality of the Recognizer}
\label{sec:complexity}

We establish a $2^{\bigO(n^9)}$ upper bound on the cardinality of the
recognizer $\algof{A}$ built from the regular grammar $\grammar$,
where $n$ is the size of $\grammar$ (Corollary \ref{cor:main}). This
upper bound is (almost) optimal, since a $2^{\Omega(n^2)}$ blowup is
unavoidable (Proposition \ref{prop:blowup}).

Because the domain of $\algof{A}$ is partitioned into S-profiles
($\universeOf{A}^S$) and P-profiles ($\universeOf{A}^P$), we analyse
the two cases separately. Note first that there are at most
$\cardof{\mathcal{S}} \cdot
(\cardof{\mathcal{P}}+\cardof{\mathcal{S}}+1)$ pairs of the form
$(s,q)\in\mathcal{S}\times(\mathcal{P}\uplus\mathcal{S}\uplus\set{\bot})$
in $\universeOf{A}^S$, hence $\cardof{\universeOf{A}^S}$ is
exponential in the size of $\grammar$ (Definition
\ref{def:s-view}). In the rest of this section, we prove that
$\cardof{\universeOf{A}^P}$ is also exponential in the size of
$\grammar$, which provides an exponential bound for
$\cardof{\universeOf{A}} = \cardof{\universeOf{A}^S} +
\cardof{\universeOf{A}^P}$.

For the purpose of proving that $\cardof{\universeOf{A}^P}$ is
exponentially bounded, a key observation is that $\universeOf{A}^P$
consists of tuples of normal forms of linear products\footnote{A
  linear product is a term of the form $(s_{11} + \ldots + s_{1n_1})
  \cdot \ldots \cdot (s_{k1} + \ldots + s_{kn_k})$, where $s_{ij} \in
  \mathcal{S}$.} over variables from $\mathcal{S}$. Indeed, this is
the case because each $\tuple{t_p}_{p \in \mathcal{P}} \in
\universeOf{A}^P$ is the profile $h(\graph) =
\tuple{h_p(\graph)}_{p\in\mathcal{P}}$ of a P-graph $\graph$, which is
the parallel composition of at least two S-graphs, i.e., $\graph =
\graph_1 \pop \ldots \pop \graph_n$, $n\geq2$. This means $h_p(\graph)
= \norm_p(\parmap(h(\graph_1))_p \cdot \ldots \cdot
\parmap(h(\graph_n))_p)$, for each $p \in \mathcal{P}$. Because each
$\graph_i$ is an S-graph, we have $h(\graph_i) \in \universeOf{A}^S$
hence each term $\parmap(h(\graph_i))_p$ is a sum of variables from
$\mathcal{S}$, by the definition of $\parmap$. Then, $\parmap(h(\graph_1))_p \cdot \ldots \cdot
\parmap(h(\graph_n))_p$ is a linear product and $t_p=h_p(\graph)$ is its
normal form, for each $p\in\mathcal{P}$. Consequently,
$\cardof{\universeOf{A}^P}$ is at most the number of linear products
over $\mathcal{S}$ that have distinct normal forms, at power $\cardof{P}$.


\subsection{Counting Products of Linear Terms}

Let us fix a non-terminal $p\in\mathcal{P}$ and let $\nopump{}$ and
$\pump{}$ denote the sets of bounded and periodic variables associated
to $p$, respectively. Since $p$ is fixed, we omit mentioning it in the
following. We also consider a set $\ThresholdVars$ of threshold
variables. For simplicity, in this context, we assume that
$\mathcal{P}=\set{p}$ and $\SVars = \nopump{} \uplus \pump{} \uplus
\ThresholdVars$.

The term algebra corresponding to $p$ is denoted $\termalg{} =
(\termuniv{},+,\cdot,0,1)$. We recall that the domain $\termuniv{}
\isdef \termsof{\signature_\algof{T}}{\SVars}$ of the term algebra is
the set of terms built from $\SVars$ using the operations $+$ and
$\cdot$ (\autoref{sec:termalg}). We assume w.l.o.g. that each variable
$\svar \in \pump{}$ has period $\period{s}{} \ge 2$: if
$\period{\svar}{} = 1$ for some variable $\svar$, then $\norm(\svar) =
1$, by the \text{\periodAx}~ axiom, hence $\svar$ either does not
occur in the normal form of a linear term, or yields the empty
monomial $1$.

The goal of this subsection is to provide an exponential bound on the
number of terms in normal form $\norm(t)$\footnote{We also write
  $\norm(t)$ instead of $\norm_p(t)$ because the nonterminal $p \in
  \mathcal{P}$ is fixed.}, where $t$ is a linear product of bounded
and periodic variables. This is achieved by providing a polynomial
cut-off bound on the length (i.e., number of factors) of the linear
products having the same normal form. The existence of a cut-off $c$
means that each product longer than $c$ has the same normal form as a
product shorter than $c$. The cut-off allows to bound the number of
distinct linear product terms by an exponential, since the number of
linear products of length at most $c$ is exponential in $c$.

The polynomial cut-off is obtained by considering progressively the
cases of linear factors consisting of (i) bounded variables only
(Lemma \ref{lemma:complexity:bounded-only}), (ii) threshold variables
only (Lemmas \ref{lemma:complexity:2-threshold-only} and
\ref{lemma:complexity:threshold-only}), (iii) bounded and periodic
variables containing the $1$ monomial (Lemma
\ref{lemma:complexity:1-bounded-periodic}), (iv) bounded and periodic
variables containing a distinguished periodic variable (Lemmas
\ref{lemma:complexity:s0-bounded-periodic-basic} and
\ref{lemma:complexity:s0-bounded-periodic}) and (v) the general case
of arbitrary bounded and periodic variables (Lemma
\ref{lemma:complexity:summary}). We recall that the threshold
variables do not occur in the input grammar $\grammar$, being
introduced to support the intermediate steps of the proof.

%
%

For each set $\svarset \subseteq \SVars$ of nonterminals, we define
the following measure: \begin{align}\label{eq:measure-one}
  \weightedcardof{\svarset} \isdef \hspace*{-2mm} \sum_{\svar \in
    \svarset\cap\nopump{}} (\base{\svar}{}-1) + \hspace*{-2mm}
  \sum_{\svar\in\svarset \cap \pump{}} \hspace*{-2mm}
  (\period{\svar}{}-1) + \hspace*{-2mm} \sum_{\svar\in \svarset \cap
    \ThresholdVars{}} \hspace*{-2mm} (\threshold{\svar}-1)
\end{align}
We consider the following sets of linear terms and products, for $V
\subseteq \SVars$ and $U \subseteq \termuniv{}$:
\begin{align*}
\hspace*{-5mm}\Sums{V} \isdef & ~\Set{\sum\nolimits_{\svar \in W} \svar ~\mid~
  \emptyset \not= W \subseteq V} \hspace*{5mm}
\Prods{U} \isdef \set{u_1 \ldots u_k ~\mid~ u_i \in U,~ i \in
  \interv{1}{k},~k\geq0}
\end{align*}
The length of a product $t=u_1 \ldots u_n$ of linear terms $u_1,
\ldots, u_n$ is denoted $\lengthof{t} \isdef n$. The below lemma shows
that linear products over bounded variables only, that are longer than
$\weightedcardof{\nopump{}}$ have normal form $0$, which provides a
trivial cut-off on their length:

\begin{lemmaE}\label{lemma:complexity:bounded-only}
  For each linear product $t \in \Prods{\Sums{\nopump{}}}$ of length
  $\lengthof{t} > \weightedcardof{\nopump{}}$, we have $\norm(t)=0$.
\end{lemmaE}
\begin{proofE}
  Let $m$ be some unreduced monomial occurring in the product $t$.
  Observe that $m$ contains as many variables as $\lengthof{t}$.  If
  $\lengthof{t} > \weightedcardof{\nopump{}}$ then, by using the
  pigeonhole principle, at least one of the variables $\svar$ occurs
  at least as many times as his bound $\base{\svar}{}$.  Hence,
  $\norm(m) = 0$ and as $m$ was chosen arbitrarily, we conclude
  $\norm(t) = 0$.
\end{proofE}

%
%

Before considering the other cases, we need further notations. Given
reduced monomials $m_1$, $m_2$ we define $m_1 \lemon m_2$ iff
$\varsof{m_1} \subseteq \varsof{m_2}$ and $\deg(m_1,\svar) \le
deg(m_2,\svar)$ for all $\svar \in \varsof{m_1}$.  We also write $m_1
\ltmon m_2$ iff $m_1 \lemon m_2$ and $m_1 \not= m_2$.  For a term
$\norm(t)$ in normal form we denote by $\sup( \norm(t))$ the subterm
of $\norm(t)$ containing only the maximal monomials with respect to
$\lemon$. We also define $\maxdeg{\norm(t}) \isdef \max_{m \in
  \norm(t)} \deg(m)$, i.e., the maximal degree of monomials $m$
occurring in $\norm(t)$.

The following two lemmas give the cut-off on the length of linear
products over threshold variables
only. Lemma~\ref{lemma:complexity:2-threshold-only} gives an auxiliary
result concerning variables of threshold $2$, called $2$-threshold in
the following, namely that the maximal (i.e., w.r.t. $\lemon$)
monomials of a linear product using only $2$-threshold variables are
exactly its monomials of maximal degree. Most of its proof shows that
a linear product $t$ of $2$-threshold variables enjoys a similar
exchange property\footnote{For each $m,n \in \norm(t)$ such that
  $\deg(m) < \deg(n)$ there exists $z \in \varsof{n} \setminus
  \varsof{m}$ such that $m \cdot z \in \norm(t)$.} as matroids,
guaranteeing that maximal bases have equal
cardinalities~\cite{matroids}.

\begin{lemmaE}\label{lemma:complexity:2-threshold-only}
  Assume that $\threshold{\svar}=2$ for each $\svar\in
  \ThresholdVars$. Then, for each linear product
  $t\in\Prods{\Sums{\ThresholdVars}}$, we have \(\sup(\norm(t)) = \sum
  \set{ m \in \norm(t) ~\mid~ \deg(m) = \maxdeg(\norm(t))}\).
\end{lemmaE}
\begin{proofE}
  As all threshold variables have threshold 2, note that every
  variable occur at most once (that is, at power 1) in a reduced
  monomial in $\norm(t)$.  Moreover, we prove that the monomials in
  $\norm(t)$ satisfy the following \emph{exchange property}:

  \begin{quote}
    For any two monomials $m,n \in \norm(t)$ such that $\deg(m) <
    \deg(n)$ there exists a variable $z \in \varsof{n} \setminus
    \varsof{m}$ such that $m \cdot z \in \norm(t)$.
  \end{quote}
  Note that this is sufficient to establish the conclusion of the lemma, as
  follows. ``$\subseteq$'' Let $m \in \sup(\norm(t))$ be a monomial
  and suppose, for a contradiction, that $\deg(m) <
  \maxdeg{\norm(t)}$. Then, there exists a monomial $n \in \norm(t)$
  such that $\deg(m) < \deg(n)$. By the exchange property above,
  there exists a variable $z \in \varsof{n} \setminus \varsof{m}$ such
  that $m \cdot z \in \norm(t)$. We obtain $m \ltmon m\cdot z$, which contradicts
  the choice of $m \in \sup(\norm(t))$. ``$\supseteq$'' Let $m \in
  \norm(t)$ such that $\deg(m) = \maxdeg(\norm(t))$ and suppose,
  for a contradiction, that there exists $n \in \norm(t)$ such that $m
  \ltmon n$. Then, we obtain $\deg(m) < \deg(n)$, which contradicts
  $\deg(m) = \maxdeg(\norm(t))$.

  To prove the above exchange property, we consider arbitrary
  monomials $m,n \in \norm(t)$, where $\deg(m) < \deg(n)$, together
  with their respective injective witness functions $f_m : \varsof{m}
  \rightarrow \interv{1}{k}$ and $f_n : \varsof{n} \rightarrow
  \interv{1}{k}$, such that $y \in \ell_{f_m(y)}$ for all $y \in
  \varsof{m}$ and $x \in \ell_{f_n(x)}$ for all $x \in \varsof{n}$. In
  other words, the witness function of a monomial $m \in \norm(t) =
  \norm(\ell_1 \cdot ... \cdot \ell_k)$ describes from which (unique)
  linear term of the product each variable of $m$ has been taken. Note
  that the witness function is an injective choice function (i.e., the
  same variable can appear in several linear terms), where distinct
  variables must be associated to distinct linear terms. Consider now
  the partial function $\pi : \varsof{n} \rightharpoonup \varsof{m}$
  defined as follows:

  \begin{figure}[ht]
    \begin{center}
      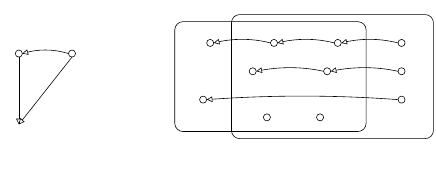
      \caption{\label{fig:exchange} The definition (a) and
        behavior (b) of the $\pi$ mapping.}
    \end{center}
  \end{figure}

  \[\pi(x) \isdef \left\{
  \begin{array}{rl}
    y & \text{if } f_n(x) = f_m(y), \mbox{ for some } y \in \varsof{m}
    \\ \bot & \text{otherwise} \end{array}
  \right.\]
  The definition of $\pi$ is illustrated in Figure~\ref{fig:exchange}
  (a). First of all, observe that $\pi(x)$ is consistently defined for
  all $x \in \varsof{n}$, because for distinct variables $y_1$, $y_2$
  in $\varsof{m}$ we have $f_m(y_1) \not= f_m(y_2)$ due to injectivity
  of $f_m$.  Second, observe that $\pi$ is injective (when restricted
  to its domain) because $\pi(x_1) = \pi(x_2) = y$ implies $f_n(x_1) =
  f_n(x_2) = f_m(y)$ and hence $x_1 = x_2$ due to the injectivity of
  $f_n$. Consider now the variables in $\varsof{n} \setminus
  \varsof{m}$, as in Figure~\ref{fig:exchange} (b). We distinguish two
  cases:
  \begin{itemize}
  \item There exists $z \in \varsof{n} \setminus \varsof{m}$ such that
    $\pi(z) = \bot$.  That means, $z$ is mapped by $f_n$ to some index
    which does not occur in the image of $f_m$. In this case, we
    obtain $m \cdot z\in\norm(t)$, witnessed by the injective function
    $f_m \cup \set{z \mapsto f_n(z)}$.
  \item For all $z \in \varsof{n} \setminus \varsof{m}$ we have
    $\pi(z) \not= \bot$.  Then, for each such $z$ there exists a
    non-empty \emph{maximal chain} of variables $y_0, y_1, ..., y_p$
    following the application of $\pi$, that is such that (i) $y_0 = z
    \in \varsof{n} \setminus \varsof{m}$, $y_i \in \varsof{n} \cap
    \varsof{m}$ for all $i \in \interv{1}{p-1}$, $y_p \in \varsof{m}$,
    and (ii) $\pi(y_{i-1}) = y_i$ for all $i\in\interv{1}{p}$,
    $\pi(y_p) = \bot$. Note that due to the injectivity of $\pi$, all
    these maximal chains constructed from distinct variables in
    $\varsof{n} \setminus \varsof{m}$ go through pairwise disjoint
    sets of variables. We distinguish now two
    sub-cases: \begin{itemize}
    \item there exists a maximal chain $z, y_1, ..., y_p$ ending
      in $\varsof{m} \cap \varsof{n}$.  Then, $m\cdot z \in \norm(t)$,
      witnessed by the injective function $f_m[ y_i \mapsto f_n(y_i)]_{i=1,p}
      \cup \set{z \mapsto f_n(z)}$.
    \item all maximal $\pi$-chains end at variables $\varsof{m} \setminus
      \varsof{n}$. This means that all the variables from $\varsof{n}
      \setminus \varsof{m}$ are uniquely mapped by these maximal chains
      to variables from $\varsof{m} \setminus \varsof{n}$.
      Consequently, we must have $\cardof{\varsof{m} \setminus
        \varsof{n}} \ge \cardof{\varsof{n} \setminus \varsof{m}}$
      which entails $\cardof{\varsof{m}} \ge \cardof{\varsof{n}}$,
      thus contradicting $\lenof{m} < \lenof{n}$.
    \end{itemize}
  \end{itemize}
\end{proofE}

\begin{example}\label{ex:complexity:2-threshold-only}
  Consider threshold variables $\svar_1,\ldots,\svar_4 \in
  \ThresholdVars$ such that $\threshold{\svar_i}=2$, for all
  $i\in\interv{1}{4}$.  Let $t \isdef (\svar_1 + \svar_2) (\svar_2 +
  \svar_3) (\svar_3 + \svar_4)$. We compute: \begin{align*} \norm(t) =
    & ~\svar_1 \svar_2 \svar_3 + \svar_1 \svar_2 \svar_4 + \svar_1
    \svar_3 + \svar_1 \svar_3 \svar_4 + \svar_2 \svar_3 + \svar_2
    \svar_3 \svar_4 \\ \sup(\norm(t)) = & ~\svar_1 \svar_2 \svar_3 +
    \svar_1 \svar_2 \svar_4 + \svar_1 \svar_3 \svar_4 + \svar_2
    \svar_3 \svar_4
  \end{align*}
  and observe that $\maxdeg(\norm(t)) = 3$ and $\sup(\norm(t)) = \sum
  \{m \in \norm(t) ~|~ \deg(m) = 3\}$.
\end{example}

%
%

For two products $t_1,t_2 \in \Prods{U}$, we write $t_1 \leprod t_2$
if $t_1$ is a subproduct of $t_2$, i.e., $t_1$ is the product of a
subset of the factors of $t_2$. Note that $\leprod$ agrees with
$\lemon$ over reduced monomials. We write $t_1 \ltprod t_2$ whenever
$t_1 \leprod t_2$ and $t_1 \not= t_2$ and define $\leinterv{t_1}{t_2}
\isdef \set{t ~\mid~ t_1 \leprod t \leprod t_2}$. With these
notations, Lemma~\ref{lemma:complexity:threshold-only} generalizes
Lemma~\ref{lemma:complexity:2-threshold-only} from $2$ to arbitrary
thresholds and provides a cut-off bound on the linear products
consisting of threshold variables only:

\begin{lemmaE}\label{lemma:complexity:threshold-only}
  For each linear product $t\in\Prods{\Sums{\ThresholdVars}}$, the
  following hold:
  \begin{enumerate}[(i)]
  \item\label{it1:lemma:complexity:threshold-only} $\sup(\norm(t)) =
    \sum \set{ m \in \norm(t) ~\mid~ \deg(m) = \maxdeg(\norm(t))}$.
  \item\label{it2:lemma:complexity:threshold-only} $\sup(\norm(t)) =
    \sup(\norm(t \cdot \ell))$, for each linear term $\ell \in
    \Sums{\ThresholdVars}$ such that $ \maxdeg(\norm(t)) = \maxdeg(\norm(t
    \cdot \ell))$.
  \item\label{it3:lemma:complexity:threshold-only} if $\lengthof{t} >
    \weightedcardof{\ThresholdVars}$ then there exists $t' \ltprod t$,
    such that $\sup(\norm(t)) = \sup(\norm(t''))$, for all $t'' \in
    \leinterv{t'}{t}$.
  \end{enumerate}
\end{lemmaE}
\begin{proofE}
  \noindent(\ref{it1:lemma:complexity:threshold-only}) This is a
  consequence of Lemma~\ref{lemma:complexity:2-threshold-only}. We
  define a set $\SVars^\sharp$ of fresh \emph{2-threshold}
  variables: \[\SVars^\sharp \isdef \set{ \svar^\sharp_i ~\mid~
    \svar\in\ThresholdVars{}, i\in \interv{1}{\threshold{s} - 1}}\]
  satisfying $\threshold{\svar^\sharp_i} = 2$, for all $\svar^\sharp_i
  \in \SVars^\sharp$. Note that $\weightedcardof{\ThresholdVars} =
  \weightedcardof{\SVars^\sharp}$.  We define substitutions $\sigma:
  \ThresholdVars \rightarrow \Sums{\SVars^\sharp}$ and $\sigma^{(-1)}:
  \SVars^\sharp \rightarrow \ThresholdVars$ by taking respectively
  $\sigma(\svar) \isdef \svar^\sharp_1 + ... +
  \svar^\sharp_{\threshold{s} - 1}$, $\sigma^{(-1)}(\svar^\sharp_i)
  \isdef \svar$, for all $\svar\in \ThresholdVars$, for all $i \in
  \interv{1}{\threshold{\svar}}$. Using these notations, we will show
  that computations illustrated in Fig.~\ref{fig:threshold} commute,
  that is, formally:
  \begin{equation}\label{eqn:complexity:threshold-only}
    \sup(\norm(t)) = \norm(
      \sigma^{(-1)}(\sup(\norm(\sigma(t)))))
  \end{equation}
  for each $t\in\Prods{\Sums{\ThresholdVars}}$.

  \begin{figure}[ht]
    \begin{center}
      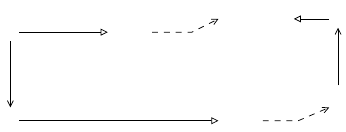
      \caption{\label{fig:threshold} Linear products of $2$-threshold variables}
    \end{center}
  \end{figure}

  Using property~(\ref{eqn:complexity:threshold-only}) we complete the
  proof of (\ref{it1:lemma:complexity:threshold-only}) as follows.
  First, observe that Lemma~\ref{lemma:complexity:2-threshold-only}
  applies for $t^\sharp \isdef \sigma(t)$, hence,
  $\sup(\norm(t^\sharp)) = \sum_{m^\sharp \in \norm(t^\sharp)} \set{
    m^\sharp ~\mid~ \deg(m^\sharp) = \maxdeg(\norm(t^\sharp))}$.
  Then, we get from (\ref{eqn:complexity:threshold-only}) that:
  \begin{align*}
    & \sup(\norm(t)) ~= \\
    & \norm(\sigma^{(-1)}(\sup(\norm(t^\sharp)))) ~= \\
    & \norm(\sigma^{(-1)}(\sum_{m^\sharp \in \norm(t^\sharp)} \set{ m^\sharp ~\mid~ \deg(m^\sharp) = \maxdeg(\norm(t^\sharp))})) ~= \\
    & \norm\left(\sum_{m^\sharp \in \norm(t^\sharp)} \set {\sigma^{(-1)}(m^\sharp) ~\mid~ \deg(m^\sharp) = \maxdeg(\norm(t^\sharp))}\right)
  \end{align*}
  Note that $\sigma^{(-1)}$ is a substitution ensuring that
  $\sigma^{(-1)}(m^\sharp)$ is a reduced monomial on $\ThresholdVars$
  variables of same degree as $m^\sharp$, for every reduced monomial
  $m^\sharp$ in $\norm(t^{\sharp})$.  Henceforth, from above we obtain
  that all monomials in $\sup(\norm(t))$ have the same degree.  We can
  now easily establish the equality of
  (\ref{it1:lemma:complexity:threshold-only}) by checking the double
  inclusion:

  \noindent ''$\supseteq$'': Let $m \in \norm(t)$ such that $\deg(m) =
  \maxdeg(\norm(t))$.  Assume by contradiction that $m \not\in
  \sup(\norm(t))$.  Then, there exists $m' \in \norm(t)$ such that $m
  \ltmon m'$ and hence $\deg(m) < \deg(m')$.  But, this leads to
  $\maxdeg(\norm(t)) = \deg(m) < \deg(m') \le \maxdeg(\norm(t))$ which
  is impossible.

  \noindent ''$\subseteq$'': Assume by contradiction there exists $m
  \in \sup(\norm(t))$ and $\deg(m) < \maxdeg(\norm(t))$.  From
  previous inclusion, $\sup(\norm(t))$ contains all monomials of
  $\norm(t)$ of maximal degree $\maxdeg(\norm(t))$. But then, this
  contradicts that all monomials from $\sup(\norm(t))$ have the same
  degree.

  We are left to prove property~(\ref{eqn:complexity:threshold-only}).
  The next two assertions show that the ordering between reduced
  monomials are preserved by $\sigma$ and $\sigma^{(-1)}$:
  \begin{itemize}
    \item $\forall m_1, m_2 \in \norm(t).~ m_1 \ltmon m_2 \implies \exists
      m_1^\sharp, m_2^\sharp \in \norm(t^\sharp).~ m_1^\sharp \in
      \norm(\sigma(m_1)), m_2^\sharp \in \norm(\sigma(m_2)),
      m_1^\sharp \ltmon m_2^\sharp$: we associate to every monomial $m
      \isdef \prod_{i=1}^n \svar_i^{p_i} \in \norm(t)$ the unique
      representative $m^\star \isdef \prod_{i=1}^n \prod_{j=1}^{p_i}
      \svar^\sharp_{i,j}$ i.e., by expanding powers to products of
      distinct replicated variables.  It is an easy check that
      $m^\star \in \norm(\sigma(m))$ for any reduced monomial, and
      that the assertion holds by choosing $m_1^\sharp \isdef
      m_1^\star$, $m_2^\sharp \isdef m_2^\star$.

    \item $\forall m_1^\sharp, m_2^\sharp \in \norm(t^\sharp).~
      m_1^\sharp \ltmon m_2^\sharp \implies \sigma^{(-1)}(m_1^\sharp)
      \ltmon \sigma^{(-1)}(m_2^\sharp)$: It is an easy check that the
      assertion holds as $\sigma^{(-1)}$ is fusioning back the
      replicated variables.
  \end{itemize}
  The property~(\ref{eqn:complexity:threshold-only}) follows
  immediately from the two assertions above.

  \noindent(\ref{it2:lemma:complexity:threshold-only}) Let $r \isdef
  \maxdeg(\norm(t)) = \maxdeg(\norm(t \cdot \ell))$. We will show equality
  $\sup(\norm(t)) = \sup(\norm(t \cdot \ell))$ by double inclusion of maximal
  monomials.

  \noindent ''$\subseteq$'' Let $m \in \sup(\norm(t))$. Then, we obtain
  $\deg(m) = r$ from (\ref{it1:lemma:complexity:threshold-only}). Let $\svar
  \in \ell$ and consider $m' \isdef \norm(m \cdot \svar) \in
  \norm(t \cdot \ell)$. Then, either one of the following cases
  holds: \begin{itemize}
  \item $\deg(m') = \deg(m)$: in this case, $m' = m$ and since
    $\deg(m) = \maxdeg(\norm(t \cdot \ell))$, we obtain $m \in
    \sup(\norm(t \cdot \ell))$, by
    (\ref{it1:lemma:complexity:threshold-only}).
  \item $\deg(m') = \deg(m) + 1$: in this case, $\deg(m') = r+1 > r =
    \maxdeg(\norm(t \cdot \ell))$, contradiction.
  \end{itemize}

  \noindent ''$\supseteq$'' Let $m'\in \sup(\norm(t \cdot \ell))$ be a
  monomial. Then, we obtain $\deg(m') = r$ from
  (\ref{it1:lemma:complexity:threshold-only}). Let $m_1 \in \norm(t)$ and $\svar
  \in \ell$ such that $m' = \norm(m_1 \cdot \svar)$.  We
  distinguish two cases: \begin{itemize}
  \item $\deg(m_1) = \deg(m')$: in this case, $m' = m_1$ and then
    since $\deg(m') = r$, we obtain $m' \in \sup(\norm(t))$, by
    (\ref{it1:lemma:complexity:threshold-only}).
  \item $\deg(m_1) = \deg(m') - 1$: in this case, $\deg(m_1) = r -
    1$ and there must exists $m \in \sup(\norm(t))$ such that
    $\deg(m)=r$ and $m_1 \ltmon m$.  We distinguish two subcases:
    \begin{itemize}
    \item $\deg(\norm(m \cdot s)) = deg(\norm(m_1 \cdot s))$: in this
      case, we have $m = \norm(m_1 \cdot s) = m'$ and so $m' \in
      \sup(\norm(t))$,
    \item $\deg(\norm(m \cdot s)) = deg(\norm(m_1 \cdot s)) + 1$: in
      this case, $\deg( \norm(m\cdot s) ) = r + 1 > \maxdeg(\norm(t
        \cdot \ell))$, contradiction as $\norm(m \cdot s) \in \norm(t
        \cdot \ell)$.
    \end{itemize}
  \end{itemize}

  \noindent(\ref{it3:lemma:complexity:threshold-only}) Consider $t
  \isdef \ell_1 \cdot ... \cdot \ell_k$, with $\ell_i \in
  \Sums{\ThresholdVars}$ for all $i\in\interv{1}{k}$, for some $k \ge 1$.
  Let $r \isdef \maxdeg(\norm(t))$.  We choose $r$ pairwise distinct
  factors $\ell_{i_1}$, ..., $\ell_{i_r}$ amongst the $k$ factors of
  $t$ such that for $t' \isdef \ell_{i_1} \cdot ... \cdot \ell_{i_r}$
  we have $\maxdeg(\norm(t')) = r$.  Let $J \isdef \interv{1}{k}
  \setminus \set{i_1, ..., i_r}$.  We can check that $t'$ satisfies
  our requirements:
  \begin{itemize}
  \item if $\lengthof{t} > \weightedcardof{\ThresholdVars}$ then $t' \ltprod
    t$: actually, the maximal degree of reduced monomials constructed
    with variables from $\ThresholdVars$ is $\weightedcardof{\ThresholdVars}$
    hence, $r \le \weightedcardof{\ThresholdVars} < \lengthof{t} = k$ and
    so $J \not=\emptyset$ and hence $t' \ltprod t$,
  \item for all $t'' \in \interv{t'}{t}$ we have $\sup(\norm(t'')) =
    \sup(\norm(t))$: that is, because $t = t' \cdot \prod_{j \in J}
    \ell_j$ and since $\maxdeg(\norm(t)) = \maxdeg(\norm(t')) = r$ all
    the intermediate products leading from $t'$ to $t$ by adding the
    extra factors (with indices from $J$) will preserve the maximal
    degree of monomials and hence, the set of maximal monomials
    according to (\ref{it2:lemma:complexity:threshold-only}).
  \end{itemize}
\end{proofE}

%
%

For each $V \subseteq \SVars$, we denote by $\XSums{V}{1} \isdef
\set{1 + \sum\nolimits_{\svar\in W} \svar ~\mid~ \emptyset \not= W
  \subseteq V}$ the set of linear products whose factors all contain
the empty monomial
$1$. Lemma~\ref{lemma:complexity:1-bounded-periodic} gives a cut-off
on the length of products of such linear terms, relying on simple
observation: a linear term $1+\svar$, for $\svar \in \nopump{} \uplus \pump{}$, behaves as a threshold variable of threshold equal to $\base{\svar}{}$ resp.
$\period{\svar}{}$. Intuitively, this is the case because the normal
forms of such linear products are downward closed w.r.t. the $\lemon$
order on monomials. Hence, the bounded and periodic variables can be
syntactically substituted with threshold variables and
Lemma~\ref{lemma:complexity:threshold-only} can be applied.

\begin{lemmaE}\label{lemma:complexity:1-bounded-periodic}
  For each linear product $t\in \Prods{\XSums{\nopump{} \uplus
      \pump{}}{1}}$ of length $\lengthof{t} >
  \weightedcardof{\nopump{} \uplus \pump{}}$, there exists $t' \ltprod
  t$ such that $\norm(t) = \norm(t'')$, for all $t'' \in
  \leinterv{t'}{t}$.
\end{lemmaE}
\begin{proofE}
We define a set of fresh \emph{threshold} variables $\SVars^\sharp
\isdef \set{ \svar^\sharp ~\mid~ \svar \in \SVars}$ where
$\threshold{\svar^\sharp} \isdef \base{\svar}{}$ if
$\svar\in\nopump{}$ and $\threshold{ \svar^\sharp}
\isdef\period{\svar}{}$ if $\svar\in\pump{}$.  Note that
$\weightedcardof{\SVars^\sharp} = \weightedcardof{\nopump{} \uplus
  \pump{}}$.  We moreover define the substitutions $\sigma : \SVars
\rightarrow \SVars^\sharp$ and $\sigma^{(-1)} : \SVars^\sharp
\rightarrow \TermsOf{\SVars}$ defined respectively by $\sigma(\svar)
\isdef \svar^\sharp$ and $\sigma^{(-1)}(\svar^\sharp) \isdef 1 + \svar$,
for every $\svar \in \SVars$.  Consider arbitrary $t \isdef (1 +
\ell_1) \cdot ... \cdot (1 + \ell_k)$ where $\ell_i \in
\Sums{\nopump{} \uplus \pump{}}$, for all $i \in \interv{1}{k}$, for
some $k \ge 1$.  We denote by $t^\sharp \isdef \sigma(\ell_1) \cdot
... \cdot \sigma(\ell_k) \in \Prods{\Sums{\SVars^\sharp}}$. Using
these notations, we will show that computations depicted in
Fig.~\ref{fig:1-sums} commute, that is, formally:
\begin{equation}\label{eqn:complexity:1-bounded-periodic}
\norm(t) = \norm(
  \sigma^{(-1)}(\norm(t^\sharp))) = \norm(
  \sigma^{(-1)}(\sup(\norm(t^\sharp))))
\end{equation}
for each $t\in\Prods{\Sums{\ThresholdVars}}$.

\begin{figure}[ht]
  \begin{center}
    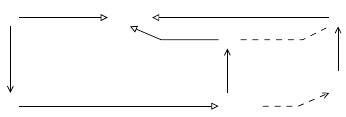
    \caption{\label{fig:1-sums} Linear products of threshold variables}
  \end{center}
\end{figure}

The property~(\ref{eqn:complexity:1-bounded-periodic}) is then
sufficient to establish the conclusion of the lemma, as follows.
Consider $\lengthof{t} > \weightedcardof{\nopump{} \uplus \pump{}}$.
Then, $t^\sharp \in \Prods{\Sums{\SVars^\sharp}}$ and
$\lengthof{t^\sharp} = \lengthof{t} > \weightedcardof{\SVars^\sharp}$.
Using Lemma~\ref{lemma:complexity:threshold-only}
(\ref{it3:lemma:complexity:threshold-only}) there exists $t'^{\sharp}
\in \Prods{\Sums{\SVars^\sharp}}$ such that $t'^\sharp \ltprod t^\sharp$
and forall $u \in \interv{t'^\sharp}{t^\sharp}$ it holds
$\sup(\norm(u)) = \sup(\norm(t^\sharp))$.  Then, we construct $t'
\isdef \sigma^{(-1)}(t'^\sharp) \in \Prods{\XSums{\SVars}{1}}$ and we
check that $t' \ltprod t$ (because $t'^\sharp \ltprod t^\sharp$) and, for
all $t'' \in \interv{t'}{t}$ we have $\norm(t'') =
\norm(\sigma^{(-1)}(\sup(\norm(t''^\sharp)))) =
\norm(\sigma^{(-1)}(\sup(\norm(t^\sharp)))) = \norm(t)$ using that
$t''^\sharp \in \interv{t'^\sharp}{t^\sharp}$.

We are left to prove property~(\ref{eqn:complexity:1-bounded-periodic}).
We proceed by showing the two equalities:

\noindent ''$\norm(t) = \norm(\sigma^{(-1)}(\norm(t^\sharp)))$'': Note
that $(1 + \svar)^m = 1 + \svar + ... + \svar^m$ for every $\svar \in
\nopump{} \uplus \pump{}$ and positive integer $m$.  Hence, $(1 +
\svar)$ \emph{behaves} like the threshold variable $\svar^\sharp$ with
respect to multiplication, that is, either $(1 +
\svar)^{\base{\svar{}}{}} = (1 + \svar)^{\base{\svar}{} - 1}$ if
$\svar \in \nopump{}$ or $(1 + \svar)^{\period{\svar}{}} = (1 +
\svar)^{\period{\svar}{} - 1}$ if $\svar \in \pump{}$. These
identities allow us to carry the computation of $\norm(t)$ by
rewriting factors $1 + \svar_1 + ... + \svar_n$ into $(1 + \svar_1) +
... + (1 + \svar_n)$ and leveraging from the fact that any sum $1 +
\svar_j$ behaves like the threshold variable $\svar_j^\sharp$.
Henceforth, $\norm(t)$ can be equivalently computed by leveraging its
$1 + \ell_i$ factors in $\XSums{\SVars}{1}$ to $\sigma(\ell_i)$
factors in $\Sums{\SVars^\sharp}$, performing the product with these
new factors, and then, getting back by the reverse substitution
$\sigma^{(-1)}$, as follows:
\begin{align*}
  t &= (1 + \ell_1)\cdot ... \cdot (1 + \ell_k) \\
  &= \sigma^{(-1)} (\sigma(\ell_1) \cdot ... \cdot \sigma(\ell_k)) \\
  &= \sigma^{(-1)} (t^\sharp)
\end{align*}
Then, we obtain $\norm(t) = \norm(\sigma^{(-1)}(\norm(t^\sharp)))$.

\noindent "$\norm(\sigma^{(-1)}(\norm(t^\sharp))) = \norm(
  \sigma^{(-1)}(\sup(\norm(t^\sharp))))$": We first prove the
following fact:

\begin{quote}
If $m_1^\sharp, m_2^\sharp$ are reduced monomials consisting of
$\SVars^\sharp$ variables, such that $m_1^\sharp \lemon m_2^\sharp$
then, any reduced monomial $m$ in $\norm(\sigma^{(-1)}(m_1^\sharp))$ occurs
as reduced monomial in $\norm(\sigma^{(-1)}(m_2^\sharp))$.
\end{quote}
Without loss of generality, consider:
\begin{align*}
  \sigma^{(-1)}(m_1^\sharp)  = &~
  (1 + \svar_1)^{p_1} \cdot ... \cdot (1 + \svar_{n_1})^{p_{n_1}} \\
  \sigma^{(-1)}(m_2^\sharp) = &~
  (1 + \svar_1)^{q_1} \cdot ... \cdot (1 + \svar_{n_1})^{q_{n_1}} ~\cdot \\
  & ~(1 + \svar_{n_1+1})^{q_{n_1+1}} \cdot ... \cdot (1 + \svar_{n_2})^{q_{n_2}}
\end{align*}
for some pairwise distinct variables $\svar_1$, ..., $\svar_{n_2}$,
for positive $p_1$, ..., $p_{n_1}$, $q_1$, ..., $q_{n_2}$ such that
$q_i \ge p_i$ for all $i\in\interv{1}{n_1}$, and some $n_1 \le n_2$.
Then, any reduced monomial $m$ in $\norm(\sigma^{(-1)}(m_1^\sharp))$
is obtained from some product $\svar_1^{r_1} \cdot ... \cdot
\svar_{n_1}^{r_{n_1}}$ occurring in $\sigma^{(-1)}(m_1^\sharp)$.  Or,
exactly the same product exists in $\sigma^{(-1)}(m_2^\sharp)$ i.e.,
by making the same choices of $\svar_1$, ..., $\svar_{n_1}$ variables
in $\sigma^{(-1)}(m_1^\sharp)$ part, and choosing 1's from the other
additional factors in $\sigma^{(-1)}(m_2^\sharp)$.

Using the fact above, we obtain immediately that $\norm(\sigma^{(-1)}(
  \norm(t^\sharp))) = \norm(\sigma^{(-1)}(\sup(\norm(t^\sharp)))$
because every monomial derived from a non-maximal monomial of
$\sigma^{(-1)}(\norm(t^\sharp))$ (that is, $t_1$ in
Fig.~\ref{fig:1-sums}) can be also derived from some maximal monomial
in $\sigma^{(-1)}(\sup(\norm(t^\sharp)))$ (that is, $t_2$ in
Fig.~\ref{fig:1-sums}).
\end{proofE}

\begin{example}\label{ex:complexity:1-bounded-periodic}
  Let $\svar_1 \in \nopump{}{}$ and $\svar_2\in \pump{}$ be variables
  such that $\base{\svar_1}{} = 2$, $\period{\svar_2}{} = 3$ and $t
  \isdef (1 + \svar_1 + \svar_2) (1 + \svar_1) (1 + \svar_2) (1 +
  \svar_1) (1+ \svar_1) (1 + \svar_1 + \svar_2) (1 + \svar_2)$.  Then
  $\norm(t) = 1 + \svar_1 + \svar_2 + \svar_1 \svar_2 + \svar_2^2 +
  \svar_1 \svar_2^2$ and note that $\norm(t)$ contains all reduced
  monomials constructed from $\svar_1$ and $\svar_2$.  Let $t' \isdef
  (1 + \svar_1 + \svar_2) (1 + \svar_1) (1 + \svar_2)$, that is, $t'$
  contains only the first three factors of $t$.  Then, $\norm(t') =
  \norm(t)$ is an easy check. Moreover, $\norm(t'') = \norm(t)$ for
  all $t''\in\leinterv{t'}{t}$, because $\norm(t'')$ contains already
  all the monomials of $\norm(t')$, i.e., of $\norm(t)$.
\end{example}

%
%

For each $s_0 \in \SVars$ and $V \subseteq \SVars$, we denote by
$\XSums{V}{\svar_0} \isdef \set{ \svar_0 + \sum\nolimits_{\svar\in W}
  \svar ~\mid~ \emptyset \not= W \subseteq V}$ the set of linear terms
whose factors all contain a distinguished variable
$s_0$. Lemmas~\ref{lemma:complexity:s0-bounded-periodic-basic} and
\ref{lemma:complexity:s0-bounded-periodic} generalize
Lemma~\ref{lemma:complexity:1-bounded-periodic} by considering
products of linear terms containing a periodic variable $s_0$ instead
of $1$. In particular,
Lemma~\ref{lemma:complexity:s0-bounded-periodic-basic} considers the
situation where the period of $\svar_0$ divides the periods of all
other periodic variables occurring in the product. In this case, the
normal form of a linear product becomes downward closed
w.r.t. $\lemon$ when $s_0$ is removed from all monomials. We point out that for the products whose length exceeds the cut-off bound, Lemma~\ref{lemma:complexity:s0-bounded-periodic-basic}
only guarantees some strictly shorter product having the same normal form, whereas
Lemma~\ref{lemma:complexity:1-bounded-periodic} guarantees equal normal forms for all linear
products whose length exceeds the cut-off bound.

\begin{lemmaE}\label{lemma:complexity:s0-bounded-periodic-basic}
  For each periodic variable $\svar_0\in \pump{}$, whose period
  $\period{\svar_0}{}$ divides all periods $\set{\period{\svar}{} \mid
    \svar\in\pump{}}$, and each linear product $t\in
  \Prods{\XSums{\nopump{} \uplus \pump{}}{\svar_0}}$ of length
  $\lengthof{t} > \weightedcardof{\nopump{} \uplus \pump{}}$, there
  exists $t'\ltprod t$ such that $\norm(t') = \norm(t)$.
\end{lemmaE}
\begin{proofE}
  Assume wlog $t \isdef (\svar_0 + \ell_1) \cdot ... \cdot (\svar_0 +
  \ell_k)$ with $\ell_i \in \Sums{\nopump{} \uplus \pump{} \setminus
    \set{\svar_0}}$ for all $i\in \interv{1}{k}$, for some $k \ge 1$.
  Define $t_1 \isdef (1 + \ell_1) \cdot ... \cdot(1 + \ell_k) \in
  \Prods{\XSums{\nopump{} \uplus \pump{} \setminus \set{\svar_0}}{1}}$.
  Using these notations, we will first establish the following property:
  \begin{equation}\label{eqn:complexity:s0-bounded-periodic-basic}
    \norm(t) = \sum_{m_1 \in \norm(t_1)}
    \svar_0^{(k - \deg(m_1))\mod \period{\svar_0}{}} \cdot m_1
  \end{equation}
  for each $t\in\Prods{\XSums{\nopump{} \uplus \pump{}}{\svar_0}}$.
  We proceed by showing the double inclusion of reduced monomials:

  \noindent ''$\subseteq$'': Let $m \isdef \svar_0^{r_0} \cdot \svar_1^{r_1} \cdot
  ... \cdot \svar_n^{r_n} \isdef \norm(m_u) \in \norm(t)$ where $m_u \isdef
  \svar_0^{p_0} \cdot \svar_1^{p_1}... \cdot \svar_n^{p_n}$ is an
  unreduced monomial occurring in the expansion of $t$.  Then:
  \begin{itemize}
  \item $p_0 + p_1 + ... + p_n = k$ and because $\period{\svar_0}{}$
    divides the periods of all other periodic variables we have $r_0 +
    r_1 + ... + r_n \equiv (p_0 + p_1 + ... + p_n) \mod
    \period{\svar_0}{0} \equiv k \mod \period{\svar_0}{0}$ and
    consequently $r_0 = (k - r_1 - ... - r_n) \mod
    \period{\svar_0}{0}$.
  \item $m_{1u} \isdef 1^{p_0} \cdot \svar_1^{p_1} \cdot ... \cdot
    \svar_n^{p_n}$ is an unreduced monomial in the expansion of $t_1$,
    that is, which chooses 1 instead of $\svar_0$ and maintains the other
    choices for the other variables.  Let then $m_1 \isdef
    \norm(m_{1u}) \in \norm(t_1)$.  We can check easily that $m_1 =
    \svar_1^{r_1} \cdot ... \cdot \svar_n^{r_n}$ as the same axioms of
    multiplications apply for $\svar_1$, ..., $\svar_n$.  Then,
    $\deg(m_1) = r_1 + .... + r_n$ and consequently $r_0 = (k
    -\deg(m_1)) \mod \period{\svar_0}{0}$.
  \end{itemize}
  Hence, $m = \svar_0^{r_0} \cdot m_1$, where $r_0 = (k - \deg(m_1))
  \mod \period{\svar_0}{0}$ and $m_1 \in \norm(t_1)$.

  \noindent ''$\supseteq$'': Let $m_1 \isdef \svar_1^{r_1} \cdot ... \cdot
  \svar_n^{r_n} \isdef \norm(m_{1u}) \in \norm(t_1)$ where $m_{1u}
  \isdef \svar_1^{p_1} \cdot ... \cdot \svar_n^{p_n}$ is an unreduced
  monomial occurring in the expansion of $t_1$.  Then:
  \begin{itemize}
  \item $p_1 + p_2 + ... + p_n \le k$ and $m_u \isdef \svar_0^{p_0}
    \cdot \svar_1^{p_1} \cdot ... \cdot \svar_n^{p_n}$ where $p_0
    \isdef k-p_1 - ... - p_n$ is an unreduced monomial occurring in
    the expansion of $t$, that is, chooses $\svar_0$ instead of 1 and
    maintains the other choices for the other variables.  Let then $m
    = \norm(m_u)$.  We can easily check that $m = \svar_0^{ p_0 \mod
      \period{\svar_0}{} } \cdot \svar_1^{r_1} \cdot ... \cdot
    \svar_n^{r_n} = \svar_0^{ p_0 \mod \period{\svar_0}{} } \cdot m_1$
    given the same axioms of multiplications apply for $\svar_1$, ...,
    $\svar_n$ in $t$ and $t_1$.
  \item $p_0 \mod \period{\svar_0}{} = (k - p_1 - ... - p_n) \mod
    \period{\svar_0}{} = (k - r_1 - ...- r_n) \mod \period{\svar_0}{}
    = (k - \deg(m_1)) \mod \period{\svar_0}{}$ that is, because
    $\period{\svar_0}{}$ divides the periods of all other periodic
    variables.
  \end{itemize}
  Hence, $m = \svar_0^{r_0} \cdot m_1$ with $r_0
  \isdef (k - \deg(m_1)) \mod \period{\svar_0}{}$ belongs to
  $\norm(t)$, as required.

  Using property~(\ref{eqn:complexity:s0-bounded-periodic-basic}) we
  complete the proof of the lemma as follows:
  \begin{align*}
    \lengthof{t_1} = & ~\lengthof{t} = k \\
    > & ~\weightedcardof{\nopump{} \uplus \pump{}} \\
    = & ~\weightedcardof{ \nopump{} \uplus \pump{} \setminus \set{\svar_0}} +
    (\period{\svar_0}{} - 1)
  \end{align*}
  By (possibly repeated application of)
  Lemma~\ref{lemma:complexity:1-bounded-periodic} there exists $t_1'
  \ltprod t_1$ such that $\lengthof{t_1'} \le \weightedcardof{
    \nopump{} \uplus \pump{} \setminus \set{\svar_0}}$ and forall $u
  \in\interv{t_1'}{t_1}$ it holds $\norm(u) = \norm(t_1)$. Now, we
  proceed to the construction of $t'$ from $t_1$ and $t_1'$ as
  depicted in the Fig~\ref{fig:s0-bounded-periodic} in the following
  steps:

  \begin{figure}[htbp]
  \[ \begin{array}{lcr}
    t_1 = & \underbrace{
      \underbrace{\overbrace{\prod\nolimits_{j \in J_1}  (~1 + \ell_j)}^{J_1}}_{t_1'} \cdot
      \overbrace{\prod\nolimits_{ j \in J_2'} (~1 + \ell_j)}^{J_2'}}_{t_1''} &
    \cdot \underbrace{\overbrace{\prod\nolimits_{j \in J_2''}(~1 + \ell_j)}^{J_2''}}_{\cardof{J_2''} = \period{\svar_0}{}}\\
    t = & \underbrace{\prod\nolimits_{j \in J_1}  (\svar_0 + \ell_j) \cdot
      \prod\nolimits_{ j \in J_2'} (\svar_0 + \ell_j)}_{t'} &
    \cdot \prod\nolimits_{j \in J_2''}(\svar_0 + \ell_j)
  \end{array} \]
  \caption{\label{fig:s0-bounded-periodic} Construction of $t'$ from $t_1$ and $t_1'$}
  \end{figure}

  \begin{itemize}
  \item let $J_1 \subseteq \interv{1}{\lengthof{t_1}}$ such that $t_1'
    = \prod_{j \in J_1} (1 + \ell_j)$ and let $J_2 \isdef
    \interv{1}{\lengthof{t_1}} \setminus J_1$.  We can check that
    $\cardof{J_2} \ge \period{\svar_0}{}$, that is, because
    $\cardof{J_2} = \lengthof{t_1} - \lengthof{t_1'} >
    \weightedcardof{ \nopump{} \uplus \pump{} \setminus \set{\svar_0}}
    + (\period{\svar_0}{} - 1) - \weightedcardof{ \nopump{} \uplus
      \pump{} \setminus \set{\svar_0}} = (\period{\svar_0}{} - 1)$.
    That is, at least $\period{\svar_0}{0}$ more factors are added in
    $t_1$ with respect to $t_1'$.  Consider $J_2' \uplus J_2''$ a
    partition of $J_2$ such that $\cardof{J_2''} =
    \period{\svar_0}{0}$. Obviously, such a partitioning exists
    because $\cardof{J_2} \ge \period{\svar_0}{0}$.
  \item define $t' \isdef \prod_{j \in
      J_1} (\svar_0 + \ell_j) \cdot \prod_{j \in J_2'} (\svar_0 +
    \ell_j)$.  It is now an easy check that $t'$ satisfies the
    requirements as stated in the lemma.  First, $t' \ltprod t$, as
    precisely $\period{\svar_0}{}$ factors are removed in $t'$ with
    respect to $t$.  Second, let $t_1'' \isdef \prod_{j \in J_1} (1 +
    \ell_j) \cdot \prod_{j \in J_2'} (1 + \ell_j)$.  Observe that
    $t_1'' \in \interv{t_1'}{t_1}$ and hence $\norm(t_1'') =
    \norm(t_1)$.  Third, we check $\norm(t') = \norm(t)$ because
    \begin{align*}
      \norm(t')
      &= \sum_{m_1 \in \norm(t_1'')}
      \svar_0^{(k-\period{\svar_0}{} - \deg(m_1))\mod \period{\svar_0}{}} \cdot m_1 \\
      &= \sum_{m_1 \in \norm(t_1)}
      \svar_0^{(k - \deg(m_1)) \mod \period{\svar_0}{}} \cdot m_1
      = \norm(t)
    \end{align*}
  \end{itemize}
\end{proofE}

\begin{example}\label{ex:complexity:s0-bounded-periodic-basic}
  Consider a bounded variable $\svar_1 \in \nopump{}$ and periodic
  variables $\svar_0, \svar_2 \in \pump{}{}$ such that
  $\base{\svar_1}{} = 2$, $\period{\svar_0}{} = \period{\svar_2}{} =
  3$.  Let $t \isdef (\svar_0 + \svar_1 + \svar_2) (\svar_0 + \svar_1)
  (\svar_0 + \svar_2) (\svar_0 + \svar_1) (\svar_0 + \svar_1) (\svar_0
  + \svar_1 + \svar_2) (\svar_0 + \svar_2)$ and observe that
  $\lengthof{t} = 7$.  We compute $\norm(t) = \svar_0 + \svar_1 +
  \svar_2 + \svar_0^2 \svar_1 \svar_2 + \svar_0^2 \svar_2^2 + \svar_0
  \svar_1 \svar_2^2$.  Note that for all monomials $m\in\norm(t)$ we
  have $\deg(m) \equiv 7 \mod 3$, i.e., $\deg(m) \equiv \lengthof{t}
  \mod \period{\svar_0}{}$. Moreover, note that by erasing $\svar_0$
  from monomials of $\norm(t)$ we obtain all possible reduced
  monomials of $\svar_1$, $\svar_2$. Let then $t' \isdef (\svar_0 +
  \svar_1 + \svar_2) (\svar_0 + \svar_1) (\svar_0 + \svar_2) (\svar_0
  + \svar_1)$, that is, $t'$ contains the first four factors of $t$.
  Then $\norm(t') = \norm(t)$ is an easy check, in particular by
  taking into account the properties of $\norm(t)$ mentioned above.
\end{example}

%
%

Lemma~\ref{lemma:complexity:s0-bounded-periodic} generalizes
Lemma~\ref{lemma:complexity:s0-bounded-periodic-basic} by lifting the
restriction about the period of $\svar_0$ dividing all other
periods. This increases the cut-off bound above which factors become
redundant by a polynomial in the periods of the variables from
$\pump{}$, where $\lcm(a,b)$ denotes the least common multiple of
$a,b\in\nat^{\geq1}$:

\begin{lemmaE}\label{lemma:complexity:s0-bounded-periodic}
  For each periodic variable $\svar_0\in\pump{}$ and linear product
  $t\in \Prods{\XSums{\nopump{} \uplus \pump{}}{\svar_0}}$ of length
  $\lengthof{t} > \weightedcardof{\nopump{}} + \sum_{\svar \in
    \pump{}} (\lcm(\period{\svar}{},\period{\svar_0}{})-1)$, there
  exists $t'\ltprod t$ such that $\norm(t') = \norm(t)$.
\end{lemmaE}
\begin{proofE}Consequence of
  Lemma~\ref{lemma:complexity:s0-bounded-periodic-basic}.  We
  define a set of fresh \emph{periodic} variables $\SVars^\sharp
  \isdef \set{ \svar^\sharp ~\mid~ \svar \in \pump{}}$ where
  $\period{\svar^\sharp}{} \isdef
  \lcm(\period{\svar}{},\period{\svar_0}{})$ for every
  $\svar\in\pump{}$.  In particular, note that
  $\period{\svar_0^\sharp}{} = \period{\svar_0}{}$ and
  $\period{\svar_0^\sharp}{}$ divides $\period{\svar{}^\sharp}{}$ for
  every variable $\svar^\sharp \in \SVars^\sharp$.  Moreover, note
  that $\weightedcardof{\nopump{} \uplus \SVars^\sharp} =
  \weightedcardof{\nopump{}} + \sum_{\svar \in \pump{}}
  (\lcm(\period{\svar}{},\period{\svar_0}{})-1)$. Furthermore, define
  the substitutions $\sigma : \pump{} \rightarrow \SVars^\sharp$ and
  $\sigma^{(-1)} : \SVars^\sharp \rightarrow \pump{}$ by taking
  respectively $\sigma(\svar) \isdef \svar^\sharp$, $\sigma^{(-1)}
  (\svar^\sharp) \isdef \svar$, for every $\svar\in \pump{}$. Then,
  using the above notations we prove that computations depicted in
  Fig~\ref{fig:sums} commute, that is, formally:
  \begin{equation}
    \label{eqn:complexity:s0-bounded-periodic}
    \forall t\in\TermsOf{\nopump{} \uplus \pump{}}.~\norm(t) = \norm(
      \sigma^{(-1)}( \norm(\sigma(t))))
  \end{equation}

  \begin{figure}[ht]
  \begin{center}
    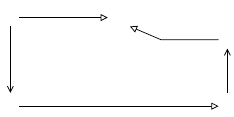
    \caption{\label{fig:sums} Linear products with enforced divisibilities}
  \end{center}
  \end{figure}

  The property~(\ref{eqn:complexity:s0-bounded-periodic}) is then
  sufficient to establish the conclusion of the lemma, as follows.
  Let $t$ satisfying the lemma hypothesis.  Then, $\sigma(t) \in
  \Prods{\XSums{\nopump{} \uplus \SVars^\sharp}{\svar_0^\sharp}}$ and
  $\lengthof{\sigma(t)} = \lengthof{t} > \weightedcardof{\nopump{}
    \uplus \SVars^\sharp}$.  Therefore, using
  Lemma~\ref{lemma:complexity:s0-bounded-periodic-basic} there
  exists $t'' \ltprod \sigma(t)$ such that $\norm(t'') =
  \norm(\sigma(t))$.  We construct $t' \isdef \sigma^{(-1)}(t'')$ and
  check that $t' \ltprod t$ (as $t'' \ltprod \sigma(t)$) and also:
  \begin{align*}
    \norm(t') = & ~\norm(\sigma^{(-1)}(t'')) \\
    = & ~\norm(\sigma^{(-1)}(\norm(t''))) \\
    = & ~\norm(\sigma^{(-1)}(\norm(\sigma(t)))) \\
    = & ~\norm(t)
  \end{align*}
  We are left with proving the
  property~(\ref{eqn:complexity:s0-bounded-periodic}). We proceed by a
  case split, depending on the form of $t$:
  \begin{itemize}
  \item if $t \isdef \svar^p$ for some $\svar\in\nopump{}$, $p \ge 1$ then
    \begin{align*}
      \norm(\sigma^{(-1)}( \norm(\sigma(t))))
      &= \norm(\sigma^{(-1)}(\norm(\sigma(\svar^p)))) \\
      &= \norm(\sigma^{(-1)}(\norm(\svar^p))) \\
      &= \norm(\norm(\svar^p)) = \norm(\svar^p) = \norm(t)
    \end{align*}
  \item if $t \isdef \svar^p$ for some $\svar\in\pump{}$, $p \ge 1$ then
    \begin{align*}
      \norm(\sigma^{(-1)}(\norm(\sigma(t)))) &= \norm(\sigma^{(-1)}(\norm(\sigma(\svar^p)))) \\
      &= \norm(\sigma^{(-1)}(\norm((\svar^\sharp)^p{}))) \\
      &= \norm(\sigma^{(-1)}((\svar^\sharp)^{p \mod \period{\svar^\sharp}{}})) \\
      &= \norm(\svar^{p \mod \period{\svar^\sharp}{}}) \\
      &= \svar^{(p \mod \period{\svar^\sharp}{}) \mod \period{\svar}{}}
       = \svar^{p \mod \period{\svar}{}} = \norm(t)
    \end{align*}
    We used that $((x \mod
    q_1) \mod q_2) = (x \mod q_2)$ for any non-negative integers $x$,
    $q_1$, $q_2$ such that $q_2$ divides $q_1$.
  \item if $t \isdef \svar_1^{p_1} \cdot ... \cdot \svar_n^{p_n}$ for
    pairwise distinct $\svar_1,..., \svar_n\in \nopump{} \uplus
    \pump{}$, and arbitrary $p_1 \geq 1,...,p_n \geq 1$ then
    \begin{align*}
      \norm(\sigma^{(-1)}(\norm(\sigma(t)))) = \\
      \norm(\sigma^{(-1)}(\norm(\sigma(\svar_1^{p_1} \cdot ... \cdot \svar_n^{p_n})))) = \\
      \norm(\sigma^{(-1)}(\norm(\sigma(\svar_1^{p_1}) \cdot ... \cdot \sigma(\svar_n^{p_n})))) = \\
      \norm(\sigma^{(-1)}(\norm(\sigma(\svar_1^{p_1})) \cdot ... \cdot \norm(\sigma(\svar_n^{p_n})))) = \\
      \norm(\sigma^{(-1)}(\norm(\sigma(\svar_1^{p_1}))) \cdot ... \cdot \sigma^{(-1)}(\norm(\sigma(\svar_n^{p_n})))) = \\
      \norm(\norm(\sigma^{(-1)}(\norm(\sigma(\svar_1^{p_1})))) \cdot ... \cdot \norm(\sigma^{(-1)}(\norm(\sigma(\svar_n^{p_n})))) = \\
      \norm(\norm(\svar_1^{p_1}) \cdot ... \cdot \norm(\svar_n^{p_n})) = \norm(\norm(t)) = \norm(t)
    \end{align*}
    We used that normalization and substitution operate independently
    on pairwise distinct variables.
  \item if $t$ is arbitrary then, let denote by respectively $M(t)$,
    $M(\sigma(t))$ the set of unreduced monomials occurring in the
    (monomial) expansion of $t$ and $\sigma(t)$.  Observe that
    $M(\sigma(t)) = \{ \sigma(m) ~\mid~ m \in M(t)\}$, that is, $t$
    and $\sigma(t)$ have the same set of unreduced monomials up to the
    renaming of periodic variables.  Then,
    \begin{align*}
      \norm(\sigma^{(-1)}(\norm(\sigma(t))))
      &= \norm(\sigma^{(-1)}(\norm(\sum\nolimits_{m \in M(\sigma(t))} m))) \\
      &= \norm(\sigma^{(-1)}(\norm(\sum\nolimits_{m \in M(t)} \sigma(m)))) \\
      &= \norm(\sigma^{(-1)}(\sum\nolimits_{m \in M(t)}^{!} \norm(\sigma(m)))) \\
      &= \norm(\sum\nolimits_{m \in M(t)}^{!} \sigma^{(-1)}(\norm(\sigma(m)))) \\
      &= \sum\nolimits_{m \in M(t)}^{!} \norm(\sigma^{(-1)}(\norm(\sigma(m)))) \\
      &= \sum\nolimits_{m \in M(t)}^{!} \norm(m) = \norm(t)
      \end{align*}
  \end{itemize}
  We denoted by $\sum^{!}$ a sum that implicitly removes replicated
  terms.
\end{proofE}

%
%

Lemma~\ref{lemma:complexity:summary} below handles the general case of
linear products over both bounded and periodic variables (threshold
variables are not needed any further). This provides an explicit
polynomial cut-off on the length of such products, guaranteeing that
each longer product is either $0$ or equivalent in normal form to a
strict sub-product. The proof uses a reordering of product factors
that enables the joint application of
Lemmas~\ref{lemma:complexity:bounded-only} and
\ref{lemma:complexity:s0-bounded-periodic}. Let $\pump{} \isdef \set{
  \svar_1, \ldots, \svar_n}$ be the set of periodic variables and
define the bound:
\begin{align}\label{eq:bound}
  b(\nopump{},\pump{}) \isdef & ~\weightedcardof{\nopump{}} \cdot (\cardof{\pump{}} + 1) +
  \sum_{1\le i \le j\le n} \lcm(\period{\svar_j}{},\period{\svar_i}{}) -
  \cardof{\pump{}} \cdot (\cardof{\pump{}} + 1)/2
\end{align}

\begin{lemmaE}\label{lemma:complexity:summary}
  For each linear product $t \in \Prods{\Sums{\nopump{} \uplus
      \pump{}}}$ of lenght $\lengthof{t} > b(\nopump{},\pump{})$,
  either $\norm(t)=0$ or there exists $t' \ltprod t$ such that
  $\norm(t') = \norm(t)$.
\end{lemmaE}
\begin{proofE}
  Consequence of Lemmas~\ref{lemma:complexity:bounded-only} and
  \ref{lemma:complexity:s0-bounded-periodic}.  First, the term $t$
  can be reorganized as a product of $n+1$ sub-products $t_0 \cdot t_1
  \cdot ... \cdot t_{n}$ where:
  \begin{itemize}
  \item for every sum $\ell$ occurring in $t_0$ it holds
    $\varsof{\ell} \subseteq \nopump{}$,
  \item for every sum $\ell$ occurring in $t_i$ it holds $\svar_i
    \in \varsof{\ell} \subseteq \nopump{} \uplus \set{\svar_i,...,\svar_n}$.
  \end{itemize}
  Consequently, we obtain $t_0 \in \Prods{\Sums{\nopump{}}}$ and \[t_i \in
  \Prods{\XSums{\nopump{} \uplus \set{ \svar_i, \svar_{i+1},
      ... \svar_n}}{\svar_i}}\] for every $i\in \interv{1}{n}$.
  Using Lemma~\ref{lemma:complexity:bounded-only}, if
  $\lengthof{t_0} > b_0 \isdef \weightedcardof{\nopump{}}$ then $\norm(t_0)
  = 0$.  Using Lemma~\ref{lemma:complexity:s0-bounded-periodic} if
  $\lengthof{t_i} > b_i \isdef \weightedcardof{\nopump{}} + \sum_{j=i}^n
  (\lcm(\period{\svar_j}{}, \period{\svar_i}{}) - 1)$ then there exists
  $t_i' \ltprod t_i$ such that $\norm(t_i') = \norm(t_i)$, for every
  $i\in\interv{1}{n}$. Observe that $b(\nopump{},\pump{}) = \sum_{i=0}^n b_i$.

  We now complete the proof as follows.  Using the notations introduced
  above, the condition $\lengthof{t} > b(\nopump{},\pump{})$ is
  equivalent to $\sum_{i=0}^n \lengthof{t_i} > \sum_{i=0} b_i$.  Then,
  either:
  \begin{itemize}
  \item $\lengthof{t_0} > b_0$: then, $\norm(t_0) = 0$ and also
    $\norm(t) = 0$ or,
  \item there exists $i \in \interv{1}{n}$ such that $\lengthof{t_i} >
    b_i$: then, we define $t' \isdef t_0 \cdot ... \cdot t_{i-1} \cdot
    t_i' \cdot t_{i+1} \cdot ... \cdot t_n$ for some $t_i' \ltprod
    t_i$ such that $\norm(t_i') = \norm(t_i)$, which exists according
    to Lemma~\ref{lemma:complexity:s0-bounded-periodic} as explained
    above. It is an easy check that $t' \ltprod t$ and $\norm(t') =
    \norm(t)$.
  \end{itemize}
\end{proofE}

%
%

We define \(\Func(a,b) \isdef \sum_{i=0}^b (2^a-1)^i\) and observe
that $\Func(a,b) \le (2^{a} - 1 + 1)^b = 2^{a \cdot b}$. Intuitively,
$\Func(a,b)$ represents the number of syntactically distinct products
of length at most $b$, built from non-empty sums of $a$ distinct
variables. For a set of terms $U \subseteq \termuniv{}$, we write
$\nfcardof{U} \isdef \cardof{\norm(t) \mid t \in U,~ \norm(t) \neq
  0}$. Lemma~\ref{lemma:complexity:counting} below provides the bounds
on the cardinalities on the sets of linear products over both bounded
and periodic variables. This lemma simply applies the bounds on
lengths of linear products from
Lemmas~\ref{lemma:complexity:bounded-only},
\ref{lemma:complexity:1-bounded-periodic} and
\ref{lemma:complexity:summary} to obtain bounds on the number of
pairwise distinct normal forms of the linear products thereof.

\begin{lemmaE}\label{lemma:complexity:counting}
  \begin{enumerate}[(i)]
  \item\label{it1:lemma:complexity:counting}
    $\nfcardof{\Prods{\Sums{\nopump{}}}} \leq \Func(\cardof{\nopump{}}, \weightedcardof{\nopump{}})$,
  \item\label{it2:lemma:complexity:counting}
    $\nfcardof{\Prods{\XSums{\nopump{} \uplus \pump{}}{1}}} \leq \Func(\cardof{\nopump{} \uplus \pump{}}, \weightedcardof{\nopump{} \uplus \pump{}})$,
  \item\label{it3:lemma:complexity:counting}
    $\nfcardof{\Prods{\Sums{\nopump{} \uplus \pump{}}}} \leq \Func(\cardof{\nopump{} \uplus \pump{}},b(\nopump{}, \pump{}))$.
  \end{enumerate}
\end{lemmaE}
\begin{proofE}
  \noindent(\ref{it1:lemma:complexity:counting}) Using
  Lemma~\ref{lemma:complexity:bounded-only} all non-zero terms
  $\norm(t)$ can be constructed using products of at most
  $\weightedcardof{\nopump{}}$ sums.  As there are
  $2^{\cardof{\nopump{}}} - 1$ such sums, the result follows.

  \noindent(\ref{it2:lemma:complexity:counting}) Using
  Lemma~\ref{lemma:complexity:1-bounded-periodic} all terms
  $\norm(t)$ can be constructed using products of at most
  $\weightedcardof{\nopump{} \uplus \pump{}}$ 1-sums.  As there are
  $2^{\cardof{\nopump{} \uplus \pump{}}} - 1$ such sums, the result
  follows.

  \noindent(\ref{it3:lemma:complexity:counting}) Using
  Lemma~\ref{lemma:complexity:summary} all non-zero terms $\norm(t)$
  can be constructed using products of at most $b(\nopump{},\pump{})$
  non-empty sums.  As there are $2^{\cardof{\nopump{} \uplus \pump{}}}
  - 1$ such sums, the result follows. \qed
\end{proofE}

\subsection{The Main Result}

The next lemma gives first an exponential bound on the cardinality of
a recognizer built from an alternative grammar in normal form. We
distinguish the cases of (i) grammars with no periodic variables, (ii)
with periodic variables of period $1$, and (iii) the general case.

\begin{lemmaE}\label{lemma:complexity:alternative-form}
  The recognizer $\algof{A}$ built from an alternative grammar
  $\grammar \isdef (\mathcal{P} \uplus \mathcal{S},\rules,\axioms)$ in
  normal form satisfies the following constraints:
  \begin{enumerate}[(1)]
  \item\label{it1:lemma:complexity:alternative-form}
    $\cardof{\universeOf{A}^P} \le \begin{cases}
    2^{ B_{max} \cdot \cardof{\mathcal{S}}^2 \cdot \cardof{\mathcal{P}}} &
    \mbox{if } \arules = \emptyset \\
    2^{ 2 \cdot B_{max} \cdot \cardof{\mathcal{S}}^2 \cdot \cardof{\mathcal{P}}} &
    \mbox{if } \arules \not= \emptyset,~ P_{max} = 1 \\
    2^{(B_{max} +
      P_{max}^2/2) \cdot \cardof{\mathcal{S}}^2 \cdot
      (\cardof{\mathcal{S}} + 2) \cdot{\cardof{\mathcal{P}}}} &
    \mbox{if } \arules \not= \emptyset,~ P_{max} > 1
    \end{cases}$
  \item\label{it2:lemma:complexity:alternative-form}
    $\cardof{\universeOf{A}^S} \le 2^{ \cardof{\mathcal{S}} \cdot
    (\cardof{\mathcal{S}} + \cardof{\mathcal{P}} + 1)}$
  \end{enumerate}
  where $B_{max} \isdef \max\limits_{p\in \mathcal{P},s\in \nopump{p}}
  \base{s}{p}$ and $P_{max} \isdef \max\limits_{p\in \mathcal{P},s \in
    \pump{p}} \period{s}{p}$.
\end{lemmaE}
\begin{proofE}
  \noindent (\ref{it1:lemma:complexity:alternative-form}) We define:
  \[X_p \isdef \set{ \norm_p(t) ~|~ t \in\Prods{\Sums{\nopump{p} \uplus
        \pump{p}}, \norm_p(t) \not= 0}}\] By Definition
  \ref{def:p-view}, each element $a \in \universeOf{A}^P$ is of the
  form $a = \tuple{t_p}_{p \in \mathcal{P}}$, where each $t_p \in
  X_p$. Consequently, $\cardof{\universeOf{A}^P} \le \prod_{p \in
    \mathcal{P}} \cardof{X_p}$. We need to obtain an upper bound on
  $\cardof{X_p}$ for an arbitrary nonterminal $p \in \mathcal{P}$.
  Let us consider the set $X_p$, for some $p \in \mathcal{P}$ fixed.
  We distinguish three cases: \begin{itemize}
  \item $\arules = \emptyset$, i.e., no periodic variables exist.
    Then $X_p = \{\norm_p(t) ~|~ t\in \Prods{\Sums{\nopump{p}}},
    \norm_p(t) \not= 0\}$.  From
    Lemma~\ref{lemma:complexity:counting}(\ref{it1:lemma:complexity:counting})
    we obtain $\cardof{X_p} \le \Func(\cardof{\nopump{p}},
    \weightedcardof{\nopump{p}}) = 2^{ \cardof{\nopump{p}} \cdot
      \weightedcardof{\nopump{p}}}$.  Moreover $\cardof{ \nopump{p}}
    \le \cardof{\mathcal{S}}$ and $\weightedcardof{\nopump{p}} \le
    B_{max} \cdot \cardof{\mathcal{S}}$. Therefore, $\cardof{X_p} \le
    2^{B_{max} \cdot \cardof{\mathcal{S}}^2}$ and the upper bound on
    $\cardof{\universeOf{A}^P}$ follows.
  \item $\arules \not=\emptyset$ and $P_{max} = 1$, i.e., all periodic
    variables have period $1$. Since the normal form of a $1$-periodic
    variable is $1$, we have $X_p = \{ \norm_p(t_0 \cdot t_1) ~|~ t_0
    \in \Prods{\Sums{\nopump{p}}}, t_1 \in
    \Prods{\XSums{\nopump{p}}{1}}, \norm_p(t_0 \cdot t_1) \not= 0\}$.
    Let $X_p^{(0)} \isdef \{\norm_p(t_0) ~|~ t_0 \in
    \Prods{\Sums{\nopump{p}}, \norm_p(t_0) \not=0}\} $ and $X_p^{(1)}
    \isdef \{\norm_p(t_1) ~|~ t_1 \in
    \Prods{\XSums{\nopump{p}}{1}}\}$.  We have $\cardof{X_p} \le
    \cardof{X_p^{(0)}} \cdot \cardof{X_p^{(1)}}$.  From
    Lemma~\ref{lemma:complexity:counting}(\ref{it1:lemma:complexity:counting})
    we obtain $\cardof{X_p^{(0)}} \le \Func(\cardof{\nopump{p}},
    \weightedcardof{\nopump{p}})$ and from
    Lemma~\ref{lemma:complexity:counting}(\ref{it2:lemma:complexity:counting})
    we obtain $\cardof{X_p^{(1)}} \le \Func(\cardof{\nopump{p}},
    \weightedcardof{\nopump{p}})$.  As before,
    $\Func(\cardof{\nopump{p}}, \weightedcardof{\nopump{p}}) =
    2^{\cardof{\nopump{p}} \cdot \weightedcardof{\nopump{}}} \le
    2^{B_{max} \cdot \cardof{\mathcal{S}}^2}$ therefore $\cardof{X_p}
    \le 2^{2 \cdot B_{max} \cdot \cardof{\mathcal{S}}^2}$ and hence
    the upper bound on $\cardof{\universeOf{A}^p}$ follows.
  \item $\arules \not=\emptyset$ and $P_{max} > 1$, i.e., some periodic
    variables have periods greater than $1$. We have $X_p = \{ \norm_p(t_1 \cdot t_2) ~|~
    t_1 \in \Prods{\XSums{\nopump{p} \uplus \pumpx{p}}{1}},~ t_2 \in
    \Prods{\Sums{\nopump{p} \uplus \pumpx{p}}},$ $\norm(t_1 \cdot t_2)
    \not= 0 \}$, where $\pumpx{p} \isdef \pump{p} \setminus \{\svar \in
    \pump{p} ~|~ \period{p}{\svar}=1\}$.  Let $X_p^{(1)} \isdef \{
    \norm_p(t_1) ~|~ t_1 \in \Prods{\XSums{\nopump{p} \uplus
        \pumpx{p}}{1}}\}$ and $X_p^{(2)} \isdef \{ \norm_p(t_2) ~|~
    t_2 \in \Prods{\Sums{\nopump{p} \uplus \pumpx{p}}}, \norm(t_2)
    \not= 0 \}$.  We have $\cardof{X_p} \le \cardof {X_p^{(1)}} \cdot
    \cardof{X_p^{(2)}}$.  From
    Lemma~\ref{lemma:complexity:counting} (\ref{it2:lemma:complexity:counting})
    we have $\cardof{X_p^{(1)}} \le \Func(\cardof{\nopump{p} \uplus
      \pumpx{p}}, \weightedcardof{\nopump{p} \uplus \pumpx{p}})$ and
    from
    Lemma~\ref{lemma:complexity:counting} (\ref{it3:lemma:complexity:counting})
    we have $\cardof{X_p^{(2)}} \le \Func(\cardof{\nopump{p} \uplus
      \pumpx{p}}, b(\nopump{p},\pumpx{p}))$.  Then, we obtain: \begin{align*}
    \cardof{\nopump{p} \uplus \pumpx{p}} \le & ~\cardof{\mathcal{S}} \\
    \weightedcardof{\nopump{p} \uplus \pumpx{p}} \le & ~\max(B_{max},P_{max}) \cdot \cardof{\mathcal{S}} \\
    \Func(\cardof{\nopump{p} \uplus \pumpx{p}}, \weightedcardof{\nopump{p} \uplus \pumpx{p}}) \le & ~2^{
      \max(B_{max},P_{max}) \cdot \cardof{\mathcal{S}}^2}
    \end{align*}
    Using
    that $\lcm(\period{\svar_1}{p},\period{\svar_2}{p}) - 1 \le
    P_{max}^2$ for every variables $\svar_1, \svar_2 \in \pumpx{p}$
    and the definition of $b(\nopump{p},\pumpx{p})$ we obtain
    $b(\nopump{p},\pumpx{p}) \le B_{max} \cdot \cardof{\mathcal{S}}
    \cdot (\cardof{ \mathcal{S}} + 1) + P_{max}^2 \cdot
    \cardof{\mathcal{S}} \cdot (\cardof{\mathcal{S}} + 1) / 2 =
    (B_{max} + P_{max}^2/2) \cdot \cardof{\mathcal{S}} \cdot
    (\cardof{\mathcal{S}} + 1)$.  Then $K(\cardof{\nopump{p} \uplus
      \pumpx{p}}, b(\nopump{p}, \pumpx{p})) \le 2^{(B_{max} +
      P_{max}^2/2) \cdot \cardof{\mathcal{S}}^2 \cdot
      (\cardof{\mathcal{S}} + 1)}$.  We compute:
    \begin{align*}
    \cardof{X_p} \le & ~2^{\max(B_{max},P_{max}) \cdot \cardof{\mathcal{S}}^2} \cdot
    2^{(B_{max} + P_{max}^2/2) \cdot \cardof{\mathcal{S}}^2 \cdot (\cardof{\mathcal{S}} + 1)} \\
    \le & ~2^{(B_{max}+P_{max}^2/2) \cdot \cardof{\mathcal{S}} \cdot (\cardof{\mathcal{S}} + 2)}
    \end{align*}
    because
    $\max(B_{max},P_{max}) \le B_{max} + P_{max} \le B_{max} +
    P_{max}^2/2$, and the upper bound on $\cardof{\universeOf{A}^P}$
    follows.
  \end{itemize}

  \noindent (\ref{it2:lemma:complexity:alternative-form}) Immediate,
  from the definition of $\universeOf{A}^S$ (Definition \ref{def:s-view}).
\end{proofE}

We denote by $\maxtermsizeof{\grammar}$ the maximum size of a term
that occurs as the right-hand size of a rule in $\grammar$. The result
of this section is that the cardinality of a recognizer built from a
regular grammar $\grammar$ is bounded by an exponential in the number
of nonterminals, rules and $\maxtermsizeof{\grammar}$:

\begin{theorem}\label{thm:main}
  The recognizer $\algof{A}$ built from the regular grammar $\grammar
  = (\mathcal{P} \uplus \mathcal{S},\rules,\axioms)$ has
  $2^{\bigO(\cardof{\mathcal{P}}
    \cdot \cardof{\mathcal{S}}^3 \cdot \cardof{\mathcal{R}}^3 \cdot \maxtermsizeof{\grammar}^2)}$
  elements.
\end{theorem}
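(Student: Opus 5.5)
The plan is to reduce to Lemma~\ref{lemma:complexity:alternative-form} by first putting $\grammar$ in normal form and then making it alternative, while tracking how the parameters $\cardof{\mathcal{P}}$, $\cardof{\mathcal{S}}$, $B_{max}$ and $P_{max}$ grow.

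\emph{Normalization.} Apply Proposition~\ref{prop:sp-normal-form} to obtain a normalized grammar $\grammar'=(\mathcal{P}\uplus\mathcal{S}',\rules',\axioms)$.
\begin{itemize}
\item The construction only adds fresh copies $s_i$ of $\mathcal{S}$-nonterminals, at most one per rule of $\arules$.
\item Hence $\cardof{\mathcal{S}'}\le\cardof{\mathcal{S}}+\cardof{\rules}\le 2\cardof{\mathcal{S}}\cdot\cardof{\rules}$, and $\cardof{\mathcal{P}}$ is unchanged.
\end{itemize}
Turning $\grammar'$ into an alternative grammar adds one nonterminal $s_a$ per label. Only labels occurring in some rule $p\rightarrow a$ matter, so this adds at most $\cardof{\rules}$ nonterminals. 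We therefore keep $\cardof{\mathcal{S}''}=\bigO(\cardof{\mathcal{S}}\cdot\cardof{\rules})$. The rules $p\rightarrow s_a$ are not of type (\ref{it1:def:sp-regular-grammar}) or (\ref{it2:def:sp-regular-grammar}), so they do not affect the bounds and periods. Neither does $p\rightarrow s_a$ viewed as a degenerate type-(\ref{it2:def:sp-regular-grammar}) rule with exponent $1$, which gives bound $2$.

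\emph{Bounds and periods.} Every exponent $\ell$ in a rule of type (\ref{it1:def:sp-regular-grammar}) or (\ref{it2:def:sp-regular-grammar}) is written in the right-hand side as $\pexp{s}{\ell}$. Since $\pexp{s}{\ell}$ is notation for an $\ell$-fold parallel composition, it has size at least $\ell$. Hence $B_{max}\le\maxtermsizeof{\grammar}+1$ and $P_{max}\le\maxtermsizeof{\grammar}$. If exponents were instead encoded in binary, this step would break; I take the unary reading since the term is over $\signature_{\algof{SP}}$. Normalization copies rules verbatim, so these parameters are preserved.

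\emph{Plugging in.} Substitute into the worst case (third case) of Lemma~\ref{lemma:complexity:alternative-form}(\ref{it1:lemma:complexity:alternative-form}):
\[
\log_2\cardof{\universeOf{A}^P}\le (B_{max}+P_{max}^2/2)\cdot\cardof{\mathcal{S}''}^2(\cardof{\mathcal{S}''}+2)\cdot\cardof{\mathcal{P}}=\bigO\bigl(\maxtermsizeof{\grammar}^2\cdot\cardof{\mathcal{S}}^3\cardof{\rules}^3\cdot\cardof{\mathcal{P}}\bigr).
\]
The other two cases are dominated by this bound. For S-profiles, Lemma~\ref{lemma:complexity:alternative-form}(\ref{it2:lemma:complexity:alternative-form}) gives $\log_2\cardof{\universeOf{A}^S}=\bigO(\cardof{\mathcal{S}''}(\cardof{\mathcal{S}''}+\cardof{\mathcal{P}}))$, which is also dominated. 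Summing the two parts, $\cardof{\universeOf{A}}=\cardof{\universeOf{A}^S}+\cardof{\universeOf{A}^P}\le 2\cdot\max$, which is still within the same $2^{\bigO(\cdot)}$ bound. Finally, $\algof{A}$ is by construction the recognizer built from the normalized alternative form of $\grammar$, and Theorem~\ref{thm:reg-rec} guarantees that this form preserves the language.

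\emph{Main obstacle.} The delicate step is the bookkeeping of $\cardof{\mathcal{S}''}$. The exponent $\cardof{\rules}^3$ in the statement arises exactly from cubing a factor $\cardof{\mathcal{S}}\cdot\cardof{\rules}$, so I must verify that normalization really adds only $\bigO(\cardof{\rules})$ fresh $\mathcal{S}$-nonterminals rather than $\bigO(\cardof{\rules}^2)$. I would check this against the proof of Proposition~\ref{prop:sp-normal-form}: each rule of $\arules$ contributes exactly one fresh nonterminal. If instead a cruder count is needed, the crude bound $\cardof{\mathcal{S}''}\le\cardof{\mathcal{S}}\cdot\cardof{\rules}$ when $\cardof{\rules}\ge 2$ still suffices for the stated asymptotics.
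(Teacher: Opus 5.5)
Your proposal is correct and follows the paper's proof. Like the paper, you normalize via Proposition~\ref{prop:sp-normal-form}, pass to the alternative grammar, plug into the third case of Lemma~\ref{lemma:complexity:alternative-form}, and use $\cardof{\mathcal{S}''}=\bigO(\cardof{\mathcal{S}}\cdot\cardof{\rules})$ together with $B_{max},P_{max}=\bigO(\maxtermsizeof{\grammar})$; your additive count $\cardof{\mathcal{S}''}\le\cardof{\mathcal{S}}+\bigO(\cardof{\rules})$ is slightly sharper than the paper's, and your remark that exponents are written in unary makes explicit an assumption the paper uses silently.
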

\begin{proof}
  First, the regular grammar $\grammar$ is transformed to an
  equivalent regular grammar $\grammar' \isdef (\mathcal{P} \uplus
  \mathcal{S}',\rules',\axioms)$ in normal form, such that $\cardof{\mathcal{S}'} =
  \bigO(\cardof{\mathcal{S}} \cdot \cardof{\rules})$, by
  Proposition~\ref{prop:sp-normal-form}. Second, the regular grammar
  in normal form $\grammar'$ is transformed to an alternative grammar
  $\grammar'' \isdef (\mathcal{P} \uplus
  \mathcal{S}'',\rules'',\axioms)$ in normal form as explained in
  section~\ref{subsec:regular-languages-are-recognizable}.  The set of
  non-terminals $\mathcal{S''}$ is increased by the elimination of
  the rules $\erules$. This increase is bounded by the cardinality of
  $\erules$, i.e., $\cardof{\mathcal{S}''} \le
  \cardof{\mathcal{S'}} + \cardof{\rules}$. Also, the transformation
  guarantees that the values of maximal periods and/or bases are the
  same in $\grammar'$ and $\grammar''$. The recognizer $\algof{A}$ is
  built from $\grammar''$. By
  Lemma~\ref{lemma:complexity:alternative-form} applied to
  $\grammar''$ we obtain:
  \begin{align*}
    \cardof{\universeOf{A}^P} & \le u^P \isdef
    2^{(B_{max} + P_{max}^2/2) \cdot \cardof{\mathcal{S}''}^2 \cdot
      (\cardof{\mathcal{S}''} + 2) \cdot \cardof{\mathcal{P}}} \\
    \cardof{\universeOf{A}^S} & \le u^S \isdef
    2^{\cardof{\mathcal{S}''} \cdot (\cardof{\mathcal{S}''} + \cardof{\mathcal{P}})}
  \end{align*}
  Then, $\cardof{ \universeOf{A}} = \cardof{\universeOf{A}^P} +
  \cardof{\universeOf{A}^S} \le u^P + u^S \le 2 \cdot u^P$, since $u^S
  \le u^P$ always holds. Using moreover that $B_{max} \le
  \maxtermsizeof{\grammar}$, $P_{max} \le \maxtermsizeof{\grammar}$
  and $\cardof{\mathcal{S}''}  = \bigO(\cardof{\mathcal{S}} \cdot \cardof{\rules})$ we obtain
  $\cardof{\universeOf{A}} = 2^{\bigO(\cardof{\mathcal{P}} \cdot
    \cardof{\mathcal{S}}^3 \cdot \cardof{\mathcal{R}}^3 \cdot
    \maxtermsizeof{\grammar}^2)}$.
\end{proof}
We simplify the above bound to a more rough bound in the size of the
grammar:

\begin{corollary}\label{cor:main}
  The recognizer $\algof{A}$ built from the regular grammar $\grammar$
  has $2^{\bigO(\sizeof{\grammar}^9)}$ elements.
\end{corollary}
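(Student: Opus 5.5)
This follows from Theorem~\ref{thm:main} once each of its parameters is bounded by $\sizeof{\grammar}$. The size of $\grammar$ includes the number of nonterminals, the number of rules, and the size of each rule. Hence $\cardof{\mathcal{P}} \le \sizeof{\grammar}$, $\cardof{\mathcal{S}} \le \sizeof{\grammar}$, $\cardof{\rules} \le \sizeof{\grammar}$ and $\maxtermsizeof{\grammar} \le \sizeof{\grammar}$. The exponents $\ell_i$ in rules of types (\ref{it1:def:sp-regular-grammar}) and (\ref{it2:def:sp-regular-grammar}) deserve a remark. I take $\pexp{s}{\ell}$ to abbreviate an $\ell$-fold parallel composition, so that $\ell$ contributes linearly to the size of the right-hand side, and this is already how the proof of Theorem~\ref{thm:main} bounds $B_{max}$ and $P_{max}$ by $\maxtermsizeof{\grammar}$. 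I will state this convention explicitly. If exponents were written in binary, $B_{max}$ and $P_{max}$ could be exponential in $\sizeof{\grammar}$ and the claimed bound would fail. I therefore commit to the unary reading, which is consistent with the earlier proof.

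Substituting these bounds into the exponent of Theorem~\ref{thm:main} gives
\[
\cardof{\mathcal{P}} \cdot \cardof{\mathcal{S}}^3 \cdot \cardof{\rules}^3 \cdot \maxtermsizeof{\grammar}^2 \le \sizeof{\grammar}^{1+3+3+2} = \sizeof{\grammar}^9 .
\]
Since the $\bigO$ in the theorem hides only a constant factor in the exponent, we obtain $\cardof{\universeOf{A}} = 2^{\bigO(\sizeof{\grammar}^9)}$.

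The only step needing care is the size convention for exponents, discussed above. The remaining bounds are immediate from the definition of $\sizeof{\grammar}$.
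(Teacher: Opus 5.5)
Your proof is correct and is essentially the paper's: you bound $\cardof{\mathcal{P}}$, $\cardof{\mathcal{S}}$, $\cardof{\rules}$ and $\maxtermsizeof{\grammar}$ by $\sizeof{\grammar}$ and substitute into the exponent of Theorem~\ref{thm:main}, giving degree $1+3+3+2=9$ (the paper gets $\cardof{\mathcal{P}}+\cardof{\mathcal{S}}\le\cardof{\rules}$ by assuming every nonterminal is the left-hand side of some rule). Stating the unary reading of $\pexp{s}{\ell}$ explicitly is a good addition, since the proof of Theorem~\ref{thm:main} tacitly relies on it for $B_{max},P_{max}\le\maxtermsizeof{\grammar}$; strictly speaking, though, a binary encoding would only break this argument, and you have not shown that the bound itself would fail.
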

\begin{proof}
  Because $\sizeof{\grammar}$ is the length of the representation of
  $\grammar$ as a string, we have $\sizeof{\grammar} = 2n +
  \sum_{i=1}^n \sizeof{t_i}$, where the rules of $\grammar$ are $x_i
  \rightarrow t_i$, for $1 \leq i \leq n$. Moreover, by assuming that
  each nonterminal occurs as left-hand side in at least one rule, we
  obtain $\cardof{\mathcal{P}}+\cardof{\mathcal{S}} \leq n$. We compute:
  \begin{align*}
    \cardof{\mathcal{P}} \cdot \cardof{\mathcal{S}}^3 \cdot \cardof{\mathcal{R}}^3 \cdot \maxtermsizeof{\grammar}^2 \leq \sizeof{\grammar}^9
  \end{align*}
  \vspace*{-\baselineskip}
\end{proof}
As a byproduct, we obtain the following result: 
\begin{corollary}\label{cor:accepting}
  The accepting set of $\langof{}{\grammar}$ in $\algof{A}$ is
  computable in time $2^{\bigO(\cardof{\mathcal{P}} \cdot
    \cardof{\mathcal{S}}^3 \cdot \cardof{\mathcal{R}}^3 \cdot
    \maxtermsizeof{\grammar}^2)}$.
\end{corollary}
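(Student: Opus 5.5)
The plan is to compute the accepting set $F$ by saturation: enumerate the whole domain $\universeOf{A}$ as the least set of profiles containing the constants $a^\algof{A}$ and closed under $\pop^\algof{A}$ and $\sop^\algof{A}$, and then test each element against the syntactic criterion of Lemma~\ref{lemma:accepting}. Since $h$ is a surjective homomorphism onto $\universeOf{A}=\set{h(\graph)\mid \graph\in\universeOf{SP}}$ (Lemma~\ref{lemma:homomorphism}) and every SP-graph is built from bridges by the two compositions, this least closed set is exactly $\universeOf{A}$. Moreover Lemma~\ref{lemma:accepting} characterizes $F$ by conditions decidable from the element alone. I will state the converse of Lemma~\ref{lemma:accepting} as well, by reading its proof backwards using Lemma~\ref{lemma:leadsto} and Definitions~\ref{def:p-view} and~\ref{def:s-view}. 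With that, $F$ equals the set of elements satisfying these conditions.

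First I normalize $\grammar$ and make it alternative (Proposition~\ref{prop:sp-normal-form} and the construction of the alternative grammar), both in polynomial time. Next I compute the constants $a^\algof{A}$ directly from $\frules$, $\crules$ and $\grules$. I then run a worklist algorithm. The worklist holds the discovered profiles, and for every pair $(\elem_1,\elem_2)$ of discovered profiles I compute $\elem_1\pop^\algof{A}\elem_2$ and $\elem_1\sop^\algof{A}\elem_2$, adding any new result. The algorithm stops when no new element appears.

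Termination and cost come from Theorem~\ref{thm:main}. The number of discovered elements is $N=2^{\bigO(\cardof{\mathcal{P}}\cdot\cardof{\mathcal{S}}^3\cdot\cardof{\rules}^3\cdot\maxtermsizeof{\grammar}^2)}$, so there are $N^2$ pairs. It remains to bound the cost of one operation and of one acceptance test.
\begin{itemize}
\item An S-profile is a relation of polynomial size.
\item By Lemma~\ref{lemma:match}, a P-profile is a tuple of $\cardof{\mathcal{P}}$ normal forms, each with at most $\maxtermsizeof{\grammar}^{\cardof{\mathcal{S}}}$ monomials of polynomial size.
\item Computing $\parmap$, $\seqmap$, relational composition, and $\norm(\parmap(\elem_1)\cdot\parmap(\elem_2))$ costs at most quadratic work in these sizes. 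Reducing a monomial by \boundAx{} and \periodAx{} is a componentwise mod or threshold computation.
\item Checking $p\leadsto m$ means scanning $\brules$.
\end{itemize}
All of this is bounded by $2^{\bigO(\cardof{\mathcal{S}}\log\maxtermsizeof{\grammar})}$ times a polynomial, which is absorbed by $N$. Duplicate detection on normal forms is done by sorting. The total time is therefore $N^{\bigO(1)}$, which has the claimed form.

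The main obstacle is justifying that membership in $F$ can be decided by the criterion of Lemma~\ref{lemma:accepting}, since that lemma is stated only in one direction. For the converse I would argue as follows. If $\tuple{t_p}=h(\graph)$ and there is an axiom $p$ with $m\in t_p$ and $p\leadsto m$, then $m=\norm_p(m')$ for some view $m'$ of $\graph$. By Lemma~\ref{lemma:leadsto}, $p\leadsto^* m'$. Composing this with the complete derivations of the components of $m'$ yields $\graph\in\langof{}{\grammar}$. The S-profile case is immediate from Definition~\ref{def:s-view}. A secondary subtlety is that $\seqmap$ as written for P-profiles uses $p\leadsto m$ for monomials in $t_p$. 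I would follow the definition exactly as given, since Lemma~\ref{lemma:homomorphism} already certifies it.
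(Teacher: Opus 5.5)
Your proposal is correct and follows the paper's route: enumerate the elements of $\universeOf{A}$, whose number is bounded by Theorem~\ref{thm:main}, and test each one against the criterion of Lemma~\ref{lemma:accepting}, with Lemma~\ref{lemma:match} bounding the cost per element. You also spell out two steps the paper's proof takes for granted, namely how $\universeOf{A}$ is actually generated (your saturation uses quadratically many operations, each of cost $\tau$ as in Lemma~\ref{lemma:omega}) and the converse direction of Lemma~\ref{lemma:accepting}, and neither changes the stated bound.
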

\begin{proof}
  Let $F \subseteq \universeOf{A}$ such that $\langof{}{\grammar} =
  h^{-1}(F)$. The elements of $F \cap \universeOf{A}^P$ can be
  enumerated in time $\cardof{\universeOf{A}} \cdot \cardof{\rules}
  \cdot \maxtermsizeof{\grammar}^{\bigO(\cardof{\mathcal{S}}})$, by
  Lemmas \ref{lemma:accepting} and \ref{lemma:match}. The elements of
  $F \cap \universeOf{A}^S$ can be enumerated in time
  $\cardof{\universeOf{A}} \cdot \cardof{\mathcal{S}} \cdot
  (\cardof{\mathcal{P}} + \cardof{\mathcal{S}} + 1)$, because
  $\cardof{\mathcal{S}} \cdot (\cardof{\mathcal{P}} +
  \cardof{\mathcal{S}} + 1)$ is the maximum size of an S-profile. By
  Theorem \ref{thm:main}, we obtain an upper bound of the order of
  $2^{\bigO(\maxtermsizeof{\grammar}^2 \cdot \cardof{\mathcal{P}}
    \cdot \cardof{\mathcal{S}}^3 \cdot \cardof{\mathcal{R}}^3)}$, in
  both cases.
\end{proof}

\section{Decision Problems}

In this section, we consider standard decision problems for
recognisable sets of SP-graphs, represented as regular grammars
(Definition \ref{def:sp-regular-grammar}). We prove that the problems
of (i) emptiness of the intersection of $n$ sets given as regular
grammars, and (ii) inclusions between a context-free and a regular
set, are both \exptime-complete.

We denote by $\maxarityof{\grammar}$ the maximum number of variables
occurring on the right-hand side of a rule from the grammar $\grammar$
and recall that $\maxtermsizeof{\grammar}$ denotes the maximum size of
a right-hand side of $\grammar$.  A first ingredient is the well-known
Filtering Theorem~\cite[Theorem 3.88]{courcelle_engelfriet_2012}:

\begin{theorem}\label{thm:filtering}
  For each grammar $\grammar=(\nonterm,\rules,\axioms)$ and set
  $\langu$ of SP-graphs recognized by a finite algebra $\algof{A}$,
  one can build a grammar $\grammar'$ such that $\langof{}{\grammar'}
  = \langof{}{\grammar} \cap \langu$. Moreover, $\grammar'$ can be
  built in time $\bigO(\cardof{\rules} \cdot
  \cardof{\universeOf{A}}^{\maxarityof{\grammar}} \cdot \tau \cdot
  \maxtermsizeof{\grammar})$, where $\tau$ is an upper bound on the
  time needed to compute $f^\algof{A}(a_1,\ldots,a_n)$, for each $f
  \in \signature_\algof{SP}$ and $a_1,\ldots,a_n \in \universeOf{A}$.
\end{theorem}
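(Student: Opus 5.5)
We prove the Filtering Theorem by the standard product construction of the grammar with the recognizer, keeping track of the cost. Let $h$ be the homomorphism from $\algof{SP}$ into $\algof{A}$ and $F$ the accepting set, so that $\langu = h^{-1}(F)$. First we \emph{flatten} $\grammar$: each right-hand side $t[x_1,\ldots,x_k]$ is a term over $\signature_\algof{SP}$ with $k \leq \maxarityof{\grammar}$ variable occurrences. The nonterminals of $\grammar'$ are pairs $(x,a) \in \nonterm \times \universeOf{A}$. The intended invariant is that $(x,a)$ derives exactly the ground terms $u$ with $x \step{\grammar}^* u$ and $u^\algof{A} = a$, where $u^\algof{A} \isdef h(u^\algof{SP})$ is the evaluation of $u$ in $\algof{A}$.

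For each rule $x \rightarrow t[x_1,\ldots,x_k]$ of $\grammar$ and each tuple $(a_1,\ldots,a_k) \in \universeOf{A}^k$, we evaluate the term $t$ in $\algof{A}$ with $x_i$ mapped to $a_i$; call the result $a$. We then add the rule $(x,a) \rightarrow t[(x_1,a_1),\ldots,(x_k,a_k)]$. The axioms of $\grammar'$ are the pairs $(x,a)$ with $x \in \axioms$ and $a \in F$. Evaluating $t$ bottom-up requires at most $\maxtermsizeof{\grammar}$ applications of operations of $\algof{A}$, each costing $\tau$. Summing over rules and tuples gives the bound $\bigO(\cardof{\rules} \cdot \cardof{\universeOf{A}}^{\maxarityof{\grammar}} \cdot \tau \cdot \maxtermsizeof{\grammar})$. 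If a variable occurs several times in $t$, we treat the occurrences as distinct positions, or equivalently range over assignments to distinct variables; in both cases there are at most $\cardof{\universeOf{A}}^{\maxarityof{\grammar}}$ tuples.

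Correctness rests on the invariant. We prove it by induction on the length of complete derivations, in both directions. For $\Rightarrow$: a complete derivation from $(x,a)$ begins with some rule built from $x \rightarrow t$ and $(a_i)_i$. By the induction hypothesis, the subderivations from $(x_i,a_i)$ yield ground terms $u_i$ with $x_i \step{\grammar}^* u_i$ and $u_i^\algof{A} = a_i$. Since $h$ is a homomorphism, $t[u_1,\ldots,u_k]^\algof{A}$ equals $t^\algof{A}(a_1,\ldots,a_k) = a$. For $\Leftarrow$: we decompose a complete derivation in $\grammar$ by its first rule, set $a_i \isdef u_i^\algof{A}$, and use the corresponding rule of $\grammar'$. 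Both directions need the standard fact that derivations in context-free grammars can be reordered so that subderivations of distinct nonterminal occurrences are independent.

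Combining the invariant with the choice of axioms gives $\langof{}{\grammar'} = \set{u^\algof{SP} \mid x \in \axioms,~ x \step{\grammar}^* u,~ h(u^\algof{SP}) \in F} = \langof{}{\grammar} \cap h^{-1}(F)$. The only delicate point is the homomorphism step: it identifies the evaluation of the composite term $t[u_1,\ldots,u_k]$ in $\algof{A}$ with $t^\algof{A}$ applied to the values $a_i$. This holds because $h$ commutes with every function symbol, which is established by a routine induction on $t$. Everything else is bookkeeping.
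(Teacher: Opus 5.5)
Your proposal is correct and is the same construction the paper sketches, citing Courcelle--Engelfriet, Theorem 3.88. It annotates nonterminals with elements of $\universeOf{A}$, adds $x^a \rightarrow t[x_1^{a_1},\ldots,x_k^{a_k}]$ whenever $a = t^\algof{A}(a_1,\ldots,a_k)$, and takes the axioms $x^a$ with $x \in \axioms$; your induction supplies the correctness argument the paper omits, and using $a \in F$ instead of $a \in h(\langu)$ is harmless because annotations outside the image of $h$ derive nothing. One caution: the annotation must be per occurrence, not per nonterminal name. For a rule like $x \rightarrow y \pop y$, forcing both copies of $y$ to carry the same $a_i$ would lose graphs whose two components have different images under $h$, so your ``or equivalently'' is only right if it means renaming occurrences apart.
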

Intuitively, the grammar $\grammar'$ is obtained from $\grammar$ by
annotating each nonterminal $x \in \nonterm$ with an element $a \in
\universeOf{A}$ and introducing a rule $x^a \rightarrow t[x_1^{a_1},
  \ldots, x_n^{a_n}]$ for each rule $x \rightarrow t[x_1,\ldots,x_n]
\in \rules$ and tuple $a,a_1,\ldots,a_n \in \universeOf{A}$ such that
$a = t^\algof{A}(a_1,\ldots,a_n)$. The axioms of $\grammar'$ are
$\set{x^a \mid x \in \axioms, a \in h(\langu)}$, where $h$ is the
homomorphism recognizing $\langu$ (Definition \ref{def:rec}). The
upper bound on the time needed to compute $\grammar'$ follows
immediately from its construction.

To apply the Filtering Theorem to recognizable sets defined by regular
grammars, we use the previously computed upper bound on the
cardinality of their recognizers (Theorem \ref{thm:main}). An estimate
on the time $\tau$ needed to compute the operations from
$\signature_\algof{SP}$ in the recognizer of a regular grammar is
given below:

\begin{lemmaE}\label{lemma:omega}
  Let $\grammar=(\mathcal{P}\uplus\mathcal{S},\rules,\axioms)$ be a
  regular grammar and $\algof{A}$ be a recognizer for
  $\langof{}{\grammar}$. Then, $f^\algof{A}(a_1,\ldots,a_n)$ can be
  computed in time
  $\tau=(\cardof{\mathcal{P}}\cdot\cardof{\rules})^{\bigO(1)}
  \cdot \maxtermsizeof{\grammar}^{\bigO(\cardof{\mathcal{S}})}$, for
  each $f \in \signature_\algof{SP}$ and $a_1,\ldots,a_n \in
  \universeOf{A}$.
\end{lemmaE}
\begin{proofE}
  If $a = \tuple{t_p}_{p\in\mathcal{P}} \in \universeOf{A}^P$ is a
  P-profile, the size of each monomial from $t_p$ is at most
  $\cardof{\mathcal{S}} \cdot \maxtermsizeof{\grammar}$ and $t_p$ has
  at most $\maxtermsizeof{\grammar}^{\cardof{\mathcal{S}}}$
  monomials. Else, if $a \in \universeOf{A}^S$, there are at most
  $\cardof{\mathcal{S}} \cdot (\cardof{\mathcal{P}} +
  \cardof{\mathcal{S}} + 1)$ pairs in $a$. We distinguish the
  following cases, according to the type of $f \in
  \signature_\algof{SP}$: \begin{itemize}
  \item $a^\algof{A}$, for some $a \in \alphabet$: this case requires
    at most $\cardof{\rules}$ steps.
  \item $a_1 \pop^\algof{A} a_2$: in this case, computing each
    $\parmap(a_i)$ requires at most $\cardof{\mathcal{P}} \cdot
    \cardof{\mathcal{S}}^2 \cdot (\cardof{\mathcal{P}} +
    \cardof{\mathcal{S}} + 1)$ steps and, for each $p \in
    \mathcal{P}$, computing $\norm_p(\tuple{\parmap(a_1)}_p \cdot
    \tuple{\parmap(a_2)}_p)$ takes:
    \begin{align*}
    \max\{\cardof{\mathcal{P}} \cdot \cardof{\mathcal{S}}^2 \cdot
      (\cardof{\mathcal{P}} + \cardof{\mathcal{S}} + 1), \maxtermsizeof{\grammar}^{\cardof{\mathcal{S}}}\}^2 =
      \cardof{\mathcal{P}}^4 \cdot \maxtermsizeof{\grammar}^{\bigO(\cardof{\mathcal{S}})}
    \end{align*}
    steps, by Lemma \ref{lemma:match}. Hence, the entire computation
    of $a_1 \pop^\algof{A} a_2$ is of the order of
    $\cardof{\mathcal{P}}^5 \cdot
    \maxtermsizeof{\grammar}^{\bigO(\cardof{\mathcal{S}})}$.
  \item $a_1 \sop^\algof{A} a_2$: in this case, computing
    each $\seqmap(a_i)$ takes $\cardof{\mathcal{P}} \cdot
    \cardof{\mathcal{R}}^2 \cdot
    \maxtermsizeof{\grammar}^{\bigO(\cardof{\mathcal{S}})}$ steps, by
    Lemma \ref{lemma:match}. Hence, the entire computation of $a_1
    \sop^\algof{A} a_2$ is of the order of $\cardof{\mathcal{P}}^3
    \cdot \cardof{\rules}^6 \cdot
    \maxtermsizeof{\grammar}^{\bigO(\cardof{\mathcal{S}})}$.
  \end{itemize}
  We obtain the uniform upper bound of
  $(\cardof{\mathcal{P}}\cdot\cardof{\rules})^{\bigO(1)}
  \cdot \maxtermsizeof{\grammar}^{\bigO(\cardof{\mathcal{S}})}$, for
  each $f \in\signature_\algof{SP}$.
\end{proofE}

We establish the complexity of two decision problems concerning
recognizable sets, described by regular grammars. This tightens an
existing \twoexptime{} upper bound, for the problem of inclusion
between a context-free and a regular language~\cite[Theorem
  4]{Lics25}. The \exptime{} upper bound follows from the fact that
the language of a regular grammars is recognized by an algebra of
cardinality exponentially bounded in the size of the grammar (Theorems
\ref{thm:reg-rec} and \ref{thm:main}). The \exptime-hard lower bound
for the problems considered in Theorem \ref{thm:reg-grammars-problems}
is established by polynomial many-one reductions for the problems of
intersection and inclusion of tree automata~\cite[Theorems 1.7.5 and
  1.7.7]{comon:hal-03367725}.

\begin{theorem}\label{thm:reg-grammars-problems}
  The following problems are \exptime-complete: \begin{enumerate}
  \item\label{it1:thm:reg-grammars-problems} Given regular grammars
    $\grammar_1, \ldots, \grammar_n$, does $\langof{}{\grammar_1} \cap
    \ldots \cap \langof{}{\grammar_n} = \emptyset$ hold?
  \item\label{it2:thm:reg-grammars-problems} Given grammars
    $\grammar_1$ and $\grammar_2$ such that $\grammar_2$ is regular,
    does $\langof{}{\grammar_1} \subseteq \langof{}{\grammar_2}$ hold?
  \end{enumerate}
\end{theorem}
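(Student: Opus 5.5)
For both problems, membership in \exptime{} comes from the Filtering Theorem (Theorem~\ref{thm:filtering}) combined with the exponential bound on the recognizer (Theorem~\ref{thm:main}, Corollary~\ref{cor:accepting}, Lemma~\ref{lemma:omega}). Hardness will come from polynomial reductions from the corresponding tree automata problems~\cite[Theorems 1.7.5 and 1.7.7]{comon:hal-03367725}, namely the emptiness of the intersection of $n$ tree automata and inclusion of tree automata.

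\emph{Upper bounds.} For problem~(\ref{it1:thm:reg-grammars-problems}), I first build, for each $\grammar_i$, the recognizer $\algof{A}_i$ with accepting set $F_i$. By Theorem~\ref{thm:main}, $\cardof{\universeOf{A}_i} \le 2^{\mathrm{poly}(\sizeof{\grammar_i})}$, and by Corollary~\ref{cor:accepting}, $F_i$ is computable in exponential time. By Proposition~\ref{prop:rec-bool-closure}, the product $\algof{A}_2\times\cdots\times\algof{A}_n$ with accepting set $F_2\times\cdots\times F_n$ recognizes $\bigcap_{i\ge2}\langof{}{\grammar_i}$. Its cardinality is $2^{\sum_i \mathrm{poly}(\sizeof{\grammar_i})}$, which is still singly exponential in the input size. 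The operations of this product are computed componentwise, and by Lemma~\ref{lemma:omega} each one takes exponential time $\tau$. I then apply Theorem~\ref{thm:filtering} to $\grammar_1$ and this product. Since regular rules have at most $\maxarityof{\grammar_1}\le\maxtermsizeof{\grammar_1}$ variables, the construction of $\grammar'$ takes time $\cardof{\universeOf{A}}^{\bigO(\sizeof{\grammar_1})}\cdot\tau$, which is still $2^{\mathrm{poly}}$. Emptiness of a context-free grammar is decided in polynomial time in $\sizeof{\grammar'}$ by the standard productive-nonterminal fixpoint.

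For problem~(\ref{it2:thm:reg-grammars-problems}), I note that $\langof{}{\grammar_1}\subseteq\langof{}{\grammar_2}$ holds iff $\langof{}{\grammar_1}\cap(\universeOf{SP}\setminus\langof{}{\grammar_2})=\emptyset$. The complement is recognized by $\algof{A}_2$ with accepting set $\universeOf{A}_2\setminus F_2$ (Proposition~\ref{prop:rec-bool-closure}). I filter $\grammar_1$ through it and check emptiness as before. One subtlety is that $\maxarityof{\grammar_1}$ is unbounded for an arbitrary grammar. However, it is at most $\sizeof{\grammar_1}$, so the factor $\cardof{\universeOf{A}}^{\maxarityof{\grammar_1}}$ remains $2^{\mathrm{poly}}$. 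Alternatively, I can first binarize $\grammar_1$ in polynomial time.

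\emph{Lower bounds.} I encode ranked trees as SP-graphs so that a top-down (nondeterministic) tree automaton translates in polynomial time into a regular grammar. Without loss of generality I assume binary symbols and constants. A leaf $c$ is encoded as the bridge $c$. A node $f(t_1,t_2)$ is encoded as the S-graph $f\sop(P(t_1)\pop P'(t_2))$, where each child is wrapped into a distinct marked P-form so that the two children can be told apart and no associativity or commutativity ambiguity arises. For instance, child $i$ can be encoded as $\ell_i \pop (\ell_i\sop \mathrm{enc}(t_i))$, using fresh labels $\ell_1,\ell_2$. Each automaton state $q$ then yields nonterminals, and each transition $f(q_1,q_2)\to q$ yields rules of the forms (\ref{it2:def:sp-regular-grammar})--(\ref{it6:def:sp-regular-grammar}), with bounded multiplicities and no rules of type (\ref{it1:def:sp-regular-grammar}). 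The encoding is injective on trees, so $\langof{}{\grammar_A}=\mathrm{enc}(L(A))$. Consequently, intersection emptiness and inclusion are preserved. For inclusion, $\grammar_1$ is taken to be the (regular) grammar of the first automaton.

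\emph{Main obstacle.} The delicate step is the hardness encoding. It must use only the restricted regular rule shapes, in particular the alternation of $\mathcal{P}$ and $\mathcal{S}$ nonterminals and the requirement that a P-rule has total multiplicity $\ge2$. It must also make the encoding uniquely parseable despite the commutativity of $\pop$. I would first try the fresh marker labels $\ell_1,\ell_2$ described above, and then verify by induction on trees that derivations of $\grammar_A$ correspond exactly to runs of $A$.
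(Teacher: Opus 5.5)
Your proposal is correct and follows the paper's proof. The \exptime{} upper bounds come from the singly-exponential recognizer (Theorem~\ref{thm:main}, Lemma~\ref{lemma:omega}) combined with Proposition~\ref{prop:rec-bool-closure} and the Filtering Theorem, and hardness comes from a polynomial encoding of tree automata into regular grammars. The differences are minor: for problem~(\ref{it1:thm:reg-grammars-problems}) you filter $\grammar_1$ through the product of the other recognizers and then test context-free emptiness, which makes explicit the reachability check in the product algebra that the paper leaves implicit when it only computes the accepting set; and for hardness you supply an explicit marker-label encoding of trees, where the paper simply cites the encoding from~\cite{Lics25}.
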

\begin{proof}
  We prove the \exptime{} upper bound for each of the considered
  problems:

  \vspace*{.5\baselineskip}\noindent(\ref{it1:thm:reg-grammars-problems})
  Let $\grammar_1,\ldots,\grammar_n$ be an instance of the
  intersection problem, where $\grammar_i = (\mathcal{P}_i \uplus
  \mathcal{S}_i, \rules_i, \axioms_i)$, for each $i \in
  \interv{1}{n}$. We denote $P \isdef \sum_{i=1}^n
  \cardof{\mathcal{P}_i}$, $S \isdef \sum_{i=1}^n
  \cardof{\mathcal{S}_i}$, $R \isdef \sum_{i=1}^n \cardof{\rules_i}$
  and $\Theta \isdef \sum_{i=1}^n \maxtermsizeof{\grammar_i}$. The
  size of the input is then $\sum_{i=1}^n \sizeof{\grammar_i} \leq N
  \isdef P + S + R \cdot \Theta$. Let $\algof{A}_i$ be a recognizer
  for $\langof{}{\grammar_i}$, for $i \in \interv{1}{n}$. By
  Proposition \ref{prop:rec-bool-closure}, the set
  $\langof{}{\grammar_1} \cap \ldots \cap \langof{}{\grammar_n}$ is
  recognized by an algebra $\algof{A}$ having the domain
  $\universeOf{A} = \universeOf{A}_1 \times \ldots \times
  \universeOf{A}_n$. By Theorem \ref{thm:main}, we obtain
  $\cardof{\universeOf{A}} = 2^{\bigO(P \cdot S^3 \cdot R^3 \cdot
    \Theta^2)}$. By Corollary \ref{cor:accepting}, the accepting set
  for the language $\langof{}{\grammar_1} \cap \ldots \cap
  \langof{}{\grammar_n}$ can be computed in time that is of the same
  order of magnitude as $\cardof{\universeOf{A}}$. Since $P \cdot S^3
  \cdot R^3 \cdot \Theta \leq N^7$, the emptiness of the intersection
  can be decided in time $2^{\bigO(N^7)}$.

  \vspace*{.5\baselineskip}\noindent(\ref{it2:thm:reg-grammars-problems})
  Let $\grammar_i = (\mathcal{P}_i \uplus \mathcal{S}_i, \rules_i,
  \axioms_i)$, for $i=1,2$, be an instance of the inclusion
  problem. Let $\algof{A}_2$ be a recognizer for
  $\langof{}{\grammar_2}$. The size of the input is then $N \isdef
  \sum_{i=1}^2 \sizeof{\grammar_i}$. By Proposition
  \ref{prop:rec-bool-closure}, the set $\universeOf{SP} \setminus
  \langof{}{\grammar_2}$ is also recognized by $\algof{A}_2$. Let
  $\grammar'_1$ be the result of applying the Filtering Theorem
  \ref{thm:filtering} to $\grammar_1$ and $\universeOf{SP} \setminus
  \langof{}{\grammar_2}$, i.e., $\langof{}{\grammar'_1} =
  \langof{}{\grammar_1} \cap (\universeOf{SP} \setminus
  \langof{}{\grammar_2})$. By Theorem \ref{thm:filtering} and Lemma
  \ref{lemma:omega}, $\grammar'_1$ can be built in time:
  \begin{align*}
    & ~\cardof{\rules_1} \cdot \cardof{\universeOf{A}_2}^{\maxarityof{\grammar_1}} \cdot
    \underbrace{(\cardof{\mathcal{P}_2} \cdot  \cardof{\rules_2})^{\bigO(1)} \cdot \maxtermsizeof{\grammar_2}^{\bigO(\mathcal{S}_2)}}_\tau ~\cdot~
    \maxtermsizeof{\grammar_1} \\
    = & ~\cardof{\rules_1} \cdot
    \underbrace{2^{\maxarityof{\grammar_1}\cdot\bigO(\cardof{\mathcal{P}_2} \cdot \cardof{\mathcal{S}_2}^3 \cdot \cardof{\rules_2}^3 \cdot \maxtermsizeof{\grammar_2}^2)}}_{\cardof{\universeOf{A}_2}^{\maxarityof{\grammar_1}}} ~\cdot \\
    & ~\underbrace{(\cardof{\mathcal{P}_2} \cdot \cardof{\rules_2})^{\bigO(1)} \cdot \maxtermsizeof{\grammar_2}^{\bigO(\mathcal{S}_2)}}_\tau ~\cdot~
    \maxtermsizeof{\grammar_1} \\
    = & ~2^{(\cardof{\rules_1} \cdot \maxarityof{\grammar_1} \cdot \cardof{\mathcal{P}_2} \cdot \cardof{\mathcal{S}_2} \cdot \cardof{\rules_2} \cdot \maxtermsizeof{\grammar_2})^{\bigO(1)}} = 2^{N^{\bigO(1)}}
  \end{align*}
  Since the emptiness of $\langof{}{\grammar'_1}$ can be decided in
  time $\bigO(\sizeof{\grammar'_1}^2)$, the inclusion problem has an
  \exptime{} upper bound. The \exptime-hard lower bound follows from
  the \exptime-hardness of the same problems for sets of terms
  represented by finite tree automata~\cite[Theorems 1.7.5 and
    1.7.7]{comon:hal-03367725}. The encoding of terms as SP-graphs and
  of bottom-up tree automata as regular grammars are detailed in the
  proof of \cite[Theorem 4]{Lics25}.
\end{proof}

\enlargethispage{4mm}

\section{Conclusions}

We prove an exponential bound on the cardinality of the minimal
algebra recognizing the language of regular grammar, for
series-parallel graphs. Based on this result, we establish the
\exptime-completeness of the intersection and inclusion problems.

\bibliographystyle{plainurl}
\bibliography{bib2doi}

\end{document}